\documentclass[11pt,letterpaper]{article}
\usepackage[dvipsnames]{xcolor}
\usepackage{amsfonts}
\usepackage{amsmath}
\usepackage{amssymb}
\usepackage{amsthm}
\usepackage{mathtools}
\usepackage{tikz}
\usepackage{pgfplots}
\pgfplotsset{compat=1.18}
\usetikzlibrary{decorations.pathreplacing}
\usetikzlibrary{patterns}
\usepackage{enumitem}
\usepackage{graphicx}
\usepackage[ruled,vlined,linesnumbered]{algorithm2e}
\providecommand{\DontPrintSemicolon}{\dontprintsemicolon}
\usepackage{xspace}
\usepackage{booktabs}
\usepackage{authblk}
\usepackage{nicefrac}
\usepackage{parskip}

\usepackage[square, semicolon]{natbib}
\usepackage{geometry}
\usepackage[colorlinks = true, linkcolor=NavyBlue,citecolor=ForestGreen]{hyperref}
\usepackage[nameinlink]{cleveref}

\usepackage[tt=false]{libertine}

\newtheorem{theorem}{Theorem}
\newtheorem{lemma}{Lemma}
\newtheorem{definition}{Definition}

\newtheorem{corollary}{Corollary}
\newtheorem{claim}{Claim}
\newtheorem{remark}{Remark}
\newtheorem{proposition}{Proposition}

\newcommand{\bmu}{\ensuremath{\boldsymbol{\mu}}\xspace}

\newcommand{\mmsT}{\ensuremath{\textsc{mms}}\xspace}

\newcommand{\gmms}{\ensuremath{\textsc{gmms}}\xspace}

\newcommand{\mms}{\ensuremath{\boldsymbol{\mu}}\xspace}

\newcommand{\efo}{\ensuremath{\textsc{ef}\oldstylenums{1}}\xspace}
\newcommand{\efx}{\ensuremath{\textsc{efx}}\xspace}

\newcommand{\nsw}{\texttt{NSW}}
\newcommand{\pool}{\ensuremath{\mathcal{P}}\xspace}
\newcommand{\prob}{problematic\xspace}

\DeclareMathOperator*{\argmin}{arg\,min}
\DeclareMathOperator*{\argmax}{arg\,max}

\title{Improved Approximation Guarantees for\\ Groupwise Maximin Share Fairness\thanks{An extended abstract version of this work was accepted to the 19th International Symposium on Algorithmic Game Theory (SAGT 2026).}}

\author[1,2]{Georgios Amanatidis}
\author[1]{Anna Korfiati}
\author[1,2,3]{Evangelos Markakis}
\author[1,2]{Christodoulos Santorinaios}

\affil[1]{Athens University of Economics and Business, Greece}
\affil[2]{Archimedes/Athena RC, Greece}
\affil[3]{Input Output Group (IOG), Greece}

\hypersetup{
pdftitle={Improved Approximation Guarantees for Groupwise Maximin Share Fairness},
pdfsubject={},
pdfauthor={Georgios Amanatidis, Anna Korfiati, Evangelos Markakis, Christodoulos Santorinaios},
pdfkeywords={Groupwise Maximin Share, Discrete Fair Division, Approximation Algorithms.},
}

\begin{document}
\maketitle
\begin{abstract}
    We study the problem of fairly allocating a set of indivisible goods to a set of $n$ agents with additive valuation functions. We focus on the very demanding notion of \emph{groupwise maximin share fairness} (\gmms), which requires that each agent $i$ receives value comparable to her maximin share, where the latter is computed \emph{with respect to any subset of agents that contains $i$}. We show that it is possible to compute $(\phi-1)$-approximate \gmms allocations in polynomial time, where $\phi \approx 1.618$ is the golden ratio. This improves on the previously known guarantee of $4/7$ of \cite{chaudhury2021little} and \cite{amanatidis2020multiple}. We propose a simple algorithm that maintains the same main properties as the Draft-and-Eliminate algorithm of \cite{amanatidis2020multiple} and we improve on the approximation guarantee analysis by carefully bounding the maximin share value within any subinstance induced by the restriction of our allocation to a subset of agents. Our analysis is asymptotically tight for algorithms that share these properties and has the additional benefit of giving improved guarantees for restricted settings; in particular, when the agents agree on the top-$n$ goods or when the number of agents is small. To illustrate the challenges of going beyond the guarantees of our algorithm, we also present a variant with an improved approximation of $(\sqrt{10}-1)/3 \approx 0.72$ for the case of three agents. To achieve this improvement we partially characterize the maximin share guarantees of short picking sequences for a small number of goods.
\end{abstract}

\section{Introduction}

 Fair division is concerned with the problem of partitioning a set of resources among agents with diverse preferences in a manner that is considered \emph{fair} by the agents. The study of this fundamental problem has a rich history, originating in the work of \cite{Steinhaus49} (with the contribution of Banach and Knaster)  and has since attracted sustained attention across economics, mathematics, and computer science. While the early literature focused primarily on the allocation of \emph{divisible} resources---where proportionality \citep{Steinhaus49} and envy-freeness \citep{GS58,Varian74} serve as the central fairness benchmarks---the fair allocation of \emph{indivisible} goods has emerged as a highly active research frontier in recent years (see, e.g., \cite{AmanatidisABFLMVW23}).

In the indivisible setting, strong fairness guarantees are in general unattainable: when a single valuable item must be assigned to one of several agents, no allocation can be fair to all. This has motivated the study of appropriate \emph{relaxations}, most notably \emph{envy-freeness up to one good} (\efo) \citep{LMMS04,Budish11}, \emph{envy-freeness up to any good} (\efx) \citep{CaragiannisKMPS19}, and \emph{maximin share fairness} (\mmsT) \citep{Budish11}. In particular, the maximin share of an agent is the best value she can guarantee to herself by partitioning the goods into $n$ bundles (where $n$ is the total number of agents) and receiving her least preferred one; an \mmsT allocation assigns every agent at least her maximin share. While such allocations need not always exist \citep{kurokawa2018fair,feige2021tight}, a flourishing line of work has established that constant-factor approximations are always achievable \citep{kurokawa2018fair,HeidariKSS26,HuangZ25}. A natural strengthening of \mmsT is \emph{groupwise maximin share fairness} (\gmms), introduced by \cite{BBMN18}, which demands that the \mmsT guarantee holds not only with respect to the whole set but within every subgroup of agents. This stronger requirement captures collective fairness on a richer level, yet its approximability is significantly less understood. In this paper, we study the existence and efficient computation of approximate \gmms allocations for agents with additive valuation functions.

Note that even checking whether an allocation is approximately \gmms seems nontrivial, and this is not just because computing the maximin shares is hard; even if we had an oracle for the latter, it is still unclear how to efficiently verify that \mmsT guarantees hold for every single subset of agents. Given this, it is surprising that there are any positive results on the existence and computation of allocations that approximately satisfy \gmms fairness. In particular, \cite{BBMN18} showed how to efficiently compute $1/2$-\gmms allocations, and this approximation factor was later improved to $4 / 7$ \citep{chaudhury2021little,amanatidis2020multiple}. Since then, it has remained open whether better approximations are possible.
\smallskip

\noindent\textbf{Our Contributions.}
We show that, for agents with additive valuation functions, $(\phi - 1)$-\gmms allocations always exist and can be efficiently computed. For this, we revisit the algorithm of \cite{amanatidis2020multiple}, along with a closely related algorithm for computing approximately \efx allocations by \cite{Farhadietal21}, in order to present a simpler version with the desirable  approximation guarantee with respect to \gmms.
For restrictions of the setting, we obtain significant improvements. In particular, 
\begin{itemize}[leftmargin=20pt,itemsep=4pt]   
\item 
We present an algorithm that computes $\min\left\{\frac{2}{3}, \frac{\sqrt{5n^2 - 4n} - n}{2n-2}\right\}$-\gmms
allocations (\Cref{th:main}). For $n\in\{3, 4\}$ this factor is $2/3$ and as $n$ grows it gracefully degrades to $\phi-1\approx 0.618$.
The allocations computed by our algorithm are also \efo and---with a small parameter adjustment---$(\phi-1)$-\efx (\Cref{thm:main}). 
\item 
When all $n$ agents agree on the set of the best $n$ items, we improve the
ratio to $2/3$ (\Cref{thm:topn}), matching the known ratio for \efx for this case. This 
setting  generalizes two particularly well-studied special cases:  i) identical agents, and ii) agents with a common ranking over all items. 
\item 
For the fundamental case of three agents, we show that it is possible to go beyond $2/3$ and efficiently compute $0.72$-\gmms allocations (\Cref{thm:three_agents}). 
\end{itemize}

\smallskip 

\noindent\textbf{Technical Considerations.} Our main algorithm itself (\Cref{alg:main}) is relatively simple, following the general framework
of the state-of-the-art algorithms for computing approximately \efx allocations. 
Our main contribution here is our novel analysis that tightly bounds the proportional share after performing a series of reductions on an extended set of ``problematic'' items.
Interestingly, our analysis can be transferred with small adaptations to either one 
of the two known efficient algorithms for computing $(\phi-1)$-\efx allocations.

In our proofs we often use the \textit{proportional share} (i.e., the average value) of an agent, restricted to any given subset of agents and the union of their bundles, as a \textit{proxy} quantity for her corresponding maximin share value. This alone, however, is too weak to properly bound the performance of the algorithm. As an additional step, we use known monotonicity properties of the maximin share to reduce our instances to ``nicer'' ones. The most standard such reduction is to ignore all agents who have a bundle consisting of a single good \citep{BL16}, but even this cannot show a factor strictly greater than $1/2$. What allowed \cite{chaudhury2021little} and \cite{amanatidis2020multiple} to obtain a guarantee of $4/7$ is a technique of \cite{AmanatidisBM18} that removes one agent at a time along with \emph{two} ``bad'' items as long as there are sufficiently many such items. Crucially, we broaden the definition of such items (that we call \emph{problematic} in the proof of \Cref{lem:other-apx}). As a result, we obtain reduced instances with better structure. Further, we use a different, much more refined, approach with respect to bounding the total value of a reduced instance, leading to a tight analysis of \Cref{alg:main} (\Cref{th:upper}).

To improve our results for $n=3$, we introduce a very particular modification of the envy graph that depends on the 
initial allocation of our algorithm and we perform a thorough case analysis on the possible 
configurations with two or three agents. As an intermediate  result, we characterize the \mmsT guarantees of 
allocating a small number of items via specific picking sequences.

\medskip

\noindent\textbf{Related Work.} \cite{GS58} were the first to introduce the problem of envy-freeness, which was more formally defined later on by \cite{Foley67} and \cite{Varian74}. The first relaxations, \mmsT and \efo, were both defined by \cite{Budish11} while the stronger relaxations, \gmms and \efx, were defined by \cite{BBMN18} and \cite{CaragiannisKMPS19} respectively. 
\cite{LMMS04} implicitly defined \efo and designed the \textit{envy cycle elimination} algorithm to compute it in polynomial time. On the contrary, \mmsT allocations were shown not to exist in \cite{kurokawa2018fair}, whereas the current upper bounds are $\frac{39}{40}$ for 3 agents and $1 - \frac{1}{n^4}$ for a general number of agents, both due to \cite{feige2021tight}. Both bounds apply to \gmms allocations, and are currently the best known. Regarding positive results, \cite{BBMN18} provided a $1/2$ approximation factor which was improved to the current state of the art of $4 / 7$ independently by \cite{amanatidis2020multiple} and \cite{chaudhury2021little}. \smallskip

\noindent\textit{Ordered and top-$n$ instances.} Ordered instances, where the agents agree on a universal ranking of the items, induce a well-motivated setting that often offers an avenue for improved positive results. In the context of \efx allocations, \cite{PR18} showed existence for additive valuations while for \mmsT the existence in general settings reduces to the existence in ordered ones \citep{BL16}. However, the former does not imply a factor better than  $4/7$ for \gmms while the analogous reduction for \gmms is not known to hold. A top-$n$ instance is a mild relaxation of an ordered one where the $n$ agents agree on the set of the top-$n$ items, but not necessarily their order. It was introduced by \cite{MS23} and has led to improved positive results in a variety of fair division problems \citep{ChristoforidisS24,MMP25,AR26}.

\section{Preliminaries}\label{sec:prelims}

A discrete fair division instance can be described by the triplet $(N, M, V)$ where $N =[n]= \{1, \dots, n\}$ is a set of $n$ agents and $M$ a set of $m$ indivisible items. The set $V = \{v_1, \dots, v_n\}$ corresponds to the \emph{additive} valuation functions of the agents, i.e., $v_i(S) = \sum_{g \in S} v_i(g)$. We assume that all the items are \emph{goods}, i.e., $v_i(g) \ge 0$ for every agent $i \in N$ and item $g \in M$.  
Hence, the valuation functions are also \emph{monotone}. For the sake of readability, in what follows, we write $v_{i}(g)$ instead of $v_i(\{g \})$, for $g\in M$.

A complete \emph{allocation} of $M$ to the $n$ agents is a partition, $\mathcal{A} = (A_1,\ldots,A_n)$, where $A_i\cap A_j = \emptyset$ and $\cup_i A_i = M$.
By $\Pi_n(M)$ we denote  the set of all partitions of a set $M$ into $n$ bundles. If $\mathcal{A}$ is a partial allocation we use $\pool(\mathcal{A}) := M \setminus \bigcup_{i \in N} A_i$ to denote the \emph{pool} of unallocated items.
As a first step in our main algorithm, we  use a partial allocation $\mathcal{F}$, such that $|F_i| \le 1$, for all $i\in N$, and the product 
\[\prod_{i\in N :\, |F_i| = 1} \!\!\!\! v_i(F_i)\] 
is maximized.
We refer to $\mathcal{F}$ as a \textit{NSW-maximizing matching allocation}, as it corresponds to a matching between agents and items that maximizes the Nash social welfare. Such allocations can be computed efficiently \citep{Farhadietal21}. 

In fair division, most fairness notions can be categorized into two large families: \emph{share-based} and \emph{envy-based} notions. In the former category, agents perceive fairness by comparing their bundles to the complete set of items while in the latter they compare bundles with each other. As our work focuses on a notion that combines both aspects in a very strong sense, we provide below the necessary definitions from both areas, as well as some tools originally developed for envy-based notions.

\subsection{Share-based fairness notions}

\begin{definition}
	\label{def:mmshare}
	Given $n$ agents, and a subset  $S\subseteq M$ of goods, the $n$-maximin share of agent $i$ with respect to $S$ is $\bmu_i(n, S) = \displaystyle\max_{\mathcal{A}\in\Pi_n(S)} \min_{A_j\in \mathcal{A}} v_i(A_j)$.
\end{definition}
From the definition, it  follows that $n\cdot \bmu_i(n, S)\le v_i(S)$.
When $S=M$, this quantity is just called the \emph{maximin share} of agent $i$. 
An allocation $\mathcal{A}$ is an \emph{$n$-maximin share defining partition} for agent $i$, if $\min_{A_j\in \mathcal{A}} v_i(A_j) = \bmu_i(n, M)$.
When it is clear from context what $n$ and $M$ are, we simply write $\bmu_i$ instead of $\bmu_i(n, M)$. 
\emph{Maximin share fairness}
asks for a partition that gives each agent her (approximate) maximin share.
\begin{definition}
	\label{def:MMS}
	An allocation 
	$\mathcal{A} = (A_1,\ldots,A_n) $ is called an $\alpha$-\mmsT ($\alpha$-maximin share) allocation if $v_i(A_i)\geq \alpha\cdot \bmu_i\,$, for every $i\in N$.
\end{definition} 

Variations of maximin share fairness have also been proposed. Here, we focus on a strengthening of the notion, by demanding an allocation to have an \mmsT-type guarantee for \emph{any subset} of agents. That is, for an agent $i$ and any $S$ with $i\in S$, we can think of $i$ as considering the merged bundle of all items received by members of $S$ and requesting to receive at least her $|S|$-maximin share of this bundle as if it were reallocated among $S$. This is referred to as \emph{groupwise maximin share fairness}, introduced by \cite{BBMN18}.

\begin{definition}
	\label{def:GMMS}
	An allocation $\mathcal{A} = (A_1,\ldots,A_n)$ is called an $\alpha$-\gmms ($\alpha$-groupwise maximin share) allocation if for every subset  of agents $N'\subseteq N$ and any agent $i\in N'$, 
	$v_i(A_i)\geq \alpha\cdot\bmu_i(|N'|, \cup_{j\in N'} A_j )$.
\end{definition} 

In \Cref{def:MMS,def:GMMS}, when $\alpha=1$, we refer to the corresponding allocations as \mmsT and \gmms allocations respectively.
It is clear that the notion of \gmms is significantly stronger than \mmsT.

Before proceeding to the envy-based notions we present some useful lemmas we will repeatedly invoke in the sequel. The first is a monotonicity lemma that allows removing agents and items from an instance, without reducing the maximin share of an agent. Similar statements are standard in the literature of \mmsT allocations and are used to reduce general instances to more structured ones.  

\begin{lemma}[\cite{AmanatidisBM18}]\label{lem:monotonicity}
	Suppose $\mathcal{A} \in \Pi_n(M)$ is an $n$-maximin share defining partition for agent $i$.
	Then, for any set of goods $S$, such that there exists some $j$ with $S \subseteq A_j$, it holds that $\mms_i(n-1, M \setminus S) \geq \mms_i(n, M)$.
\end{lemma}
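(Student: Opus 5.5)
The plan is a direct construction. We turn the $n$-maximin share defining partition $\mathcal{A}$ into an $(n-1)$-partition of $M\setminus S$ in which every bundle has value at least $\mms_i(n,M)$. Since $\mms_i(n-1, M\setminus S)$ is a maximum over all partitions of $M\setminus S$ into $n-1$ bundles, exhibiting one such partition finishes the proof.

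The construction goes as follows.

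\begin{itemize}
\item Let $A_j$ be a bundle of $\mathcal{A}$ with $S\subseteq A_j$.
\item The remaining bundles $A_k$, $k\neq j$, are $n-1$ disjoint sets. Since $S\subseteq A_j$, none of them meets $S$, so each lies in $M\setminus S$.
\item By the definition of a maximin share defining partition, each of them satisfies $v_i(A_k)\ge \min_{A_\ell\in\mathcal{A}} v_i(A_\ell)=\mms_i(n,M)$.
\item The only items of $M\setminus S$ not yet covered are those in $A_j\setminus S$. Add them all to one of these bundles, say $A_{k_0}$ for some fixed $k_0\neq j$.
\item Since the valuations are of goods and hence monotone (by additivity and nonnegativity), $v_i(A_{k_0}\cup (A_j\setminus S))\ge v_i(A_{k_0})\ge \mms_i(n,M)$.
\end{itemize}

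The resulting family $\{A_k\}_{k\neq j,k_0}\cup\{A_{k_0}\cup(A_j\setminus S)\}$ is a partition of $M\setminus S$ into $n-1$ bundles, and each bundle has value at least $\mms_i(n,M)$. Therefore $\mms_i(n-1,M\setminus S)\ge \mms_i(n,M)$.

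The argument needs at least one bundle other than $A_j$, so the only degenerate case is $n=1$. Then $n-1=0$, and $\mms_i(0,\cdot)$ is not meaningful, so I would assume $n\ge 2$ as is implicit. Otherwise no step is hard; the one point to check carefully is that the leftover items $A_j\setminus S$ are absorbed into some bundle, so that the family really partitions $M\setminus S$, while monotonicity ensures this never lowers a bundle's value.
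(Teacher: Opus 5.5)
Your proof is correct. It is essentially the standard argument behind this lemma; the paper does not reprove it but cites \cite{AmanatidisBM18}. You discard $S$ from the bundle $A_j$ of the maximin share defining partition, merge the leftover items $A_j\setminus S$ into one of the other $n-1$ bundles, and use monotonicity of $v_i$ to conclude that every resulting bundle is still worth at least $\mms_i(n,M)$; your remark that $n\ge 2$ is implicitly assumed is also appropriate.
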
 

An immediate corollary of this reduction lemma is that singleton bundles do not pose an obstacle to obtaining \gmms approximations. The latter also follows from a monotonicity lemma of \cite{BL16}, which is a special case of \Cref{lem:monotonicity}. 

We conclude this section with another simple monotonicity lemma. 
Intuitively, if we start with an $\alpha$-\mmsT allocation and only agent $i$ receives new items, the allocation remains $\alpha$-\mmsT from agent $i$'s perspective.

\begin{lemma}\label{lem:aMMSplusg}
Let $\mathcal{A} = (A_1, \dots, A_n)$ be an allocation of $M' = \cup_{j\in N} A_j \subsetneq M$ such that $v_i(A_i) \ge \alpha \cdot\mms_i(n, M')$ for some agent $i\in N$ and some $\alpha\in (0,1]$.
Then, for any $g \in \pool(\mathcal{A})$, we have $v_i(A_i \cup \{g\}) \ge \alpha \cdot \mms_i(n, M' \cup \{g\})$.
\end{lemma}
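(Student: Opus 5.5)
The plan is to show directly that adding a single good $g$ to $M'$ raises the maximin share of agent $i$ by at most $v_i(g)$. Concretely, the inequality we aim for is
\[
\mms_i(n, M' \cup \{g\}) \le \mms_i(n, M') + v_i(g).
\]
Granting this, the statement follows from a short chain:
\[
v_i(A_i \cup \{g\}) = v_i(A_i) + v_i(g) \ge \alpha\cdot \mms_i(n,M') + \alpha\cdot v_i(g) \ge \alpha \cdot \mms_i(n, M'\cup\{g\}).
\]
The first equality is additivity. The middle inequality combines the hypothesis with $\alpha \le 1$ and $v_i(g)\ge 0$, which give $v_i(g) \ge \alpha\, v_i(g)$.

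To prove the displayed inequality on maximin shares, I would start from an $n$-maximin share defining partition $\mathcal{B} = (B_1,\dots,B_n)$ of $M'\cup\{g\}$ for agent $i$. Let $B_k$ be the bundle containing $g$, and remove $g$ from it. This yields a partition $(B_1,\dots,B_k\setminus\{g\},\dots,B_n) \in \Pi_n(M')$.

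In this new partition, every bundle other than the $k$-th still has value at least $\mms_i(n, M'\cup\{g\})$ for agent $i$. The $k$-th bundle has value $v_i(B_k) - v_i(g) \ge \mms_i(n, M'\cup\{g\}) - v_i(g)$, again by additivity. So the minimum bundle value of this partition of $M'$ is at least $\mms_i(n, M'\cup\{g\}) - v_i(g)$, using that $v_i(g) \ge 0$ to handle the other bundles. Since $\mms_i(n,M')$ is a maximum over all partitions of $M'$, it is at least this quantity, which is exactly the claimed inequality.

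There is no real obstacle in this argument; it is routine. The only care needed is at one point: the step where the factor $\alpha$ is pushed onto $v_i(g)$. That step uses $\alpha \le 1$ and the nonnegativity of valuations, both of which are guaranteed by the hypotheses (goods only, $\alpha\in(0,1]$).
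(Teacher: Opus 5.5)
Your proof is correct and follows essentially the same route as the paper. Both remove $g$ from an $n$-maximin share defining partition of $M'\cup\{g\}$ to get $\mms_i(n,M'\cup\{g\}) \le \mms_i(n,M') + v_i(g)$, then use $\alpha \le 1$ and $v_i(g)\ge 0$. The only difference is that the paper wraps this in a proof by contradiction, whereas you argue directly.
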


\begin{proof}
Suppose, towards a contradiction, that this is not the case, i.e., 
$v_i(A_i \cup \{g\}) < \alpha \cdot \mms_i(n, M' \cup \{g\})$.

By the given lower bound on $v_i(A_i)$ and additivity, this implies:
\begin{equation}
\alpha \cdot \mms_i(n, M') + v_i(g) \le v_i(A_i) + v_i(g)  = v_i(A_i \cup \{g\}) < \alpha \cdot \mms_i(n, M' \cup \{g\}). \label{eq:hyp_1}
\end{equation}
Now, consider an $n$-maximin share defining partition $\mathcal{\bar{A}}$ of $M'\cup \{g\}$ for agent $i$.
Assume, without loss of generality, that $g \in \bar{A}_1$.
We construct a partition $\mathcal{B} = (B_1, \dots, B_n)$ of $M'$ by setting $B_1 = \bar{A}_1 \setminus \{g\}$ and $B_j = \bar{A}_j$ for all $j \neq 1$. For $\mathcal{B}$ we have:
\begin{itemize}[itemsep=5pt]
    \item $v_i(B_1) = v_i(\bar{A}_1) - v_i(g) \ge \mms_i(n, M' \cup \{g\}) - v_i(g)$
    \item $v_i(B_j) = v_i(\bar{A}_j) \ge \mms_i(n, M' \cup \{g\}), \quad \forall j \in N\setminus\{1\}$
\end{itemize}
Thus, $\min_{j\in N}v_i(B_j) \ge \mms_i(n, M' \cup \{g\}) - v_i(g)$. The existence of $\mathcal{B}$ implies $\mms_i(n, M') \ge \mms_i(n, M' \cup \{g\}) - v_i(g)$ and, therefore,
\[\alpha \cdot \mms_i(n, M' \cup \{g\}) \le \alpha \cdot \mms_i(n, M') + \alpha \, v_i(g) \le \alpha \cdot \mms_i(n, M') + v_i(g), \]
which contradicts \eqref{eq:hyp_1}.
\end{proof}

\subsection{Envy-based fairness notions}

We begin our exposition with the notion of envy.
\begin{definition}[$\alpha$-envy]
    For any $\alpha \ge 1$, we say that agent $i$ $\alpha$-envies agent $j$ if $v_i(A_j) > \alpha \cdot v_i(A_i)$.
\end{definition}
When $\alpha = 1$, we simply say that agent $i$ envies agent $j$. To address the issue of allocating an indivisible item to two (or more) agents, envy-based notions involve the hypothetical removal of item(s) from agents' bundles. This is reflected in the following definition.

\begin{definition}\label{def:EF1-EFX}
	An allocation $\mathcal{A} = (A_1,\ldots,A_n)$ is an
	\begin{enumerate}[leftmargin=*,itemsep=3pt,topsep=2pt,parsep=0pt,partopsep=0pt,label=\rm{\alph*})]
		\item $\alpha$-\efo allocation ($\alpha$-envy-free up to one good), if for every pair $i, j\in N$, with $A_j\neq\emptyset$, there exists a good $g\in A_j$, such that
		$v_i(A_i) \geq \alpha\cdot v_i(A_j\setminus \{g\})$.
		\item  $\alpha$-\efx allocation ($\alpha$-envy-free up to any good), if for every pair $i, j\in N$, with $A_j\neq\emptyset$ and every good $g\in A_j$, it holds that $v_i(A_i) \geq \alpha\cdot v_i(A_j\setminus \{g\})$.\label{def:EFX} 
	\end{enumerate}
\end{definition}
Of course, for $\alpha=1$ we obtain precisely the notions of envy-freeness up to one good (\efo) \citep{Budish11,LMMS04} and envy-freeness up to any good (\efx) \citep{CaragiannisKMPS19}. In words, $\alpha$-\efx dictates that agent $i$ does not $\alpha$-envy any proper subset of agent's $j$ bundle while $\alpha$-\efo is satisfied if the envy vanishes for some maximal proper  subset.

It is well known that \efo allocations can be computed in polynomial time, by \cite{LMMS04}, with the celebrated Envy-Cycle-Elimination algorithm (\Cref{alg:ece}), which we also use as a crucial algorithmic component. This is by now a standard algorithm in the literature of fair division but, for the sake of completeness, we provide here a brief presentation and refer to \Cref{app:ece} for the full details. A central concept of this algorithm is the \emph{envy graph}. Given a partial allocation $\mathcal{A} = (A_1,\ldots,A_n)$, the envy graph is a directed graph $G_{\mathcal{A}} = (N, E_{\mathcal{A}})$, where $(i,j) \in E_{\mathcal{A}}$ if and only if agent $i$ currently envies agent $j$, i.e., $v_i(A_i)<v_i(A_j)$.  The key observation of the algorithm is that a source node in $G_{\mathcal{A}}$ corresponds to a non-envied agent in the partial allocation $\mathcal{A}$, and this agent can receive one item without disrupting the \efo property. If no such agent exists, then there is an envy cycle. By repeatedly reallocating the bundles to the agents backwards along envy cycles, the valuation of each agent (weakly) improves, until a source node eventually arises. The fact that the agents' valuations do not decrease is crystallized in the following statement.

\begin{theorem}[Follows by \cite{LMMS04}]\label{thm:ece}
	Let $A_i$ be the bundle assigned to an agent $i\in N$ at the beginning of some iteration of the Envy-Cycle-Elimination algorithm. If $A_i'$ is  assigned to $i$ at the end of any future iteration, then $v_i(A_i') \geq v_i(A_i)$.
\end{theorem}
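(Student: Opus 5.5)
The approach is to track the value agent $i$ holds across the two kinds of steps the Envy-Cycle-Elimination algorithm performs (see \Cref{app:ece}). The first kind gives a single item from the pool to a source of the envy graph. The second kind rotates bundles backwards along an envy cycle. The statement compares the bundle at the start of one iteration with the bundle at the end of a later one. So it suffices to show that a single iteration never decreases $v_i$ of $i$'s current bundle, and then conclude by induction on the number of iterations: $v_i$ is weakly increasing and transitivity gives $v_i(A_i')\ge v_i(A_i)$.

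Fix one iteration and let $\mathcal{A}$ be the partial allocation at its start.

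\textbf{Cycle-elimination steps.} Suppose the iteration first eliminates an envy cycle $i_1\to i_2\to\dots\to i_k\to i_1$ in $G_{\mathcal{A}}$. Each $i_\ell$ on the cycle receives $A_{i_{\ell+1}}$ (indices mod $k$). Since $(i_\ell,i_{\ell+1})\in E_{\mathcal{A}}$, we have $v_{i_\ell}(A_{i_{\ell+1}})>v_{i_\ell}(A_{i_\ell})$, so every agent on the cycle strictly improves. Agents off the cycle keep their bundles. The rotation only permutes bundles, so the multiset of bundles and the pool are unchanged and the envy graph remains well defined.

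\textbf{Termination of the cycle phase.} I would note that repeating this within an iteration terminates. Each elimination leaves every agent's value unchanged or strictly increases it, and there are finitely many bundle assignments, so no assignment can repeat and the process cannot cycle forever. (Alternatively, the number of envy edges strictly decreases in the standard argument, but the weak monotonicity is all we need.) Hence after all cycle eliminations, $i$'s value is at least its value at the start of the iteration.

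\textbf{Item-allocation step.} The iteration then gives an item $g$ to a source $s$. If $i=s$, then $v_i$ increases by $v_i(g)\ge 0$ by additivity and nonnegativity of goods. Otherwise $i$'s bundle is untouched.

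Chaining these observations over one iteration and then over consecutive iterations gives the claim. There is no real obstacle here. The only point needing care is the exact form of the algorithm in \Cref{alg:ece}, namely whether several cycles are removed per iteration and whether the item is allocated before or after. Since each elementary operation individually preserves or increases $v_i$, the argument is insensitive to that ordering.
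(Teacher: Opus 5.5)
Your proof is correct and follows essentially the same approach as the paper, which gives no formal proof but defers to the original envy-cycle-elimination work. The paper sketches exactly your reasoning: rotating bundles backwards along an envy cycle gives each agent on the cycle a bundle she strictly prefers and leaves everyone else unchanged, and giving a good to a source can only increase that agent's value, so chaining over iterations yields $v_i(A_i')\ge v_i(A_i)$.
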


We will frequently need to reason about an agent's value compared to some good in the pool of unallocated items. For this matter, the next definition will be helpful.

\begin{definition}[$\alpha$-content]
\label{def:content}
    For a partial allocation $\mathcal{A}$ and any $\alpha \ge 1$, we say that agent $i$ is $\alpha$-content with respect to $\mathcal{A}$ if $v_i(A_i) \ge \alpha \cdot v_i(g)$ for any $g \in \pool(\mathcal{A})$. 
\end{definition} 

Throughout the execution of any of our algorithms, we assume that whenever the allocation $\mathcal{A}$ changes, $\pool(\mathcal{A})$ and $G_{\mathcal{A}}$ are silently updated.

\section{A Simple Algorithm and a $(\phi-1)$-\gmms Guarantee}
\label{sec:main}

In order to obtain an improved approximation guarantee for \gmms, we revisit the two known algorithms with the currently best guarantee for \efx, namely Algorithm 3 in \cite{amanatidis2020multiple} and Algorithm 2 in \cite{Farhadietal21}\footnote{We refer to Algorithm 2 in the arxiv version of their work.}. We present Match-Draft-and-Eliminate (\Cref{alg:main}), a simplified hybrid version of these two algorithms.

We briefly explain the formal description of the algorithm. Within the pseudocode, we refer to three (partial) allocations: $\mathcal{F}, \mathcal{S}$ and $\mathcal{A}$. The first one, $\mathcal{F}$, is a NSW-maximizing matching allocation produced in \hyperref[line:1]{line \ref{line:1}}. Recall that in such an $\mathcal{F}$ each agent receives at most one item and, because the product of values is maximized, $G_{\mathcal{F}}$ must be acyclic. For the sake of presentation, we assume that $\mathcal{F}$ assigns to each agent exactly one item. To see that this is without loss of generality, note that any agent $i$ with $F_i = \emptyset$ values positively less than $n$ goods and each of those goods is allocated to a different agent via $\mathcal{F}$; so, any subsequent allocation would be \gmms from the perspective of agent $i$, even if she receives no items. Thus, we may safely ignore such agents; if this includes all agents, then no good has a positive value for any agent and any allocation is \gmms.

Then, our algorithm identifies the set  $R$ of agents who are not reachable in $G_{\mathcal{F}}$ from any  $\theta$-envied agent, and allocates to them one more item, according to their preferences. It is crucial that agents who are envious themselves get their second item early on, hence we follow a topological ordering of a restriction of the envy graph. This step is completed after \hyperref[line:8]{line \ref{line:8}} and forms the second allocation, $\mathcal{S}=(S_1,\ldots,S_n)$. Thus, so far each agent has received at most two items. Then, the final allocation, $\mathcal{A}$, is computed via the Envy-Cycle-Elimination algorithm of \cite{LMMS04} run on $\mathcal{S}$ and $\mathcal{P}(\mathcal{S})$.
Any ties during the execution of \Cref{alg:main} are resolved lexicographically.

\noindent {\em Similarities and differences of \Cref{alg:main} with \cite{amanatidis2020multiple} and \cite{Farhadietal21}.} All algorithms share the same structure, in the sense that they are divided into three phases. First, they allocate one item to each agent. In the second phase they identify a set of agents $R$ that receives a second item. The final phase is to proceed with Envy-Cycle-Elimination. Conceptually, the algorithm we analyze is closer to Algorithm 2 of \cite{Farhadietal21}, albeit simpler in its description, as it avoids the nontrivial notion of envy-rank. Moreover, the second phase is simpler and more similar to the algorithm of \cite{amanatidis2020multiple}. Namely, for a suitable renaming of the agents, our allocation $\mathcal{S}$ can be constructed using two (partial) passes of the simple Round Robin algorithm.\medskip

\begin{algorithm}[H]
	\DontPrintSemicolon 
    {\normalsize
		Let $\mathcal{F} = (\{f_1\}, \dots, \{f_n\})$ be a NSW-maximizing matching allocation\label{line:1}\;
        Set $R = \{i\in N:\text{agent } i \text{ is not reachable in $G_{\mathcal{F}}$ from a } \theta\text{-envied agent} \}$ \;
        Let $\sigma$ be a topological ordering of the envy graph induced by agents in $R$\label{line:order}\;
        $\mathcal{S} = \mathcal{F}$\;
        \For{each agent $i\in R$ with respect to $\sigma$\label{line:4}}{
            $s_i = \argmax_{g \in \pool(\mathcal{S})} v_i(g)$\;
            $S_i = \{f_i\} \cup \{s_i\}$\label{line:6}
        }
        \For{each agent $i \not\in R$}
        {
            $S_i = \{f_i\}$\label{line:8}
        }
        $\mathcal{A} = \text{Envy-Cycle-Elimination}(N, \mathcal{S}, \pool(\mathcal{S}))$\label{line:ece}\;
		\Return $\mathcal{A}$ \;
        }
	\caption{Match-Draft-and-Eliminate$(N, M, \theta)$} \label{alg:main}
\end{algorithm}\medskip

Our main result is the following theorem.

\begin{theorem}\label{thm:main}
Match-Draft-and-Eliminate$(N, M, \phi)$ returns in polynomial time an allocation $\mathcal{A}$ which is $(\phi-1)$-\gmms, \efo, and $(\phi-1)$-\efx.
\end{theorem}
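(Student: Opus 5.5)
We prove the three properties separately. Polynomial time is immediate: $\mathcal{F}$ is computable efficiently \citep{Farhadietal21}, $R$ and $\sigma$ come from reachability and a topological sort in an acyclic graph, and Envy-Cycle-Elimination runs in polynomial time.

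\textbf{\efo.} We first show that $\mathcal{S}$ is \efo. Every bundle has at most two items, so a violation would need $i$ to envy $j\in R$ even after dropping one of $j$'s items. Since $s_j$ was chosen after $i$'s first item, and $f_i$ is a best match for $i$ under NSW-maximization, $i$ does not envy $\{f_j\}$ or $\{s_j\}$ alone; the argument needs the $\sigma$-ordering when $i\in R$. Envy-Cycle-Elimination preserves \efo, which gives the claim for $\mathcal{A}$.

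\textbf{Key invariant.} The central structural fact: after $\mathcal{S}$, every agent $i$ is $\phi$-content or has $|S_i|$ essentially ``large''. More precisely:
\begin{itemize}
\item agents $i\notin R$ are reachable from a $\phi$-envied agent, and NSW-optimality of $\mathcal{F}$ (no improving alternating path) gives $\phi\, v_i(f_i)\ge v_i(g)$ for every pool good $g$;
\item agents in $R$ received their favourite remaining good $s_i$, so $v_i(s_i)\ge v_i(g)$ for pool goods $g$ and $v_i(S_i)\ge 2v_i(g)$.
\end{itemize}
Combined with the fact that Envy-Cycle-Elimination only adds goods to unenvied agents, this yields $(\phi-1)$-\efx. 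A bundle receiving a good $g$ in the last phase was unenvied, so for $i$ we have $v_i(A_i)\ge v_i(A_j\setminus\{g\})$. For $j$ with $|A_j|\le 2$ obtained in the draft phase, the content bounds give $v_i(A_i)\ge (1/\phi) v_i(\text{any single item})$, and $1/\phi=\phi-1$. Agent values never decrease by \Cref{thm:ece}.

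\textbf{\gmms.} This is the main obstacle. Fix $N'$ and $i\in N'$, and let $T=\cup_{j\in N'}A_j$. We must show $\mms_i(|N'|,T)\le \phi\, v_i(A_i)$. Normalize $v_i(A_i)=1$ and suppose the contrary. We reduce the instance repeatedly.
\begin{enumerate}
\item By \Cref{lem:monotonicity}, we can delete any single good with $v_i(g)>1$ together with one agent, without lowering the share.
\item More generally, we call a good \emph{problematic} if it has value above $\phi-1$ (roughly, above what \efx-type bounds allow two copies to fit below $\phi$). As long as there are more than $|N'|$ such goods, some \mmsT-defining bundle contains two of them, and we remove that pair with one agent via \Cref{lem:monotonicity}.
\end{enumerate}
After these reductions we bound the proportional share: each remaining bundle $A_j$ has $v_i(A_j)$ at most $1$ plus a good that is non-problematic, or consists of at most two draft goods bounded via the content property. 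Summing, and dividing by the reduced number of agents, should give an average of at most $\phi$ under the contradiction hypothesis.

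The delicate part is this accounting. Two-item draft bundles of agents in $R$ may each be worth nearly $2$ to $i$, and removing pairs changes both numerator and denominator. This is where the quadratic $\phi^2=\phi+1$ should emerge, from optimizing the worst-case mix of bundles of value about $1+x$ against removed pairs of items of value above $x$. Setting the threshold $x=\phi-1$ should balance the two.
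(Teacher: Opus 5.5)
\textbf{The gap is in the \gmms part.} This is the heart of the theorem, and you leave it at ``should give'' and ``should emerge''.

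First, you never separate $i\in R$ from $i\notin R$, and this split is where $\phi$ actually comes from. For $i\notin R$, discard singleton bundles via \Cref{lem:monotonicity}. Then $i$ envies no other bundle before its last good arrives, and that good is worth at most $v_i(A_i)/\theta$ by $\theta$-contentness. This gives the factor $\theta/(\theta+1)$. For $i\in R$, the bottleneck is the bundles $\{f_k,s_k\}$ with $v_i(f_k)$ up to $\theta\, v_i(f_i)$, which gives $1/\theta$. Choosing $\theta=\phi$ balances the two.

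Second, for $i\in R$ you need to classify the other bundles by when they received their last good relative to $s_i$. Write $x=v_i(f_i)$ and $\delta x=v_i(s_i)$, so $v_i(A_i)\ge(1+\delta)x$. Then:
\begin{itemize}
\item a bundle augmented during Envy-Cycle-Elimination is worth at most $v_i(A_i)+\delta x$;
\item a bundle drafted after $s_i$ is worth at most $(\theta+\delta)x$, by part c) of \Cref{cor:sanpoulia};
\item a bundle drafted before $s_i$ is worth at most $2x$, via $\sigma$ and part d).
\end{itemize}

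Your value-threshold notion of ``problematic'' does not replace this classification. Removing two goods worth just over $(\phi-1)v_i(A_i)$ together with one agent removes less than $\phi\, v_i(A_i)$ of value. Under your contradiction hypothesis the average exceeds $\phi\,v_i(A_i)$, so such a removal \emph{increases} the proportional share. Everything therefore rests on bounding what survives, and you never say how the surviving goods are regrouped into $n''$ bundles with controlled values.

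The paper instead defines problematic goods structurally: both goods of every before-$s_i$ bundle, plus the first good of every after-$s_i$ bundle. With this definition the pigeonhole reductions end with $n_3\le n_1+1$. The survivors can then be rebuilt into a virtual allocation that obeys the same three bounds, by putting the remaining high first-goods back into after-$s_i$ bundles. A sign analysis of the coefficients of $n_1$ and $n_3$, using $\theta\ge 3/2$, then yields $v_i(M')/n'\le\theta\, v_i(A_i)$. Without this structure, or an equivalent completed accounting, the $(\phi-1)$-\gmms guarantee is not established.

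\textbf{Smaller issues in the other parts.} Your polynomial-time argument is fine. For \efo, all you need is $v_i(f_i)\ge v_i(s_j)$ (part d) of \Cref{cor:sanpoulia}). It is not true that $i$ never envies $\{f_j\}$, and $\sigma$ plays no role here.

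In the \efx part you wrote the contentness inequality backwards. For $i\notin R$ you need $v_i(f_i)\ge\phi\, v_i(g)$, not $\phi\, v_i(f_i)\ge v_i(g)$. Also, the bound $v_i(f_j)\le\phi\, v_i(f_i)$ for untouched draft bundles comes from the definition of $R$, not from contentness. With these fixes, $\mathcal{S}$ is $1/\phi$-\efx and every agent is $\phi$-content, and Envy-Cycle-Elimination then gives $\min\{1/\phi,\phi/(\phi+1)\}=\phi-1$.
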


We break down the proof of the \gmms guarantee into two lemmata. In \Cref{lem:ece-apx} we focus on the easier case of agents in $N\setminus R$, who start as $\theta$-content. The trickier case of agents in $R$ is treated in \Cref{lem:other-apx} and uses all the machinery about monotonicity of the maximin share and reductions. For both cases, the analysis is carried out parametrically, based on $\theta$. Then, in the proof of \Cref{thm:main} (as well as in the proof of \Cref{th:main}) we combine the two cases and  balance the resulting factors to obtain our guarantees.

\begin{remark}
    Our analysis on the approximation guarantees of \Cref{thm:main} is also valid for the two aforementioned algorithms in \cite{amanatidis2020multiple} and \cite{Farhadietal21}. We have chosen to present it for our version, as the presentation is simpler.
\end{remark}

We begin our analysis with a simple lemma regarding the properties of a NSW-maximizing matching allocation. 

\begin{lemma}
\label{lem:NSW}
Let $\mathcal{F}$ be a NSW-maximizing matching allocation. For $\theta \ge 1$ and $i,j\in N$ such that agent $i$ $\theta$-envies agent $j$, we have that any agent $k$ reachable from agent $j$ in the envy graph $G_{\mathcal{F}}$ is $\theta$-content. 
\end{lemma}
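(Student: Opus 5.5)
We argue by contradiction: we show that if some agent $k$ reachable from $j$ in $G_{\mathcal{F}}$ were not $\theta$-content, then $\mathcal{F}$ could be rearranged into a matching allocation with strictly larger Nash product. Throughout, $\mathcal{F}=(\{f_1\},\dots,\{f_n\})$ as assumed in the paper.

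Fix $i$ and $j$ with $v_i(f_j)>\theta\, v_i(f_i)$. Let $j=k_0\to k_1\to\dots\to k_t=k$ be a path in $G_{\mathcal{F}}$, so $v_{k_{s}}(f_{k_{s+1}})>v_{k_s}(f_{k_s})$ for each $s$. We may take this path to be simple, since reachability is witnessed by a simple path. Suppose, towards a contradiction, that there is a good $g\in\pool(\mathcal{F})$ with $v_k(g)>\theta\, v_k(f_k)$; note $\theta\ge1$.

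The modification is to shift goods backwards along the path and let $i$ take $f_j$:
\begin{itemize}[leftmargin=20pt,itemsep=2pt]
\item agent $i$ receives $f_j$;
\item each $k_s$ with $s<t$ receives $f_{k_{s+1}}$;
\item agent $k=k_t$ receives $g$;
\item every other agent keeps her good.
\end{itemize}
This is a valid matching allocation provided $i\notin\{k_0,\dots,k_t\}$: the goods $f_{k_0},\dots,f_{k_t}$ are redistributed, $f_i$ is released to the pool, and $g$ is taken from the pool.

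Comparing products in this case:
\begin{itemize}[leftmargin=20pt,itemsep=2pt]
\item agent $i$'s value rises by a factor greater than $\theta$;
\item each $k_s$ with $s<t$ strictly improves;
\item agent $k$'s value rises by a factor greater than $\theta\ge1$.
\end{itemize}
So the Nash product strictly increases, contradicting optimality of $\mathcal{F}$. This uses that all the relevant values are positive, which holds under the convention that agents with $F_i=\emptyset$ or zero value have been discarded. If some $v_{k_s}(f_{k_s})=0$, the number of agents with positive value increases instead, which is again a contradiction under the standard tie-breaking where the number of positively-valued agents is maximized first.

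The main obstacle is the case where $i$ lies on the path, say $i=k_r$. Here I would not route through $i$ at all. Instead, give $i$ the good $f_j$ and shift goods only along the cycle $j\to\dots\to k_r=i$. Every agent on this cycle strictly improves, which already contradicts the well-known acyclicity of $G_{\mathcal{F}}$ noted in the paper. So $i$ cannot lie on the path, and this case is impossible.

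Similarly, if $k=j$, the path is trivial. We then give $i$ the good $f_j$ and give $j$ the good $g$, which gains factors greater than $\theta$ for both agents, a contradiction. Hence every agent $k$ reachable from $j$ is $\theta$-content.
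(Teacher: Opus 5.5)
\textbf{There is a genuine gap: you negated the definition of $\theta$-content incorrectly.}

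By definition, $k$ is $\theta$-content if $v_k(f_k)\ge \theta\cdot v_k(g)$ for every $g\in\pool(\mathcal{F})$. The hypothesis you must refute is therefore that some pool good satisfies $\theta\cdot v_k(g)>v_k(f_k)$, i.e., $v_k(g)>v_k(f_k)/\theta$. You instead assumed $v_k(g)>\theta\, v_k(f_k)$.

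Your hypothesis is far stronger than the true negation, and it can be refuted without $i$ or the path at all. NSW-optimality alone gives $v_k(f_k)\ge v_k(g)$ for every pool good $g$, by letting $k$ swap $f_k$ for $g$ (the same one-swap argument as in part d) of \Cref{cor:sanpoulia}). So, as written, your argument only reproves this trivial fact. The step ``agent $k$'s value rises by a factor greater than $\theta$'' is not available under the correct hypothesis. In fact, $k$'s value \emph{drops} when she trades $f_k$ for $g$, by a factor lying in $(1/\theta,1]$. The same error appears in your $k=j$ case, where you claim both agents gain factors greater than $\theta$.

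\textbf{The missing idea is the compensation between the two endpoints of the rerouting.}
\begin{itemize}
\item Agent $i$ gains a factor greater than $\theta$, because $v_i(f_j)>\theta\, v_i(f_i)$.
\item Agent $k$ loses less than a factor $\theta$, because $v_k(g)>v_k(f_k)/\theta$.
\item The intermediate agents on the path weakly improve.
\end{itemize}
Hence
\[
v_i(f_j)\cdot v_k(g) \;>\; \theta\, v_i(f_i)\cdot \tfrac{1}{\theta}\, v_k(f_k) \;=\; v_i(f_i)\, v_k(f_k),
\]
and the Nash product still strictly increases. This is exactly the paper's argument.

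With this one correction, the rest of your proof goes through:
\begin{itemize}
\item your construction that shifts goods backwards along the path;
\item your exclusion of $i$ from the path via acyclicity of $G_{\mathcal{F}}$, a case the paper's proof leaves implicit, so handling it explicitly is a good addition;
\item the trivial-path case $k=j$, where the two changed factors again multiply to more than $\theta\cdot\frac{1}{\theta}=1$.
\end{itemize}
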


\begin{proof}
    Let $\mathcal{F} = (\{f_1\}, \dots, \{f_n\})$ be a maximal NSW matching allocation, and suppose agent $i$ $\theta$-envies agent $j$ and agent $k$ is reachable from agent $j$ via the envy path $P:j = a_0 \to a_1 \to \dots \to a_L = k$ (where, possibly, $L = 0$ and $j = k$). Assume for the sake of contradiction that agent $k$ is not $\theta$-content. That is, there exists some item $s_k \in \pool(\mathcal{F})$ such that $v_k(f_k) < \theta \cdot v_k(s_k)$.
    Consider the allocation $\mathcal{F}^\prime  = (\{f'_1\}, \dots, \{f'_n\})$ where $f^\prime_i = f_j, f^\prime_{a_\ell} = f_{a_{\ell+1}} \text{ for } \ell\in\{0,\dots, L-1\}$ (if $L>0$), $f^\prime_k = s_k$ and $f^\prime_\ell = f_\ell$ for $\ell\not\in \{i\} \cup P$. By the assumption that $i$ $\theta$-envies $j$, we get that 
    \begin{align*}
    \nsw(\mathcal{F}^\prime) &= \prod_\ell v_\ell(f^\prime_\ell) =   v_i(f^\prime_i) \cdot v_k(f^\prime_k)  \cdot \prod_{\ell \in P \setminus \{k\}}\!\!\! v_\ell(f^\prime_\ell) \cdot \prod_{\ell \not\in \{i\} \cup P}\!\!\! v_\ell(f^\prime_\ell)\\ 
    &\ge v_i(f_j) \cdot v_k(s_k)   \cdot  \prod_{\ell \ne i,k} v_\ell(f_\ell) >   [\theta \cdot v_i(f_i)] \cdot \Big[\frac{1}{\theta} \cdot v_k(f_k)\Big] \cdot \prod_{\ell \ne i,k} v_\ell(f_\ell)\\
    &= \prod_\ell v_\ell(f_\ell)  =  \nsw (\mathcal{F}) ,
    \end{align*}
    where the first inequality is due to the Pareto improvement along the envy path and the second follows directly from the definitions of $\theta$-envious and $\theta$-content. Hence, $\mathcal{F}^\prime$ contradicts the optimality of the matching that defines $\mathcal{F}$.
\end{proof}

Next, we state a corollary of the lemma that translates the properties of a NSW-maximizing matching in the context of our algorithm. The first and third part of this lemma form the analog of Lemma 4.1 of \cite{amanatidis2020multiple} (up to the use of strict inequalities).

\begin{corollary}\label{cor:sanpoulia}
If $\mathcal{S}$ is the allocation completed in \hyperref[line:8]{line \ref{line:8}} of \Cref{alg:main}, we have  
    \begin{enumerate}[label=\alph*),topsep=5pt,itemsep=4pt]
        \item $v_i(f_i) \ge \theta \cdot v_i(g)$, for any $i \not \in R$ and $g \in \pool(\mathcal{F})$,
        \item $v_i(f_i) \ge v_i(f_j)$, for any $i \not \in R$ and $j \in R$,
        \item $\theta\cdot v_i(f_i) \ge v_i(f_j)$, for any $i, j \in R$,
        \item $v_i(f_i) \ge v_i(s_j)$, for any $i\in N, j \in R$,
        \item if $v_i(f_i) < v_i(f_j)$, then $v_i(s_i) \ge v_i(s_j)$, for any $i, j \in R$.
    \end{enumerate}
\end{corollary}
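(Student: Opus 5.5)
All five parts follow directly from the definition of $R$, from \Cref{lem:NSW}, and from the optimality of the NSW-maximizing matching $\mathcal{F}$; I would verify each part separately. Throughout I use the standing assumption that every agent gets exactly one item in $\mathcal{F}$ and, after discarding irrelevant agents, that all $v_\ell(f_\ell)>0$. Note that an agent $k$ lies in $N\setminus R$ exactly when $k$ is reachable in $G_{\mathcal{F}}$ from some $\theta$-envied agent $j$, where paths of length $0$ are allowed, so this includes the case $k=j$.

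For part a), take $i\notin R$. Then $i$ is reachable from some $\theta$-envied agent $j$, and $j$ is $\theta$-envied by some agent. \Cref{lem:NSW} then says $i$ is $\theta$-content with respect to $\mathcal{F}$, which is exactly the claim. For part b), suppose $i\notin R$, $j\in R$ and $v_i(f_i)<v_i(f_j)$. Then $(i,j)$ is an edge of $G_{\mathcal{F}}$, and appending it to a path that reaches $i$ from a $\theta$-envied agent shows that $j$ is reachable from that agent as well. This gives $j\notin R$, a contradiction. For part c), if some $i\in R$ $\theta$-envies $j\in R$, then $j$ is itself $\theta$-envied and hence reachable from a $\theta$-envied agent via the trivial path. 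So $j\notin R$, again a contradiction.

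For part d), observe that every $s_j$ is picked from $\pool(\mathcal{S})\subseteq\pool(\mathcal{F})$. It therefore suffices to show $v_i(f_i)\ge v_i(g)$ for every $g\in\pool(\mathcal{F})$. If instead $v_i(g)>v_i(f_i)$ for some such $g$, replace $f_i$ by $g$ in $\mathcal{F}$. This keeps a valid matching, since $g$ was unallocated, and strictly increases the product $\prod_\ell v_\ell(f_\ell)$ because all factors are positive. That contradicts the maximality of $\mathcal{F}$.

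For part e), take $i,j\in R$ with $v_i(f_i)<v_i(f_j)$. Then $(i,j)$ is an edge of the envy graph induced on $R$, so $i$ precedes $j$ in the topological order $\sigma$ of \hyperref[line:order]{line \ref{line:order}}. Hence at the moment $i$ selects $s_i$ in the loop of \hyperref[line:4]{line \ref{line:4}}, the item later taken as $s_j$ is still in $\pool(\mathcal{S})$. Since $s_i$ is an $\argmax$ of $v_i$ over that pool, $v_i(s_i)\ge v_i(s_j)$.

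None of these steps is a real obstacle. The only points needing care are that $N\setminus R$ includes the $\theta$-envied agents themselves, and the positivity assumption that makes the single-swap argument in part d) strict.
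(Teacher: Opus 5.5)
Your proposal is correct and follows the paper's proof essentially part by part: a) via \Cref{lem:NSW}, b) and c) from the reachability definition of $R$, d) via a single-item swap that contradicts NSW optimality, and e) via the topological order $\sigma$ together with the $\arg\max$ choice of $s_i$. Your explicit remarks on the positivity assumption in d) and on $\theta$-envied agents themselves lying in $N\setminus R$ only make explicit what the paper's argument uses implicitly.
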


\begin{proof}
    Part a) follows from \Cref{lem:NSW} since $i\not \in R$ means that $i$ is reachable from some $\theta$-envied agent (including the case where $i$ is $\theta$-envied herself). For Part b), since $i$ is reachable from some $\theta$-envied agent, if  $v_i(f_i) < v_i(f_j)$, then $j$ would also be reachable from some $\theta$-envied agent, contradicting the fact that $j\in R$. Part c) directly follows by the definition of the set $R$. For part d) notice that if it was the case that $v_i(f_i) < v_i(s_j)$, then the 
    allocation $\mathcal{F}'$ one might get from $\mathcal{F}$ by replacing $f_i$ with $s_j$
    would have strictly higher NSW than $\mathcal{F}$, i.e.,  
    $\prod_{i\in N} v_i(F'_i) > \prod_{i\in N} v_i(F_i)$, contradicting the optimality of the latter. Finally, part e) follows from the definition of the ordering $\sigma$ and the way the ``second layer'' of items is allocated: if $v_i(f_i) < v_i(f_j)$, then agent $i$ precedes agent $j$ in $\sigma$ and, thus, $s_i$ is added to $S_i$ when both $s_i$ and $s_j$ were available in $\mathcal{P}(\mathcal{S})$.
\end{proof}

At this point we are ready to state and prove the technical lemmata about the maximin share guarantees of individual agents. We start with the easy case of agents in $N\setminus R$, who only receive a single item during the first phase of the algorithm. 

\begin{lemma}[Agents in $N\setminus R$]\label{lem:ece-apx}
Fix any $\theta \ge 1$ and let $\mathcal{A}$ be the output of Match-Draft-and-Eliminate$(N, M, \theta)$. Then, for any $i\in N\setminus R$ and for any $N' \subseteq N$ with $i\in N'$, it holds that \vspace{-2ex}
    \[
    v_i(A_i) \ge \frac{\theta}{\theta + 1 - {1}/{n^\prime}} \cdot \mms_i(n^\prime, M^\prime)\,,
    \]
where $n' = |N'|$ and $M' = \bigcup_{j\in N'} A_j$.
\end{lemma}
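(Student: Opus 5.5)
We track agent $i\in N\setminus R$ at the moment her bundle last grows, show she stays $\theta$-content relative to the pool from that point on, and then bound her share in $M'$ by an averaging argument.

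\textbf{Invariant.} By \Cref{cor:sanpoulia}(a), $S_i=\{f_i\}$ satisfies $v_i(S_i)\ge \theta\, v_i(g)$ for every $g\in\pool(\mathcal{F})$, hence for every $g\in\pool(\mathcal{S})$. During Envy-Cycle-Elimination, \Cref{thm:ece} says $i$'s value never decreases, and the pool only shrinks. So $i$ remains $\theta$-content for the whole run.

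\textbf{Accounting for $M'$.} Consider the final allocation $\mathcal{A}$. For each $j\in N'\setminus\{i\}$ we bound $v_i(A_j)$.
\begin{itemize}
\item If $A_j$ is a single good, we remove $j$ together with that good. By \Cref{lem:monotonicity}, this can only help: it does not decrease $\mms_i$ of the reduced instance.
\item Otherwise, let $g$ be the last good added to $A_j$, i.e., the last good added to the bundle that ends up with $j$. At that moment the holder of that bundle was not envied by anyone, in particular not by $i$. By \Cref{thm:ece}, $i$'s value at that moment is at most $v_i(A_i)$, so $v_i(A_j\setminus\{g\})\le v_i(A_i)$.
\item Since $g$ was in the pool at that moment, $\theta$-contentness gives $v_i(g)\le v_i(A_i)/\theta$.
\item Hence $v_i(A_j)\le (1+1/\theta)\,v_i(A_i)$.
\item Bundles $S_j$ of agents in $R$ that are never extended have size two. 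Both goods were in $\pool(\mathcal{F})$ or are $f_j$, and \Cref{cor:sanpoulia}(b),(d) together with (a) give $v_i(S_j)\le v_i(f_i)+v_i(f_i)/\theta$ in the same way.
\end{itemize}

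\textbf{Averaging.} Apply the proportional bound $\mms_i(n',M')\le v_i(M')/n'$. This gives
\[
n'\,\mms_i(n',M')\le v_i(A_i)+(n'-1)\Bigl(1+\tfrac1\theta\Bigr)v_i(A_i),
\]
so
\[
\frac{v_i(A_i)}{\mms_i(n',M')}\ \ge\ \frac{n'\theta}{\theta+(n'-1)(\theta+1)} .
\]
This is weaker than the claimed factor $\frac{n'\theta}{n'\theta+n'-1}$, since the denominator above is $n'\theta+n'-1$ only after rewriting $\theta+(n'-1)(\theta+1)=n'\theta+n'-1$. The two coincide, so averaging gives exactly the claimed bound $\frac{\theta}{\theta+1-1/n'}$.

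\textbf{Main obstacle.} Singleton bundles are handled by the removal step, so they need no separate value bound. The delicate point is the last-good argument: the good $g$ must have been in the pool while $i$ was already $\theta$-content, and $i$ must not have envied the bundle before $g$ was added. The ECE source condition gives the latter. The initial two-item bundles created in the draft phase are not produced by ECE, so they have to be handled directly by \Cref{cor:sanpoulia}, as in the last case of the accounting step.
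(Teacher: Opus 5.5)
Your proposal is correct and follows essentially the same route as the paper: you discard singleton bundles via \Cref{lem:monotonicity}, bound every remaining bundle by $(1+1/\theta)\,v_i(A_i)$ through its last-added good, and then average. The last-good bound uses the ECE source condition for bundles grown during Envy-Cycle-Elimination, and parts (a) and (b) of \Cref{cor:sanpoulia} for untouched draft bundles $\{f_k,s_k\}$. The only blemish is the sentence calling your factor ``weaker'' than the claimed one; as you then note, $\theta+(n'-1)(\theta+1)=n'\theta+n'-1$, so the two coincide and that remark should simply be deleted.
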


\begin{proof}
Let $N'$ be a set as in the statement. 
We first claim that it is without loss of generality to assume that $|A_j|\ge 2$ for all $j\in N'\setminus \{i\}$.
Indeed, if $|A_j|=1$ for some $j\in N'\setminus \{i\}$, then the restriction of $\mathcal{A}$ on $N'\setminus \{j\}$
is an allocation of $M'\setminus A_j$ to $N'\setminus \{j\}$ and, 
by \Cref{lem:monotonicity}, $\mms_i(n'-1, M'\setminus A_j) \geq \mms_i(n', M')$. Thus, any guarantee shown with respect to $N'\setminus \{j\}$ directly holds for $N'$ as well. As a result, in what follows we assume that $|A_j|\ge 2$ for all $j\in N'\setminus \{i\}$. 

Let $A_{\ell}$, $\ell \in N'$, be some bundle envied by agent $i$. If no items were added to $A_{\ell}$ during the Envy-Cycle-Elimination step then we can write $A_{\ell} = \{f_k, s_k\}$, where $k\in R$ is the agent to whom the initial version of this bundle originally belonged. By part b) of \Cref{cor:sanpoulia}, $v_i(f_i) \ge v_i(f_k)$ and, hence agent $i$ envies $A_{\ell}$ only after the addition of its second, and last, item, $s_k$.
The same property holds when items were added to $A_{\ell}$ during the Envy-Cycle-Elimination step in \hyperref[line:ece]{line \ref{line:ece}}; by the definition of \Cref{alg:ece}, agent $i$ envies $A_{\ell}$ only after the addition of its last item. In  any case, call $g$ the last item added to $A_{\ell}$. By the discussion above, $v_i(A_i) \ge v_i(A_{\ell} \setminus \{g\})$.

 By part a) of \Cref{cor:sanpoulia}, we also have that $v_i(g)\le v_i(f_i)/ \theta$, as $g$ was unallocated in $\mathcal{F}$, and thus $v_i(g)\le v_i(A_i)/ \theta$, by the monotonicity of \Cref{alg:ece} (\Cref{thm:ece}). Combining these, we get
    \begin{equation*}\label{eq:3}
        v_i(A_{\ell}) \le \frac{\theta+1}{\theta} \cdot v_i(A_i).
    \end{equation*}
Given that the same bound trivially holds for non-envied agents, we can now bound the  maximin share $\mms_i(n', M')$ via the total value of $M^\prime$ for agent $i$:
        \begin{align*}
        \mms_i(n', M') & \le \frac{1}{n^\prime}v_i(M^\prime) = \frac{1}{n^\prime} \bigg[ v_i(A_i) + \sum_{k \in N^\prime \setminus \{i\}}  v_i(A_k)\bigg] \\
        &\le \frac{1}{n^\prime} \left[1 + (n^\prime - 1) \, \frac{\theta+1}{\theta}\right] v_i(A_i)  = \frac{n^\prime\theta + n^\prime - 1}{n'\theta} \, v_i(A_i).
    \end{align*}
    Therefore,
    \[
    v_i(A_i) \ge \frac{\theta}{\theta + 1 - {1}/{n^\prime}} \cdot \mms_i(n', M') \,. \qedhere
    \]
\end{proof}

We next move on to the much more challenging case of agents in $R$. Before formally proving \Cref{lem:other-apx}, we sketch the argument's structure (where $N'$ and $n'$ are as describes in the statement of the lemma). As a proxy for the maximin share $\mms_i(n',M')$ we use agent $i$'s proportional share $\frac{1}{n'}v_i(M')$; the difficulty is that, on its own, this proxy is too weak to yield anything beyond $1/2$. We first classify every other agent $\ell\in N'$ according to how her bundle received its \emph{last} item, since this controls how large $i$ may value it: bundles completed during Envy-Cycle-Elimination ($N_1$) are worth at most $i$'s own bundle plus one ``small'' item; bundles whose second (and last) item was drafted \emph{after} $s_i$ ($N_2$) are worth at most $(\theta+\delta)x$; and bundles whose second (and last) item was drafted \emph{before} $s_i$ ($N_3$) may be worth up to $2x$. The $N_3$ bundles are the heavy ones: on their own they would cap the guarantee at $1/2$, so we try to thin them out. Using the monotonicity of the maximin share (\Cref{lem:monotonicity}), we repeatedly delete one agent together with a pair of \emph{problematic} items (those forming $N_3$ bundles, together with the first item of each $N_2$ bundle); each deletion can only increase $\mms_i$, and the process terminates at a reduced instance in which problematic items no longer outnumber the agent. In this reduced instance the heavy bundles are outnumbered by the light ones (see also \Cref{fig:reduced_alloc}), and the proportional-share bound suffices for obtaining the $1/\theta$ guarantee. 

\begin{lemma}[Agents in $R$]\label{lem:other-apx}
Fix any  $\theta \ge \frac{3}{2}$ and let $\mathcal{A}$ be the output of Match-Draft-and-Eliminate$(N, M, \theta)$. Then, for any $i\in R$ and for any $N' \subseteq N$ with $i\in N'$, it holds that \vspace{-1ex}
    \[
    v_i(A_i) \ge \frac{1}{\theta} \cdot \mms_i(n^\prime, M^\prime) \,,
    \]
where $n' = |N'|$ and $M' = \bigcup_{j\in N'} A_j$.
\end{lemma}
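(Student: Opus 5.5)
Throughout, fix $i\in R$ and $N'\ni i$, and write $x=v_i(f_i)$ and $v_i(s_i)=\delta x$. Part d) of \Cref{cor:sanpoulia} applied with $j=i$ gives $\delta\le 1$, and by \Cref{thm:ece} we have $v_i(A_i)\ge(1+\delta)x$. The goal is $\mms_i(n',M')\le\theta(1+\delta)x$. The plan has three steps: first bound the value to $i$ of every other bundle in $N'$; then shrink the instance using \Cref{lem:monotonicity}; finally apply the proportional-share bound to the reduced instance.

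\emph{Step 1: preprocessing and classification.} As in \Cref{lem:ece-apx}, I first discard every $j\in N'\setminus\{i\}$ with $|A_j|=1$, which is justified by \Cref{lem:monotonicity}. I then split the remaining agents according to the last item $g$ added to their bundle.
\begin{itemize}
\item $N_1$: $g$ was added during Envy-Cycle-Elimination. Then $g$ was in the pool after $s_i$ was drafted, so $v_i(g)\le\delta x$. By the \efo-type property of \Cref{alg:ece}, agent $i$ did not envy the bundle before $g$ arrived. Hence $v_i(A_\ell)\le v_i(A_i)+\delta x$.
\item $N_2$: the bundle is $\{f_j,s_j\}$ and $s_j$ was drafted after $s_i$. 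Then $v_i(s_j)\le\delta x$, and part c) of \Cref{cor:sanpoulia} gives $v_i(f_j)\le\theta x$. So the bundle is worth at most $(\theta+\delta)x$.
\item $N_3$: the bundle is $\{f_j,s_j\}$ and $s_j$ was drafted before $s_i$. Part e) of \Cref{cor:sanpoulia}, in contrapositive, gives $v_i(f_j)\le x$, and part d) gives $v_i(s_j)\le x$. So the bundle is worth at most $2x$.
\end{itemize}

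\emph{Step 2: reduction by removing problematic items.} Call an item \emph{problematic} if it lies in an $N_3$ bundle or is the first item $f_j$ of an $N_2$ bundle; I would also include $f_i$. Suppose the number of problematic items exceeds the current number of agents. Then by pigeonhole some bundle of an $n'$-maximin share defining partition of $M'$ for $i$ contains two problematic items. I delete that pair together with one agent, and \Cref{lem:monotonicity} guarantees that $\mms_i$ does not decrease. I would choose the deleted agent to be an $N_3$ agent whenever one exists, so that the heaviest bundles disappear first; otherwise I would delete an $N_2$ agent, or charge the deletion to the first item of one. I iterate until the problematic items no longer outnumber the agents.

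\emph{Step 3: accounting in the reduced instance.} In the reduced instance with $n''$ agents I bound $\mms_i\le v_i(\text{remaining items})/n''$. The total value is computed by charging each problematic item at most $x$, or at most $\theta x$ for first items of $N_2$ bundles, and each non-problematic item at its class bound. This is the step I expect to be the main obstacle, for two reasons.
\begin{itemize}
\item After deletions the bundles are broken: an $N_2$ bundle may keep only $s_j$, and an $N_3$ bundle may keep one item. The total must therefore be bounded item by item rather than bundle by bundle.
\item The removed pairs come from the maximin partition, not from specific bundles.
\end{itemize}
I would first try to show an average of at most $\theta(1+\delta)x$ per agent. The key facts are these. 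Since problematic items number at most $n''$, each worth at most $x$ (respectively $\theta x$), the heavy mass is outnumbered. Leftover non-problematic items, such as the $s_j$'s of $N_2$ bundles, are worth at most $\delta x$. Each $N_1$ bundle is worth at most $(1+2\delta)x\le\theta(1+\delta)x$, which holds because $\theta\ge 3/2$ and $\delta\le1$. A worst-case check over the mix of $N_2$ and $N_3$ agents, as a linear bound in $\delta$ with the condition $\theta\ge 3/2$ used at the extreme points $\delta\in\{0,1\}$, should then yield $\mms_i(n',M')\le\theta(1+\delta)x\le\theta\, v_i(A_i)$. This is exactly the claimed bound.
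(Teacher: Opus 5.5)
Your Steps 1 and 2 follow the paper's proof: the same three classes with the same bounds, the same problematic items, and the same pigeonhole reduction via \Cref{lem:monotonicity}. Counting $f_i$ as problematic is harmless. However, Step 3 is the real content of the lemma, and you leave it at ``should then yield''. Moreover, the target you set for it is false. You aim for $\mms_i(n',M')\le\theta(1+\delta)x$ and assert that every $N_1$ bundle is worth at most $(1+2\delta)x$. Your own Step 1 only gives $v_i(A_\ell)\le v_i(A_i)+\delta x$, and $v_i(A_i)$ can be arbitrarily larger than $(1+\delta)x$, because $A_i$ keeps growing during Envy-Cycle-Elimination. For instance, take two agents with identical valuations, two goods of value $1$ and $2K$ goods of value $\epsilon$. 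Both agents are in $R$, Envy-Cycle-Elimination splits the small goods evenly, and $\mms_i(2,M)=1+K\epsilon$, which exceeds $\theta(1+\delta)x=\theta(1+\epsilon)$ once $K$ is large. Hence all bounds must be normalized by $v_i(A_i)$, using $x\le v_i(A_i)/(1+\delta)$ only to convert the $N_2$ and $N_3$ estimates. This is why the paper works with the ratios $\frac{1+2\delta}{1+\delta}$, $\frac{\theta+\delta}{1+\delta}$ and $\frac{2}{1+\delta}$.

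What you still have to supply is the counting after the reduction stops. One shows that the surviving problematic items can be grouped into $n_2$ items of value at most $\theta x$ (one accompanying each $s_j$ of an $N_2$ bundle) and $q\le n_1+1$ pairs of items of value at most $x$, while exactly $n_1+n_2+q+1$ agents remain. This is the paper's virtual allocation. It also disposes of your worry about broken bundles, since non-problematic items are never deleted. Match each pair with a distinct $N_1$ bundle or with $A_i$. At worst you then need $v_i(A_i)+(2+\delta)x\le 2\theta\, v_i(A_i)$, which follows from $\frac{2+\delta}{1+\delta}\le 2\theta-1$. The left-hand side is at most $2$, and this is exactly where $\theta\ge 3/2$ is used. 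The remaining agents are handled as follows:
\begin{itemize}
\item each $N_2$ agent contributes $(\theta+\delta)x\le\theta\, v_i(A_i)$;
\item each unmatched $N_1$ bundle contributes $v_i(A_i)+\delta x\le\frac{1+2\delta}{1+\delta}v_i(A_i)\le\theta\, v_i(A_i)$.
\end{itemize}
Together these bound the proportional share of the reduced instance by $\theta\, v_i(A_i)$. The paper reaches the same conclusion through a sign analysis of the coefficients of $n_1$ and $n_3$ using $n_3\le n_1+1$.

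There is also a minor slip. The bound $v_i(f_j)\le x$ for $j\in N_3$ does not follow from the \emph{statement} of part e) of \Cref{cor:sanpoulia}, whose conclusion concerns second items. It follows from the topological order $\sigma$: $v_i(f_j)>x$ would place $i$ before $j$ in $\sigma$, so $s_i$ would be drafted before $s_j$.
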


\begin{proof}
We first consider the intermediate allocation $\mathcal{S}$, i.e., before the execution of Envy-Cycle-Elimination. For brevity, we set $v_i(f_i) = x$; then $v_i(s_i) = \delta \cdot v_i(f_i) = \delta x$, for some $\delta \in [0, 1]$, by part d) of \Cref{cor:sanpoulia}. 

Now, for agent $i$'s final bundle, $A_i$, we know that  $v_i(A_i) \ge v_i(\{f_i, s_i\})$, by \Cref{thm:ece}.

Thus
\begin{equation}\label{eq:A_to_x}
    v_i(A_i) \ge (1 + \delta) x.
\end{equation} 

Let $N'$ be a set as in the statement. Without loss of generality---as we showed in the beginning of the proof of \Cref{lem:ece-apx}---we may assume that $|A_j|\ge 2$ for all $j\in N'\setminus \{i\}$.
Also, before we go any further, notice that if $v_i(A_i) = 0$, then there are less than $n$ positively valued items for agent $i$ and are all allocated in $\mathcal{F}$. Thus, $\mms_i(n^\prime, M^\prime) = 0$ and the statement trivially holds for agent $i$. So, we assume that $v_i(A_i) > 0$.

Next, we will argue about the value agent $i$ has for any other bundle $A_{\ell}$, $\ell \in N'$, in the final allocation. We categorize the other agents of $N'$ into three groups $N_1, N_2$ and $N_3$ based on how their bundles received their last item and, thus, how agent $i$'s envy towards them may have formed. 
The set $N_1$ contains all agents of $N'\setminus\{i\}$ whose bundle received an item for the last time during the Envy-Cycle-Elimination step in \hyperref[line:ece]{line \ref{line:ece}}.
This means that the sets $N_2$ and $N_3$ contain the agents of $N'\setminus\{i\}$ whose bundle received a second item in \hyperref[line:4]{lines \ref{line:4}}\hyperref[line:6]{-\ref{line:6}} and was never augmented by Envy-Cycle-Elimination. In particular, if a bundle received its second and last item \emph{after} agent $i$ received $s_i$, then its final owner is in $N_2$, whereas if a bundle received its second and last item \emph{before} agent $i$ received $s_i$, then its final owner is in $N_3$. 

We argue about agents of different types separately. In general, although we want to refer to a bundle $A_{\ell}$ owned by $\ell \in N'$ in the final allocation, when $A_{\ell}$ received its last item (which determines whether $\ell$ is in $N_1, N_2$ or $N_3$) it might have belonged to another agent $k\in N$. To avoid cumbersome notation, as we examine the different cases for such an agent $\ell\in N'$ below, we will consistently assume that $A_{\ell}$ belonged to agent $k\in N$ when it received its last item and use $\mathcal{B} = (B_1,\ldots,B_n)$ to denote the allocation at the time. 

\begin{itemize}[leftmargin=*]
    \item $\ell\in N_1$. That is, $A_{\ell} = B_k$ and agent $k$ received the last item, say $g$, in \hyperref[line:ece]{line \ref{line:ece}}. 
    Then we have that $v_i(A_{\ell}) = v_i(B_{k})  \le v_i(B_i) + v_i(g) \le v_i(A_i) + \delta x$.
    To see this, at the time agent $i$ was not envious of agent $k$ before item $g$ was added (as $k$ must have been a source of the envy graph), i.e., $v_i(B_i) \ge v_i(B_{k}\setminus\{g\})$. Further, we know that $v_i(g) \le v_i(s_i) = \delta x$ (or agent $i$ would have picked $g$ instead of $s_i$ in \hyperref[line:4]{lines \ref{line:4}}\hyperref[line:6]{-\ref{line:6}}).  Using \eqref{eq:A_to_x} we get that 
    \begin{equation}\label{eq:n1bound}
    v_i(A_k) \le \frac{1 + 2\delta}{1 + \delta} v_i(A_i).
    \end{equation}
    
    \item $\ell \in N_2$. That is, $A_{\ell} = B_k = \{f_k, s_k\}$ and $s_k$ was allocated after $s_i$.
    Then we have that $\theta \cdot v_i(f_i) \ge v_i(f_k)$ by part c) of \Cref{cor:sanpoulia} and $v_i(s_k) \le v_i(s_i) = \delta x$ (or agent $i$ would have picked $s_k$ instead of $s_i$ in \hyperref[line:4]{lines \ref{line:4}}\hyperref[line:6]{-\ref{line:6}}).
    Thus, in total $v_i(A_{\ell}) = v_i(B_{k})  \le (\theta + \delta)x$. Again, using \eqref{eq:A_to_x} we get that 
    \begin{equation}\label{eq:n2bound}
    v_i(A_k) \le \frac{\theta + \delta}{1 + \delta} v_i(A_i).
    \end{equation}

    \item $\ell \in N_3$.  That is, $A_{\ell} = B_k = \{f_k, s_k\}$ and $s_k$ was allocated before $s_i$.
    Now, we have that agent $i$ prefers neither $f_k$ (or else $i$ would be before $k$ in the topological ordering $\sigma$ in \hyperref[line:order]{line \ref{line:order}} and $s_i$ would have been allocated before $s_k$) nor $s_k$ (by part d) of \Cref{cor:sanpoulia}) to $f_i$. Thus $v_i(A_{\ell}) = v_i(B_{k}) \le 2x$ and using \eqref{eq:A_to_x} we get that 
    \begin{equation}\label{eq:n3bound}
        v_i(A_k) \le \frac{2}{1 + \delta} v_i(A_i).
    \end{equation}
    \end{itemize}

Our next step is to further reduce the instance. 
Let $n_1, n_2, n_3$ be the cardinalities of $N_1, N_2, N_3$ respectively. 

We will say that an item is \emph{\prob} if it belongs to some agent in $N_3$ or if it is the first item added (via $\mathcal{F}$) to a bundle  that belongs to some agent in $N_2$. The number of \prob items is $p:= 2n_3 + n_2$, by definition. 

We claim that as long as $p$ is larger than the total number of agents, $n' = n_1 + n_2 + n_3+ 1$, we can remove a pair of \prob items and an agent without ever reducing the maximin share of agent $i$. 
Indeed, consider a $n'$-maximin share defining partition $\mathcal{C}$ for agent $i$ with respect to $M'$. 
By the pigeonhole principle, if $p>n'$, then there is a bundle of $\mathcal{C}$ containing two \prob goods, say $g_1$, $g_2$. We remove those two goods and reduce the number of agents by one; by \Cref{lem:monotonicity}, we have $\mms_i(n'-1, M'\setminus \{g_1, g_2\}) \ge \mms_i(n', M')$.

So, we may assume that we have already performed these reductions up to the point where there are no more \prob items than agents in the reduced instance. Next, we claim that the remaining items can be partitioned in a \emph{virtual} allocation that closely resembles the restriction of $\mathcal{A}$ on $N'$ after the removal of agents with singletons, and crucially shares the same properties with respect to the bounds derived above in \eqref{eq:n1bound}, \eqref{eq:n2bound} and \eqref{eq:n3bound}. Indeed, if $\ell\in N_1 \cup \{i\}$, then $A_{\ell}$ has remained intact; if $\ell\in N_2$ and $A_{\ell}$ lost its \prob item, this can be replaced by an arbitrary \prob item originally belonging to some agent of $N_2$ or $N_3$; and, finally, if there are more \prob items, these can be arbitrarily partitioned in sets of size $2$ and be given to (some of the) agents of $N_3$. 

To avoid introducing additional notation for essentially the same thing, from this point onward, we are going to use $N', M', N_1, N_2, N_3, n', n_1, n_2, n_3$ for the final reduced instance. The virtual allocation described above will be denoted by $\mathcal{A'} = (A'_1,\ldots,A'_n)$. As we mentioned above, and easily follows by their definition, the bundles of $\mathcal{A'}$ satisfy \eqref{eq:n1bound}, \eqref{eq:n2bound} and \eqref{eq:n3bound}, whereas $A'_i = A_i$.  Further, since the instance is reduced, we now have $2n_3 + n_2 \le n_1 + n_2 + n_3 + 1$ or, equivalently, $n_3 \le n_1 + 1$.

Next, we will bound the total value of agent $i$ for $M'$ in this reduced instance. 
\begin{align*}
    v_i(M^\prime) &= \sum_{k \in N_1} v_i(A'_k) + \sum_{k \in N_2} v_i(A'_k) + \sum_{k \in N_3} v_i(A'_k) + v_i(A_i)\\
        &\le n_1 \frac{1 + 2\delta}{1 + \delta} v_i(A_i) +
        n_2 \frac{\theta + \delta}{1 + \delta} v_i(A_i) +
        n_3 \frac{2}{1 + \delta} v_i(A_i)  + v_i(A_i) .
\end{align*}
By dividing with $v_i(A_i)$ and rearranging some terms, we get
\begin{align*}
        \frac{v_i(M^\prime)}{v_i(A_i)} &\le n_1 \frac{1 + 2\delta}{1 + \delta}+
        n_2 \frac{\theta + \delta}{1 + \delta} +
        n_3 \frac{2}{1 + \delta} + 1 \\  
        &= n_1\frac{1 + 2\delta}{1 + \delta}+
        (n^\prime - n_1 - n_3 - 1) \frac{\theta + \delta}{1 + \delta} +
        n_3 \frac{2}{1 + \delta} + 1 \\
        &= n^\prime \frac{\theta + \delta}{1 + \delta} +  n_1\frac{1+\delta-\theta}{1+\delta} + n_3\frac{2 - \delta - \theta}{1+\delta} - \frac{\theta-1}{1+\delta}
\end{align*}
Now we have cases based on the sign of the coefficients of $n_1$ and $n_3$. Since $\theta \ge 3/2$ and at most one of $1+\delta$ or $2-\delta$ can be greater than $3/2$, at most one term is positive. If they are both negative, then we immediately obtain:
\begin{align*}
\frac{v_i(M^\prime)}{v_i(A_i)} \le n^\prime \frac{\theta + \delta}{1 + \delta} \implies
v_i(A_i) \ge \frac{1+\delta}{\theta+\delta}  \frac{v_i(M^\prime)}{n^\prime} \ge \frac{1}{\theta} \cdot\mms_i(n', M'),
\end{align*}
where the last inequality follows from the fact that  $\frac{1+\delta}{\theta+\delta}$ as a function of $\delta\in [0,1]$ is minimized for $\delta = 0$.

Next, if the coefficient of $n_3$ is negative, we have that
\begin{align*}
        \frac{v_i(M^\prime)}{v_i(A_i)} &\le n^\prime \frac{\theta + \delta}{1 + \delta} +  n_1\frac{1+\delta-\theta}{1+\delta} \\
        &\le n^\prime \frac{\theta + \delta}{1 + \delta} +  n^\prime\frac{1+\delta-\theta}{1+\delta} = n^\prime \frac{1+2\delta}{1+\delta} \implies \\
        v_i(A_i) &\ge \frac{1+\delta}{1+2\delta}\cdot \frac{v_i(M^\prime)}{n^\prime}  \ge \frac{2}{3} \cdot\mms_i(n', M') \ge \frac{1}{\theta} \cdot \mms_i(n', M'),
\end{align*}
where the second to last inequality follows from the fact that  $\frac{1+\delta}{1+2\delta}$ as a function of $\delta\in [0,1]$ is minimized for $\delta = 1$.

Finally, if the coefficient of $n_3$ is positive, we have that
\begin{align*}
        \frac{v_i(M^\prime)}{v_i(A_i)} &\le n^\prime \frac{\theta + \delta}{1 + \delta} +  n_1\frac{1+\delta-\theta}{1+\delta} + (n_1 + 1)\frac{2 - \delta - \theta}{1+\delta} - \frac{\theta-1}{1+\delta} \\
        &= n^\prime \frac{\theta + \delta}{1 + \delta} +  n_1\frac{3-2\theta}{1+\delta} + \frac{3 - 2\theta -\delta}{1+\delta} \le n^\prime \frac{\theta + \delta}{1 + \delta} \implies \\
        v_i(A_i) &\ge \frac{1+\delta}{\theta+\delta} \cdot\frac{v_i(M^\prime)}{n^\prime} \ge \frac{1}{\theta} \cdot\mms_i(n', M').
\end{align*}

Since both reductions (removing a bundle of cardinality $1$ along with the corresponding agent and removing $2$ \prob items along with an agent in $N_2$) weakly increase the maximin share of agent $i$, our $1/\theta$ guarantee holds for any arbitrary $N'\subseteq N$.
\end{proof}

\begin{figure}[ht]
    \centering
    \resizebox{1\textwidth}{!}{%
    \begin{tikzpicture}[
        box/.style={
            draw, 
            thick, 
            anchor=south, 
            inner sep=0pt,
            minimum width=2.0cm 
        },
        v_large/.style={box, minimum height=2.6cm, font=\Large},
        large/.style={box, minimum height=2.0cm, font=\Large},
        med/.style={box, minimum height=1.4cm, font=\Large},
        aone/.style={box, minimum height=2.2cm, font=\Large},
        new/.style={box, minimum height=1.8cm, font=\Large},
        small/.style={box, minimum height=0.8cm, font=\Large},
        dots/.style={anchor=center, font=\Large}
    ]

        \node[new] (N1) at (6.2,0) {$\le v_i(A_i)$};
        \node[dots] at (7.8, 1.0) {$\cdots$}; 
        \node[new] (N2) at (9.4, 0) {$\le v_i(A_i)$};

        \node[large,rounded corners=10pt,fill=orange!20] (M1) at (12.4 ,0) {$\le \theta x$};
        \node[dots] at (13.8, 1.0) {$\cdots$}; 
        \node[large, rounded corners=10pt, fill=orange!20] (M2) at (15.4, 0) {$\le \theta x$};

        \node[aone] (X) at (18.4,0) {$A_i$};

        \node[med, fill=pink!30, rounded corners=10pt] (S1) at (21.4,0) {$\le x$};
        \node[dots] at (22.8, 1.0) {$\cdots$}; 
        \node[med, fill=pink!30, rounded corners=10pt] (S2) at (24.4,0) {$\le x$};

        \node[small, rounded corners=10pt] (Q1) at (N1.north) {$\le \delta x$};
        \node[small, rounded corners=10pt] (Q2) at (N2.north) {$\le \delta x$};
        
        \node[small, rounded corners=10pt] (T1) at (M1.north) {$\le \delta x$};
        \node[small, rounded corners=10pt] (T2) at (M2.north) {$\le \delta x$};

        \node[med, fill=pink!30, rounded corners=10pt] (TS1) at (S1.north) {$\le x$};
        \node[med, fill=pink!30, rounded corners=10pt] (TS2) at (S2.north) {$\le x$};

        \tikzset{
            brace style/.style={
                decorate, 
                decoration={brace, amplitude=8pt, raise=6pt}, 
                thick
            }
        }

        \draw [brace style] (Q1.north west) -- (Q2.north east) 
            node [midway, above=16pt] {\Large$N_1$};

        \draw [brace style] (T1.north west) -- (T2.north east) 
            node [midway, above=16pt] {\Large$N_2$};

        \draw [brace style] (TS1.north west) -- (TS2.north east) 
            node [midway, above=16pt] {\Large$N_3$};

    \end{tikzpicture}%
    }
    \caption{The structure of the allocation in the proof of \Cref{lem:other-apx} after the reductions that remove agents with singleton bundles, where square boxes denote bundles while rounded ones correspond to a single item. This, however, is also the structure of the virtual allocation $\mathcal{A'}$ we reconstruct after the reductions that remove agents along with pairs of \prob items. The latter are colored; orange for $N_2$ and pink for $N_3$.}
    \label{fig:reduced_alloc}
\end{figure}

\begin{proof}[Proof of \Cref{thm:main}]
First notice that every step of the algorithm can be computed in polynomial time. Indeed, finding a NSW-maximizing matching \citep{Farhadietal21}, constructing $G_\mathcal{F}$ and computing $R$, finding the topological ordering $\sigma$, allocating the second layer of items, and running Envy-Cycle-Elimination (\Cref{thm:xece}) can all be done efficiently. Thus, Match-Draft-and-Eliminate$(N, M, \phi)$ returns $\mathcal{A}$ in polynomial time. 

Next, notice that, for $\theta\ge 1$, by \Cref{cor:sanpoulia},  the allocation $\mathcal{S}$ completed in \hyperref[line:8]{line \ref{line:8}} of \Cref{alg:main} is \efo. Then, part a) of \Cref{thm:xece} guarantees that $\mathcal{A}$ is \efo as well.

Similarly, for $\theta\in [1,2]$, again by \Cref{cor:sanpoulia}, the allocation $\mathcal{S}$ is $1/\theta$-\efx and all agents are $\theta$-content. Then, Theorem 3 of \cite{MS23} guarantees that $\mathcal{A}$ is $\min(\frac{1}{\theta}, \frac{\theta}{\theta + 1})$-\efx. For $\theta = \phi = \frac{\sqrt{5}+1}{2}$, however, we have $\min(\frac{1}{\theta}, \frac{\theta}{\theta + 1}) = \phi - 1$.

Finally, we invoke \Cref{lem:ece-apx} for the agents of $N\setminus R$ and \Cref{lem:other-apx} for the agents of $R$ with parameter $\theta = \phi$.
    Thus, the approximation factor of the final allocation with respect to \gmms is 
    \[
    \min \left\{ \frac{\phi}{\phi + 1}, \frac{1}{\phi} \right\}. 
    \]
    By the definition of $\phi$ (as the positive solution of $x^2-x-1=0$), the two quantities are equal to each other and also equal to $\phi - 1$.
\end{proof}

We also present a version of the \Cref{th:main} that results in an improved factor for any fixed $n$, at the expense of the \efx approximation.

\begin{theorem}\label{th:main}
    Set $\theta_n = \max\Big\{\frac{3}{2} , \frac{1 + \sqrt{5 - {4}/{n}}}{2}\Big\}$. Then, the allocation $\mathcal{A}$ returned by Match-Draft-and-Eliminate$(N, M, \theta_n)$ is $\min\left\{\frac{2}{3}, \frac{\sqrt{5n^2 - 4n} - n}{2n-2}\right\}$-\gmms.
\end{theorem}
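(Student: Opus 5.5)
The plan is to combine \Cref{lem:ece-apx} and \Cref{lem:other-apx}, both applied with the parameter $\theta_n$. For the first, the relevant quantity is the worst case over all admissible subgroup sizes. For the second, the only requirement is that $\theta_n \ge 3/2$, which holds by the definition of $\theta_n$ as a maximum. Concretely, fix any $N'\subseteq N$ and $i\in N'$, and let $n'=|N'|\le n$.

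\textbf{Agents in $R$.} By \Cref{lem:other-apx}, an agent $i\in R$ gets $v_i(A_i)\ge \frac{1}{\theta_n}\mms_i(n',M')$.

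\textbf{Agents in $N\setminus R$.} By \Cref{lem:ece-apx}, an agent $i\in N\setminus R$ gets $v_i(A_i)\ge \frac{\theta_n}{\theta_n+1-1/n'}\mms_i(n',M')$. The denominator $\theta_n+1-1/n'$ is increasing in $n'$, so this factor is smallest at $n'=n$. Hence every agent is guaranteed at least
\[
\min\Big\{\frac{1}{\theta_n}, \frac{\theta_n}{\theta_n+1-1/n}\Big\}.
\]
The case $n=1$ is trivial: the single agent receives $M$, and $\mms_i(1,M)=v_i(M)$. So I assume $n\ge 2$, which also keeps the expression $(\sqrt{5n^2-4n}-n)/(2n-2)$ well defined.

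\textbf{Evaluating the minimum.} I split into two cases according to which term attains the maximum in $\theta_n$.

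\emph{Case $n\ge 5$.} Here $\sqrt{5-4/n}\ge 2$, so $\theta_n=\frac{1+\sqrt{5-4/n}}{2}$. This value is exactly the positive root of $\theta^2=\theta+1-1/n$. That equation is precisely the balancing condition $\frac1\theta=\frac{\theta}{\theta+1-1/n}$, so both terms of the minimum coincide and equal $1/\theta_n$. Rationalizing gives
\[
\frac{2}{1+\sqrt{5-4/n}}=\frac{2(\sqrt{5-4/n}-1)}{4-4/n}=\frac{\sqrt{5n^2-4n}-n}{2n-2}.
\]
Since $\theta_n\ge 3/2$, this value is at most $2/3$, so it is also the claimed minimum.

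\emph{Case $2\le n\le 4$.} Here $\theta_n=3/2$, and the agents in $R$ get $2/3$. For the agents in $N\setminus R$, the inequality $\frac{3/2}{5/2-1/n}\ge \frac23$ is equivalent to $\frac94\ge \frac52-\frac1n$, that is, $n\le 4$, so it holds. In this range one also checks that $\frac{\sqrt{5n^2-4n}-n}{2n-2}\ge 2/3$, since this is equivalent to $\sqrt{5-4/n}\le 2$. Therefore the minimum in the statement equals $2/3$, as required.

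No step is a real obstacle, since the heavy lifting is done by the two lemmas. The points needing care are the monotonicity in $n'$, which lets us restrict to $n'=n$, and verifying that the case split for $\theta_n$ matches exactly the case split for the minimum in the statement. Both reduce to the single threshold $\sqrt{5-4/n}\lessgtr 2$, i.e., $n\lessgtr 4$.
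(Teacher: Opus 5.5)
Your proposal is correct and follows the paper's proof. Like the paper, you apply \Cref{lem:ece-apx} and \Cref{lem:other-apx} with $\theta_n \ge 3/2$, balance $1/\theta$ against $\theta/(\theta+1-1/n)$, and split at $n=4$, where $\frac{1+\sqrt{5-4/n}}{2}=\frac32$. Your extra checks (monotonicity in $n'$, the $n\le 4$ verification for agents in $N\setminus R$, and the degenerate case $n=1$) fill in details the paper leaves implicit.
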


\begin{proof}
    We invoke \Cref{lem:ece-apx} for the agents of $N\setminus R$ and \Cref{lem:other-apx} for the agents of $R$ with parameter $\theta_n \ge {3}/{2}$.
    Thus, the approximation factor of the final allocation with respect to \gmms is 
    \[
    \min \left\{ \frac{\theta_n}{\theta_n + 1 - \frac{1}{n}}, \frac{1}{\theta_n} \right\}. 
    \]
    Equating the two quantities gives
    \[
    \theta_n^2 - \theta_n - 1 + \frac{1}{n} = 0 \implies \theta_n = \frac{1 \pm \sqrt{5 - \nicefrac{4}{n}}}{2}
    \]
    Of course, to satisfy the condition $\theta_n \ge {3}/{2}$ we keep the larger solution. Moreover, note that for $n \le 4$, we have $\frac{1 + \sqrt{5 - {4}/{n}}}{2} \le  {3}/{2}$, thus we must set $\theta_n = {3}/{2}$ to obtain the approximation factor ${2}/{3}$. For $n \ge 5$, by setting $\theta_n = \frac{1 + \sqrt{5 - {4}/{n}}}{2}$ we obtain the factor $\frac{\sqrt{5n^2 - 4n} - n}{2n-2}$. 
\end{proof}\vspace{-3ex}

\begin{figure}[!ht]
    \centering
    \resizebox{0.5\textwidth}{!}{%
\begin{tikzpicture}
\def\N{30}

\begin{axis}[
    xlabel={$n$},
    ylabel={},
    xmin=2, xmax=\N+1,
    ymin=0.615, ymax=0.675,
    grid=both,
    major grid style={line width=.2pt,draw=gray!50},
    legend pos=north east,
    ytick={0.62, 0.63, 0.64, 0.65, 0.66, 0.67},
    yticklabel style={
        /pgf/number format/fixed,
        /pgf/number format/precision=2
    }
]

\addplot[
    only marks,
    mark=*,
    color=blue,
    samples at={3,4,...,\N} 
] { x < 5 ? 2/3 : 2 / (1 + sqrt(5 - 4/x)) };
\addlegendentry{\Cref{th:main}}

\addplot[
    domain=2:\N+1,
    dashed,
    color=red,
    thick,
    samples=2
] { (sqrt(5) - 1) / 2 }; 
\addlegendentry{$\phi - 1 \approx 0.618$}

\end{axis}
\end{tikzpicture}%
    }
    \caption{The approximation guarantee of \Cref{alg:main} and a comparison with $\phi-1$ for relatively small values of $n$; note that the approximation ratio is $2/3$ for $n\in\{3, 4\}$.}
    \label{fig:placeholder}
\end{figure}

\vspace{.5cm}
Using the proportional share, i.e., the $1/n$-th of the total value, as a proxy for maximin share is very convenient but, more often than not, results in rather loose bounds. Although our analysis leading to $\phi-1$ is very detailed, it is still  surprising that it is tight, as shown below. The proof of \Cref{th:upper} can easily be adjusted to show that corresponding algorithms of \cite{amanatidis2020multiple} and  \cite{Farhadietal21} cannot go beyond an approximation factor of $\phi - 1$ either.

\begin{theorem}\label{th:upper}
 Let $\delta >0$ and $\theta \ge 1$. \Cref{alg:main} with parameter $\theta$ does not always produce $(\phi-1+\delta)$-\gmms allocations. 
\end{theorem}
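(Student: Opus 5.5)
The plan is to show that each of the two bounds in \Cref{lem:ece-apx,lem:other-apx} is essentially attained by some instance. Since $\min\{\frac{\theta}{\theta+1},\frac1\theta\}\le\phi-1$ for every $\theta\ge1$, with equality only at $\theta=\phi$, it suffices to give, for every $\theta$, one of two families of instances. In \emph{Family A} an agent of $N\setminus R$ ends up with roughly a $\frac{\theta}{\theta+1}$ fraction of some groupwise maximin share; we use it when $\theta\le\phi$. In \emph{Family B} an agent of $R$ ends up with roughly a $\frac1\theta$ fraction; we use it when $\theta\ge\phi$. Each family has a slack parameter $\varepsilon\to0$ and a group size $n'\to\infty$, which together push the ratio below $\phi-1+\delta$. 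In both families we specify the valuations of a single ``victim'' agent in detail. All other agents get valuations chosen only to force the execution of \Cref{alg:main}: which items $\mathcal F$ matches, who belongs to $R$, and where Envy-Cycle-Elimination sends the pool items. Ties are broken lexicographically, as the algorithm prescribes.

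\emph{Family B ($\theta\ge\phi$).} The victim $i$ has $v_i(f_i)=1$ and values every other matched item $f_k$ at $\theta-\varepsilon$. So $i$ envies the others but does not $\theta$-envy them, and, with suitable valuations for the others, nobody is $\theta$-envied; hence $i\in R$. Every pool item is tiny for $i$, so $s_i$ is worth almost $0$ and $\delta\approx0$. The other agents value their own $f_k$ highly and the tiny items moderately. As a result they never envy each other, and they remain sources of the envy graph during Envy-Cycle-Elimination while $i$ stays envious and is never a source. The tiny items are therefore spread over the other agents, and we tune their number so that each other bundle gains total value about $\theta-1$ in $i$'s eyes. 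Take the group $N'=\{i,k\}$. It contains the items $f_i$ (worth $1$), $f_k$ (worth about $\theta$) and tiny items of total value about $\theta-1$. The partition $\{f_k\}$ versus $\{f_i\}\cup\text{tiny}$ shows that $\mms_i(2,M')\approx\theta$. Meanwhile $v_i(A_i)\approx1$, which gives ratio $\approx1/\theta\le\phi-1$. It remains to check that Envy-Cycle-Elimination never hands $i$ a tiny item. This should follow from $i$ always envying some bundle, so that $i$ is never a source, while no envy cycle through $i$ forms, because the others prefer their own bundles to $i$'s.

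\emph{Family A ($\theta\le\phi$).} The victim $j$ gets a single matched item worth $1$ and is $\theta$-envied, so $j\notin R$ and $j$ never receives a second draft item. For $j$, every pool item is worth just under $1/\theta$. All $n'-1$ other agents lie in $R$, each holding an item worth $1$ to $j$ plus one pool-type item. Envy-Cycle-Elimination must not give $j$ anything further; to arrange this we make $j$ envy every bundle, so $j$ is never a source. Then $v_j(A_j)=1$, and the other $n'-1$ bundles are each worth about $1+1/\theta$. The proportional share is therefore about $\frac{\theta+1}{\theta}-\frac1{n'\theta}$.

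The main obstacle is to make $\mms_j$ actually equal to this proportional share, rather than merely be bounded above by it. With only unit items and $1/\theta$-items it generally is not. To fix this, each ``$1/\theta$'' pool contribution is realized instead by many small items of total value just under $1/\theta$, each individually small. This has to be done while keeping both part a) of \Cref{cor:sanpoulia} and the draft and Envy-Cycle-Elimination behavior consistent. With that granularity, a greedy partition into $n'$ bundles, each with one unit item plus small items, gets within $\varepsilon$ of the proportional share, so the ratio tends to $\frac{\theta}{\theta+1}\le\phi-1$. Completing this requires a careful construction of the other agents' valuations, which I expect to be the most delicate part of the proof.
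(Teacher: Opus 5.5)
Your case split is the paper's (victim in $N\setminus R$ for $\theta\le\phi$, victim in $R$ for $\theta\ge\phi$). However, both families rest on an inverted reading of Envy-Cycle-Elimination (ECE). A source is a node of in-degree $0$, i.e., an agent whom \emph{nobody envies}; being envious does not stop an agent from being a source. In Family~B, $i$ envies every $k$ (as $v_i(f_k)=\theta-\varepsilon>1$), so no $k$ is a source while $i$ stays near $1$. Since nobody envies $i$, she is the unique source and absorbs the tiny items herself.

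Other valuations cannot rescue a two-agent witness either. The per-bundle bounds in the proof of \Cref{lem:other-apx} hold for \Cref{alg:main} with any $\theta\ge 1$. So a non-singleton $A_k$ is worth at most $\max\{v_i(A_i)+\delta x,(\theta+\delta)x,2x\}$ to $i$. In your setting that is about $\max\{\theta,2\}$, not the $2\theta-1$ you need. For $\theta\le 2$, every group $\{i,k\}$ with $i\in R$ certifies a ratio of at least $2/3$.

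The excess value has to enter the other bundles during the draft, one item per agent, which is why the paper's Case~1 uses all $n$ agents and $n\to\infty$. There the victim $n$ values $f_n$ at $1$, the other matched items at $\phi$, and $n-1$ pool items at $\frac{\phi-1}{n-1}$ each. Everyone is in $R$; agent $n$ drafts one of these items and agents $2,\dots,n-1$ draft the rest, so $v_n(A_n)\to 1$. Bundling $f_n$ with all of them gives $\mms_n(n,M)=\phi$. Valuing the other matched items at $\phi$ rather than $\theta-\varepsilon$ lets one instance serve every $\theta>\phi$.

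Family~A has the same inversion (``make $j$ envy every bundle, so $j$ is never a source''). What keeps $j$ from being a source is that someone envies her. If in addition $j$ envies everybody, no source exists, so ECE must eliminate cycles while goods remain and may route $j$ into a bundle worth about $1+1/\theta$.

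More seriously, the structure you chose cannot work however the execution is arranged. Suppose $v_j(A_j)=1$ and every other bundle contains an item worth $1$ to $j$. A bundle last augmented in ECE by $g$ satisfies $v_j(A_\ell\setminus\{g\})\le 1$, so besides its unit item it holds at most one good of positive value to $j$. A bundle completed in the draft has at most two goods anyway. The group therefore has at most $2n'-1$ positively valued goods, each worth at most $1$. A partition whose worst bundle exceeds $1$ would need two of them per bundle, hence $\mms_j(n',M')\le 1=v_j(A_j)$, and your granularity fix cannot be realized.

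The paper's Case~2 removes the unit items. The other matched goods are worthless to the victim (agent~1, holding $g_1$ worth $\phi+\epsilon$). Each other bundle absorbs $c$-valued goods while unenvied, up to total value $\phi$, and only then receives one good $h$ of value $1$, so envy appears exactly at the last good. Agents $2,\dots,n-1$ permanently envy agent~1, so she is never a source. The item order makes the last good of $M$ an $h$, so ECE ends before any cycle through agent~1 has to be resolved. This gives $v_1(A_1)=\phi+\epsilon$ against $\mms_1(n,M)\approx\phi+\frac{n-1}{n}$.
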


\begin{proof}[Proof of \Cref{th:upper}]
    We begin by describing a family of instances that cause the algorithm to fail to produce a $(\phi-1+\delta)$-\mmsT allocation for any positive constant $\delta$, thus implying the corresponding result for \gmms as well. In the following table we present the agents' values for the $n$ most preferred goods of agents 2 to $n$. 
    Moreover, if $v_1(g_k) \le 1$ and $v_i(g_k) < 1$ for all $i>1$ and $k > n$,  then the unique NSW-maximizing matching allocation is the one that sets $f_i = g_i$.
    \[
    \begin{array}{ccccccc}
    \toprule
    & g_1  & g_2    & g_3    & \cdots & g_{n-1} & g_n \\
    \midrule
    a_1     & \phi + \epsilon & 0      & 0      & \cdots & 0       & 0 \\
    a_2     & \phi + \epsilon & \phi + \nicefrac{\epsilon}{2}   & \phi      & \cdots & \phi      & 1 \\
    a_3     & \phi + \epsilon & \phi   & \phi + \nicefrac{\epsilon}{2}  & \cdots & \phi       & 1 \\
    \vdots  & \vdots          & \vdots & \vdots & \ddots & \vdots  & \vdots \\
    a_{n-1} & \phi + \epsilon & \phi   & \phi   & \cdots & \phi + \nicefrac{\epsilon}{2}    & 1 \\
    a_n     & \phi & \phi  & \phi  & \cdots & \phi & 1  \\
    \bottomrule
    \end{array}
    \]\smallskip

    Next, we have two cases based on whether  we run \Cref{alg:main} with $\theta > \phi$ or not. \\[5pt]
    \noindent\underline{Case 1: $\theta > \phi$.} We focus on agent $n$. It holds that $v_n(g_k) = \frac{\phi-1}{n-1}$ for $k=n+1,\dots, 2n-1$ and $v_n(g_k) = 0$ otherwise.  Observe that agent $n$ creates her unique $n$-maximin share defining partition by bundling $g_n,\ldots,g_{2n-1}$ together, resulting in $\mms_n(n, M) = \phi$. However, agent $n$ does not $\theta$-envy agents $2$ to $n-1$ (nor are they $\theta$-envied by anyone else), thus they will all be placed in $R$ and receive a second item. By having agents $2$ to $n-1$ prefer the items in $\{g_{n+1}, \dots, g_{2n-1}\}$ we will end with a final allocation where $v_n(A_n) = 1 + \frac{\phi}{n-1}$. Hence the approximation ratio is at most \[
    \frac{v_n(A_n)}{\mms_n(n, M)} = \frac{1 + \frac{\phi}{n-1}}{\phi }
    \]
    As $n$ grows this ratio becomes arbitrarily close to ${1}/{\phi} = \phi - 1$, so it cannot be at least $\phi-1+\delta$ for sufficiently large $n$. \smallskip

    \noindent\underline{Case 2: $\theta \le \phi$.} Now only agent $n$ will be in $R$ and she will receive some item from $\{g_{n+1}, \dots, g_{2n-1}\}$; without loss of generality, let it be $g_{n+1}$. Next, we impose a moderate constraint on the valuations of agents 2 to $n-1$: their total value  for all the items except their $n$ most preferred ones shown above is strictly less than ${\epsilon}/{2}$. Therefore, they are always envious of agent 1. In other words, agent 1 will receive another good only if she participates in a cycle elimination and exchanges her bundle. Now, we carefully define her valuation function for the remaining items:
    \[
     v_1(g_k) = \begin{cases}
        1, &k - n \text{ is a multiple of } \frac{m-n}{n-1}\\
        c, &\text{otherwise}
     \end{cases}
    \]
    Now, let us pick $c = \frac{(n-1)\phi}{m+1-2n}$. This makes the total value of those items equal to $(n-1)\phi$. Next, we pick $m$ such that $n-1$ divides $m+1-2n$. This allows us to create $n-1$ identical for agent 1 bundles $B_2, \dots, B_n$  of those small items with value exactly $\phi$ each. Note that we may construct the sets such that $g_i \in B_i$ for all $i \ge 2$ and, moreover, such that $g_{n+1} \in B_n$.
    
    Assume, for now, that the allocation $\left(\{g_1\}, B_2, \dots, B_n \right)$ can be produced by the execution of the Envy-Cycle-Elimination algorithm starting from the allocation $\left(\{g_1\}, \dots, \{g_{n-1}\}, \{g_n, g_{n+1}\}\right)$. Then, the remaining items are precisely those that agent 1 values at $1$, let us call them $h$. Note that there are exactly $n-1$ copy of $h$, with the last one occurring for $k = m$. Each other agent will receive, then, exactly one of these, since agent 1 becomes envious of the receiving agent once this happens. But there are enough agents to allocate all the remaining items and, thus, in the final allocation we will have $A_1 = \{g_1\}$ and $v_1(A_1) = \phi + \epsilon$.
    
    Next, we want to bound agent 1's maximin share. To that end, note that agent 1 can pair each $g_k$ she values at 1 with a bundle of $c$-valued items, let us call it $C_k$, such that $v_1(\{g_k\} \cup C_k) \approx \phi$. Or, equivalently, $v_1(C_k) \approx \phi - 1$. This way she uses  value approximately equal to $(n-1)(\phi-1)$ out of the total  value of the small items (which is $(n-1)\phi$). The remaining value, which is about $n-1$ will be equally distributed among all bundles. Hence, we get that $\mms_1(n, M) \ge \phi + \frac{n-1}{n} - c$, where the extra $c$ amounts to the discretization error treating the small items as a continuous mass. Still, we can select $m$ sufficiently large such that this error becomes negligible, or in other words $\mms_1(n, M) = \phi + \frac{n-1}{n} - \epsilon'$ for small $\epsilon' > 0$. 

    Putting everything together results in 
    \[
    \frac{v_1(A_1)}{\mms_1(n, M)} \le \frac{\phi + \epsilon}{\phi + \frac{n-1}{n} - \epsilon'}
    \]
    As $n$ grows and for sufficiently small $\epsilon, \epsilon'$ this ratio becomes arbitrarily close to ${\phi}/{\phi + 1} = \phi - 1$, so it cannot be at least $\phi-1+\delta$ for sufficiently large $n$.
    
    To complete the proof it remains to show that the allocation $(\{g_1\}, B_2 \cup \{h\},\allowbreak \dots, \allowbreak B_n \cup \{h\})$ can be produced via the Envy-Cycle-Elimination algorithm. Note that if agent $n$ receives all those items, then there is no envy between agents in 2 to $n$. Therefore, each one of them can receive items until agent 1 starts envying them. Here, $h$ items come into play. Notice that an $h$ item is positioned after  
    exactly $\frac{m+1-2n}{n-1}$ many $c$-valued items (i.e., the $h$ items occur every $\frac{m+1-2n}{n-1}$ positions among the $c$-valued items).
    In other words, agent $n$ receives $B_n$ and then an item $h$. At this point, agent 1 envies her and agent $n-1$ is the new source. She receives $B_{n-1}$ and then an $h$ and the pattern continues until each agent has exactly one $B$ set and one $h$ item.
\end{proof}

\section{Improved Guarantees for Restricted Settings}\label{sec:restricted}

In this section we exhibit some interesting special cases where we can have even better approximation guarantees for \gmms.

\subsection{Agreement on the $n$ most desirable items}\label{sec:topn}

A family of instances that has recently attracted more attention in the literature is the class we refer to as top-$n$ instances. In a top-$n$ instance, all agents agree on the set of their $n$ most desirable items (but not necessarily on their value or on their ranking of these items). The main motivation for examining  this class is that it is a relaxation of ordered valuations (where the agents agree on the ranking of all items), which has been extensively studied in the fair division literature. Additionally, recent works have established improved approximation guarantees for top-$n$ instances for various fairness notions, see e.g., \citep{MS23,ChristoforidisS24,MMP25,AR26}. 

\cite{MS23} proposed an algorithm that achieves a $2/3$-\efx guarantee for top-$n$ instances, improving upon the best-known $(\phi-1)$ approximation of \efx for additive instances. Our main result in this subsection is that we can have the same improvement for \gmms as well. This is achieved by a slight modification of the algorithm in \cite{MS23}, and its pseudocode is presented as \Cref{alg:topn}. The main difference  is  the condition in \hyperref[line:topn-if]{line \ref{line:topn-if}}. This is similar in spirit but simpler than \Cref{alg:main}, in the sense that again each agent receives one or two items before Envy-Cycle-Elimination begins.

\begin{algorithm}[ht]
	\DontPrintSemicolon 
		Let $T$ be the set of the $n$ most desirable items and set $B = M \setminus T$\;
        \For{each agent $i = 1, \dots, n$ }{
            $f^+_i = \argmax_{g \in T} v_i(g)$\;
            $f^-_i = \argmin_{g \in T} v_i(g)$\;
            $s_i = \argmax_{g \in B} v_i(g)$\;
            \If{$v_i(f^-_i) + v_i(s_i) \ge v_i(f^+_i)$ \label{line:topn-if}}{
                $A_i = \{s_i\}$\;
                $B = B \setminus \{s_i\}$
            }
            \Else{
                $A_i = \{f^+_i\}$ \label{line:ffirst}\;
                $T = T \setminus \{f^+_i\}$
            }
        }
        \For{each agent $i = n, \dots, 1$}{
            \If{$A_i = \{s_i\}$}{
                $f_i = \argmax_{g \in T} v_i(g)$\;
                $A_i = A_i \cup \{f_i\}\label{line:fsecond}$
            }
        }
        $\mathcal{A} = \text{Envy-Cycle-Elimination}(N, \mathcal{A}, B)$\;
		\Return $\mathcal{A}$ \;
        
	\caption{Select-and-Eliminate$(N, M)$} \label{alg:topn}
\end{algorithm}

\begin{lemma}\label{lem:2content}
Let $T$ be the set of the $n$ most desirable items. Then, all agents are 2-content after all items of $T$ get allocated.
\end{lemma}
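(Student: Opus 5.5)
We need to show that after both loops of \Cref{alg:topn} (when all of $T$ is allocated), every agent $i$ satisfies $v_i(A_i)\ge 2\, v_i(g)$ for every $g$ remaining in $B$. The plan is to split according to which branch of the condition in \hyperref[line:topn-if]{line \ref{line:topn-if}} agent $i$ took. We will repeatedly use two facts. First, since $T$ consists of the $n$ most desirable items for every agent, any $g\in M\setminus T$ satisfies $v_i(g)\le v_i(f^-_i)\le v_i(h)$ for all $h\in T$. Second, any item still in $B$ after the first loop is also in $B$ at the moment $s_i$ was chosen, because $B$ only shrinks. Hence $v_i(g)\le v_i(s_i)$ for any remaining $g\in B$.

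\emph{Case 1: $i$ took the \textbf{else} branch}, so $A_i=\{f^+_i\}$ with $f^+_i$ chosen from the current $T$. The if-condition failed, which gives
\[v_i(f^+_i) > v_i(f^-_i)+v_i(s_i) \ge v_i(g)+v_i(g) = 2v_i(g)\]
for every remaining $g\in B$. This uses both facts: $v_i(f^-_i)\ge v_i(g)$ because $g\notin T$, and $v_i(s_i)\ge v_i(g)$ by the second fact. One subtlety is that $f^-_i$ and $f^+_i$ are computed over the \emph{current} $T$ (a subset of the original top-$n$), but the bound $v_i(f^-_i)\ge v_i(g)$ still holds since every element of the original top-$n$ dominates every non-top item. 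Thus $i$ is $2$-content.

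\emph{Case 2: $i$ took the \textbf{if} branch}, so $A_i=\{s_i,f_i\}$, where $f_i$ is some item of $T$ received in the second loop. The step to verify is that $T$ is nonempty when $i$ picks in the second loop. This follows by counting: each agent takes exactly one item from $T$ across the two loops and $|T|=n$ initially, so $f_i$ exists. Since $f_i$ lies in the original top-$n$, we get $v_i(f_i)\ge v_i(g)$ for any $g\in B$, and $v_i(s_i)\ge v_i(g)$ by the second fact. Therefore $v_i(A_i)=v_i(s_i)+v_i(f_i)\ge 2v_i(g)$.

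Combining the two cases, every agent is $2$-content once $T$ is exhausted. There is no real obstacle here; the only care needed is with the monotonicity of $B$ and the fact that all of $T$, including the items picked by agents in the second loop, is allocated.
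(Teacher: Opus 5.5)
Your proof is correct and follows the paper's argument: the same split on which branch of the if-test agent $i$ took, with the failed test giving $v_i(f^+_i) > v_i(f^-_i)+v_i(s_i)\ge 2v_i(g)$ in one case and $v_i(s_i)+v_i(f_i)\ge 2v_i(g)$ in the other. The only cosmetic differences are that the paper routes the second case through $v_i(f_i)\ge v_i(f^-_i)\ge v_i(s_i)$, whereas you compare $f_i$ with $g$ directly via the top-$n$ property, and that you explicitly check that $f_i$ exists.
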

\begin{proof}
    First, observe that all items from $T$ are eventually allocated after the execution of the second for-loop, and each agent receives exactly one from that set. Let $\mathcal{A}$ be the allocation produced right after the second for-loop. Now, we have two cases based on whether agent $i$ first receives $s_i$ (through the if condition) or $f_i^+$. If she first receives $s_i$ then she also receives in \hyperref[line:fsecond]{line \ref{line:fsecond}} some item $f_i$, which she values at least as much as $f^-_i$. Then $
    v_i(A_i) \ge v_i(\{f_i^-, s_i\}) \ge 2\cdot v_i(s_i) \ge 2\cdot v_i(g)\ \forall g \in \pool(\mathcal{A}). $
    Similarly, if she receives $f_i^+$ in \hyperref[line:ffirst]{line \ref{line:ffirst}}, then $v_i(A_i) = v_i(f^+_i) > v_i(\{f_i^-, s_i\})$, due to \hyperref[line:topn-if]{line \ref{line:topn-if}}. We then conclude as in the former case.
\end{proof}

\begin{theorem}\label{thm:topn}
    In top-$n$ instances, Select-and-Eliminate$(N, M)$ returns in polynomial time an allocation $\mathcal{A}$ which is $2/3$-\gmms, \efo, and $2/3$-\efx.
\end{theorem}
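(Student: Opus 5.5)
The plan is to prove the three properties separately, reusing the structure of the analysis of \Cref{alg:main}. Polynomial time is immediate: each loop is a linear scan over items, and Envy-Cycle-Elimination runs in polynomial time. For \efo and $2/3$-\efx, I would look at the allocation $\mathcal{A}^0$ right after the second for-loop. Every agent holds either $\{f_i^+\}$ or $\{s_i,f_i\}$, with exactly one item of $T$ each. I would first show that $\mathcal{A}^0$ is $2/3$-\efx, and in fact that each two-item bundle $\{s_j,f_j\}$ satisfies $v_i(A^0_i)\ge v_i(f_j)$ and $v_i(A^0_i)\ge v_i(s_j)$. The second inequality holds since every item of $T$ is worth at least every item of $B$ to every agent. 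For the first, when $i$ holds $\{s_i,f_i\}$, the reverse order of the second loop together with the if-condition and the top-$n$ structure should give $v_i(f_j)\le v_i(f_i^+)\le v_i(f_i^-)+v_i(s_i)$. Combined with \Cref{lem:2content} (every agent is $2$-content), Theorem 3 of \cite{MS23} then yields that the final $\mathcal{A}$ is $\min(1,\tfrac{2}{3})=\tfrac23$-\efx. \efo follows from part a) of \Cref{thm:xece}, once $\mathcal{A}^0$ is shown to be \efo, which the single-item comparison above gives.

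For $2/3$-\gmms, fix $i\in N'$ and reduce as in \Cref{lem:ece-apx}: by \Cref{lem:monotonicity} we drop every $j\ne i$ with $|A_j|=1$, so every remaining bundle has size at least $2$. I would then bound each other bundle $A_\ell$ by the last item it received.
\begin{itemize}
\item If $A_\ell$ received its last item $g$ during Envy-Cycle-Elimination, then $g\in B$, and $v_i(A_\ell)\le v_i(A_i)+v_i(g)\le \tfrac32 v_i(A_i)$ by \Cref{lem:2content} and \Cref{thm:ece}.
\item Otherwise $A_\ell=\{s_k,f_k\}$ is an untouched second-phase bundle, and I need $v_i(\{s_k,f_k\})\le\tfrac32 v_i(A_i)$ directly.
\end{itemize}
If every bundle satisfies the bound $\tfrac32 v_i(A_i)$, the proportional-share proxy gives $\mms_i(n',M')\le \frac{1}{n'}v_i(M')\le \frac32 v_i(A_i)$.

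The main obstacle is the untouched pairs $\{s_k,f_k\}$. In general $i$ may value such a pair at almost $2v_i(A_i)$, for instance when $i$ holds $f_i^+$ and $v_i(f_k)\approx v_i(s_k)\approx v_i(f_i^+)$. Here I would use the problematic-item reduction of \Cref{lem:other-apx}. Call an item problematic if it lies in a pair bundle that $i$ values above $\tfrac32 v_i(A_i)$, or, if needed, the $T$-item of such bundles. As long as problematic items outnumber agents, the pigeonhole principle applied to a maximin share defining partition lets me delete two problematic items and one agent via \Cref{lem:monotonicity}. After that, I rebuild a virtual allocation in which heavy bundles (worth at most $2v_i(A_i)$, since each of their items is worth at most $v_i(f_i^+)\le v_i(A_i)$ to $i$) are balanced by light ones. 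The delicate part is showing that the light bundles are light enough to average out. For example, if $i$ holds $\{s_i,f_i\}$, the if-condition bounds single $T$-items by $v_i(A_i)$ and $B$-items by $v_i(s_i)$. Then a balancing computation like the final case analysis in \Cref{lem:other-apx}, parameterized by $\delta=v_i(s_i)/v_i(f_i^-)$ or by $v_i(f_i^+)/v_i(A_i)$, should give the ratio $2/3$. I would treat the two cases for $A_i^0$, namely $\{f_i^+\}$ versus $\{s_i,f_i\}$, separately in this step.
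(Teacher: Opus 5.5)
Your treatment of polynomial time, \efo and $2/3$-\efx is correct and follows the paper's argument. $\mathcal{A}^0$ is \efx because each $f_j$ was still in $T$ at $i$'s first-loop turn and $T$-items dominate $B$-items. Then \Cref{lem:2content} together with Theorem 3 of \cite{MS23} gives $2/3$-\efx, and \Cref{thm:xece} preserves \efo.

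The gap is in the $2/3$-\gmms part. Its decisive step is only asserted (``should give the ratio $2/3$''), and the ingredients you actually state cannot deliver it. Suppose problematic items are defined by the threshold $\frac32 v_i(A_i)$, heavy bundles are bounded by $2v_i(A_i)$, all other bundles by $\frac32 v_i(A_i)$, and $n_h\le n_l+1$ after the reduction. Then the proportional share is only bounded by $\frac{1+2(k+1)+\frac32 k}{2k+2}\,v_i(A_i)\to\frac74\,v_i(A_i)$, so you only recover the old $4/7$.

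The threshold definition is also the wrong one. When $v_i(s_i)\approx 0$, untouched pairs worth about $2v_i(A_i)$ and about $1.49\,v_i(A_i)$ can coexist, and the latter are never removed. Relatedly, your remark that for $A_i^0=\{s_i,f_i\}$ every $B$-item is bounded by $v_i(s_i)$ is false for $B$-items drafted \emph{before} $i$'s first-loop turn. Those are bounded only by $v_i(f_i^-)$, and this is exactly what makes some pairs heavy.

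The missing idea is a structural split combined with a coupling between heavy and light bundles.
\begin{itemize}
\item Only untouched pairs whose $B$-item was drafted before $i$'s first-loop turn can be heavy. Every other $B$-item in $M'$ (later drafts and all Envy-Cycle-Elimination items) was available to $i$ at her first-loop turn, so it is worth at most $v_i(s_i)$.
\item Take \emph{all} items of these early pairs as problematic. This gives $n_3\le n_{12}+1$ after the reduction.
\item For $A_i^0=\{s_i,f_i\}$, write $x=v_i(f_i^-)$ and $\delta=v_i(s_i)/x$ (the parameter you name). Early pairs are worth at most $v_i(f_i)+x\le\frac{2}{1+\delta}v_i(A_i)$, because their $T$-item is picked after $f_i$ in the reversed second loop and $v_i(f_i)\ge x$.
\item Every other remaining bundle is worth at most $\frac{1+2\delta}{1+\delta}v_i(A_i)$. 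Envy-Cycle-Elimination bundles are at most $v_i(A_i)+\delta x$, and later pairs are at most $v_i(f_i^+)+\delta x\le(1+2\delta)x$ by the if-condition.
\item The two bounds sum to $\frac{3+2\delta}{1+\delta}\le 3$ for every $\delta$. Together with $n_3\le n_{12}+1$ this gives $v_i(M')\le\frac32 n'v_i(A_i)$. This is exactly the paper's computation, split into $\delta\ge\frac12$ and $\delta<\frac12$.
\end{itemize}

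For $A_i^0=\{f_i^+\}$, your observation that an untouched pair can be worth almost $2v_i(A_i)$ is correct. The paper's Claim~\ref{claim:lemma} just appeals to the proof of \Cref{lem:ece-apx} with $\theta=2$ and does not address this: the $B$-item of such a pair is not in the pool covered by \Cref{lem:2content}. The natural coupling there runs through $a=v_i(f_i^-)/v_i(f_i^+)$ and the failed if-condition $v_i(s_i)<(1-a)v_i(f_i^+)$. Early pairs are then at most $(1+a)v_i(f_i^+)$, and all other bundles at most $v_i(A_i)+\min\{a,1-a\}v_i(f_i^+)$. This still has to be pushed through the reduction, where the surviving problematic items may be regrouped. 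As written, your proposal does not establish the $2/3$-\gmms bound in either case.
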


\begin{proof}
To begin with, we fix an agent $i$ and a subset of agents $N^\prime\ni i$. If agent $i$ received her first item in \hyperref[line:fsecond]{line \ref{line:fsecond}} then we bound her value via the following claim.

\begin{claim}\label{claim:lemma}
    Let $\mathcal{A}$ be the output of Select-and-Eliminate$(N, M)$. Then, for any agent $i$ who receives a good in \hyperref[line:ffirst]{line \ref{line:ffirst}} and for any $N' \subseteq N$ with $i\in N'$, it holds that
    \[
    v_i(A_i) \ge \frac{2}{3 - {1}/{n^\prime}} \cdot \mms_i(n^\prime, M^\prime)\,,
    \]
where $n' = |N'|$ and $M' = \bigcup_{j\in N'} A_j$.
\end{claim}

\begin{proof}[Proof of \Cref{claim:lemma}.]
    The proof of the Claim follows almost verbatim the proof of \Cref{lem:ece-apx} for $\theta = 2$. The only differences are that the analog of part a) of \Cref{cor:sanpoulia} for $\theta = 2$ is now a direct consequence of \Cref{lem:2content}, whereas the analog of part b) of \Cref{cor:sanpoulia} now follows by the order in which the goods are allocated (\hyperref[line:ffirst]{line \ref{line:ffirst}} versus \hyperref[line:fsecond]{line \ref{line:fsecond}}).
\end{proof}

We now focus on the case where agent $i$ receives a good (in fact, her second item) in \hyperref[line:fsecond]{line \ref{line:fsecond}}.
We set $v_i(f_i^-) = x$ and $v_i(s_i) = \delta x$ for $\delta \in [0,1]$. By the condition in 
\hyperref[line:topn-if]{line \ref{line:topn-if}},
we get that for any agent $j$ who received her second item in \hyperref[line:fsecond]{line \ref{line:fsecond}} it holds that $v_i(f_j) \le (1+\delta)x$. 
Next, we define two groups of agents, $N_{12}$ and $N_3$, where $N_3$ contains all agents of $N'$ who received a second item in \hyperref[line:fsecond]{line \ref{line:fsecond}} \textit{after} agent $i$ received $f_i$ and never received another good; $N_{12}$ contains all the remaining agents except $i$, i.e., $N_{12} = N' \setminus (N_3 \cup \{i\})$.\footnote{We use this naming convention to draw a parallel with \Cref{lem:other-apx}: $N_3$ corresponds to the group of agents with the same name therein but in the top-$n$ instance we may merge the analogs of groups $N_1$ and $N_2$.} Here, \textit{problematic} items are only those belonging to bundles of agents in $N_3$. 
We reduce the instance in a similar fashion to \Cref{lem:other-apx}, using \Cref{lem:monotonicity}, to obtain a virtual allocation that has a same structure to the restriction of $\mathcal{A}$ on $N'$ after the removal of agents with singletons (see also \Cref{fig:reduced_alloc2}). Again, to avoid introducing additional notation, in what follows we use $N', M', N_{12},  N_3$ for the final reduced instance. For the cardinalities of $N_{12}$ and $N_3$, $n_{12}$ and $n_3$  respectively, it holds $n_3 \le n_{12} + 1$, by \Cref{lem:monotonicity} and the pigeonhole principle. 

Working like we did in the proof of \Cref{th:main} to obtain \eqref{eq:n1bound} (for part of agents in $N_{12}$), \eqref{eq:n2bound} (for the remaining of agents in $N_{12}$; only here instead of $\theta$ we have $1+\delta$ by \hyperref[line:topn-if]{line \ref{line:topn-if}} as we mentioned above), and  \eqref{eq:n3bound} (for agents in $N_{3}$; with the additional observation that $\frac{v_i(f_i) + x}{v_i(f_i) + \delta x} \le \frac{2}{1+\delta}$, since $v_i(f_i)\ge x$), we get the inequality:
\begin{align*}
        \frac{v_i(M^\prime)}{v_i(A_i)} &\le n_{12} \frac{1 + 2\delta}{1 + \delta}+
        n_3 \frac{2}{1 + \delta} + 1 \\
        &= (n - 1 - n_3) \frac{1 + 2\delta}{1 + \delta} + n_3\frac{2}{1+\delta} + 1\\
        &= n \frac{1 + 2\delta}{1 + \delta} - \frac{\delta}{1+\delta} + n_3 \frac{1 - 2\delta}{1 + \delta}\,.
        \end{align*}
For $\delta \ge \frac{1}{2}$ we directly obtain the bound
\[
    \frac{v_i(M^\prime)}{v_i(A_i)} \le n \frac{1 + 2\delta}{1 + \delta} \implies v_i(A_i) \ge \frac{1+\delta}{1+2\delta} \cdot\frac{v_i(M^\prime)}{n} \ge \frac{2}{3}\cdot \frac{v_i(M^\prime)}{n}\,.
\]
for any $\delta \in [\frac{1}{2}, 1]$.
For $\delta < \frac{1}{2}$ the coefficient of $n_3$ is positive. However, combining the relations $n_3 \le n_{12} + 1$ and $n_{12} +  n_3 + 1 = n$ we can upper bound $n_3$ by $\frac{n}{2}$. Hence
\begin{align*}
    \frac{v_i(M^\prime)}{v_i(A_i)} &< n \frac{1 + 2\delta}{1 + \delta} + \frac{n}{2} \cdot \frac{1 - 2\delta}{1 + \delta} = n \frac{3 + 2\delta}{2 + 2\delta} \implies \\
    v_i(A_i) &> \frac{2 + 2\delta}{3 + 2\delta}\cdot \frac{v_i(M^\prime)}{n} > \frac{2}{3} \cdot \frac{v_i(M^\prime)}{n} \,.
\end{align*}
Piecing everything together, we conclude that
\[
v_i(A_i) \ge \frac{2}{3}\cdot \frac{v_i(M^\prime)}{n} \ge \frac{2}{3}\cdot \bmu_i(n^\prime, M^\prime).
\]
This completes the \gmms part of the proof. For both \efo and \efx note that the intermediate allocation after the second for-loop is \efx, and thus also \efo. Then, the \efo property is maintained by the Envy-Cycle-Elimination algorithm. The guarantee about \efx is obtained via Theorem 3 of \cite{MS23} and \Cref{lem:2content}. Finally, note that both for-loops run in polynomial time, as does the Envy-Cycle-Elimination algorithm so Select-and-Eliminate is efficient. \end{proof}

\begin{figure}[ht]
    \centering
    \resizebox{.69\textwidth}{!}{%
    \begin{tikzpicture}[
        box/.style={
            draw, 
            thick, 
            anchor=south, 
            inner sep=0pt,
            minimum width=2.0cm 
        },
        v_large/.style={box, minimum height=2.6cm, font=\Large},
        large/.style={box, minimum height=2.0cm, font=\Large},
        med/.style={box, minimum height=1.4cm, font=\Large},
        aone/.style={box, minimum height=2.2cm, font=\Large},
        new/.style={box, minimum height=1.8cm, font=\Large},
        small/.style={box, minimum height=0.8cm, font=\Large},
        dots/.style={anchor=center, font=\Large}
    ]

        \node[new] (N1) at (9.5,0) {$\le v_i(A_i)$};
        \node[dots] at (11.1, 1.0) {$\cdots$}; 
        \node[new] (N2) at (12.7, 0) {$\le v_i(A_i)$};

        \node[aone] (X) at (16,0) {$A_i$};

        \node[med, fill=pink!30, rounded corners=10pt] (S1) at (19,0) {$\le x$};
        \node[dots] at (20.6, 1.0) {$\cdots$}; 
        \node[med, fill=pink!30, rounded corners=10pt] (S2) at (22,0) {$\le x$};

        \node[small, rounded corners=10pt] (Q1) at (N1.north) {$\le \delta x$};
        \node[small, rounded corners=10pt] (Q2) at (N2.north) {$\le \delta x$};

        \node[med, fill=pink!30, rounded corners=10pt] (TS1) at (S1.north) {$\le x$};
        \node[med, fill=pink!30, rounded corners=10pt] (TS2) at (S2.north) {$\le x$};

        \tikzset{
            brace style/.style={
                decorate, 
                decoration={brace, amplitude=8pt, raise=6pt}, 
                thick
            }
        }

        \draw [brace style] (Q1.north west) -- (Q2.north east) 
            node [midway, above=16pt] {\Large$N_{12}$};

        \draw [brace style] (TS1.north west) -- (TS2.north east) 
            node [midway, above=16pt] {\Large$N_3$};

    \end{tikzpicture}%
    }
    \caption{The structure of the allocation in the proof of \Cref{thm:topn} after the reductions that remove agents with singleton bundles.}
    \label{fig:reduced_alloc2}
\end{figure}

\begin{corollary}\label{cor:23order}
    In ordered instances, Select-and-Eliminate$(N,M)$ returns in polynomial time an allocation which is $2/3$-\gmms.    
\end{corollary}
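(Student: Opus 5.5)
The plan is to reduce \Cref{cor:23order} directly to \Cref{thm:topn}, by checking that every ordered instance is a top-$n$ instance. In an ordered instance all agents share a common ranking $g_1, g_2, \dots, g_m$ of the goods, meaning $v_i(g_1) \ge v_i(g_2) \ge \dots \ge v_i(g_m)$ for every $i \in N$. Then $T = \{g_1,\dots,g_n\}$ is a set of $n$ most desirable items for every agent simultaneously, which is exactly the top-$n$ condition. Hence \Cref{alg:topn} is well-defined on such instances: its first line picks this common $T$, and we resolve ties consistently with the common ranking.

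Once the instance is known to be top-$n$, \Cref{thm:topn} applies verbatim. It gives that Select-and-Eliminate$(N,M)$ runs in polynomial time and outputs a $2/3$-\gmms allocation (as well as \efo and $2/3$-\efx, which the corollary does not need).

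The only point needing care is ties in the ranking: several agents may have equal values straddling position $n$, so ``the'' top-$n$ set is not unique. I would handle this by fixing $T$ as the first $n$ items of the common ranking. I would then note that the proof of \Cref{thm:topn}, including \Cref{lem:2content}, uses only one property of $T$: every item of $T$ is weakly preferred by every agent to every item outside $T$. That property holds for this $T$, so no genuine obstacle arises.
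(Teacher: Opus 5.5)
Your proposal is correct and is the same argument the paper intends. Ordered instances are a special case of top-$n$ instances: take $T$ to be the first $n$ goods of the common ranking, which every agent weakly prefers to every good outside $T$. The corollary is therefore immediate from \Cref{thm:topn}, and your remark on ties is a harmless detail the paper leaves implicit.
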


\begin{remark}\label{rem:23tight}
    A similar construction as in the proof of \Cref{th:upper} can be used to show that the approximation analysis of Select-and-Eliminate is also tight. 
\end{remark}

\subsection{Three agents}

Here we present a variant of Match-Draft-and-Eliminate and an improved analysis tailored to the fundamental case of $n=3$.
The algorithm is identical to \Cref{alg:main}, except it runs the Envy-Cycle-Elimination subroutine in it using a modification of the envy graph defined below. 
This construction is inspired by, but not identical to, the modified envy graph used by \cite{Kurokawa17}.

Let $\mathcal{S}$ be the allocation completed in \hyperref[line:8]{line \ref{line:8}} of \Cref{alg:main}. 
For any partial allocation $\mathcal{B} = (B_1,\ldots,B_n)$ from this point onward, we define the \emph{$\eta$-modified envy graph} $H_{\mathcal{B},\eta} = (N, E_{\mathcal{B},\eta})$, where an edge $(i,j) \in E_{\mathcal{B},\eta}$ if and only if 
\begin{itemize}[leftmargin=20pt,itemsep=4pt]
    \item $v_i(B_j) > (1 + \eta) \cdot v_i(B_i)$, when \textit{all} of the following hold
    \begin{itemize}[leftmargin=25pt,topsep=4pt,itemsep=4pt]
        \item $i \in R$,
        \item $v_i(s_i) < \eta \cdot v_i(f_i)$,
        \item $B_i\supseteq \{f_i, s_i\}$,
        \item $\exists k \in N$ such that $k$'s bundle received a second item in \hyperref[line:4]{lines \ref{line:4}}\hyperref[line:6]{-\ref{line:6}} \emph{before} $i$ received $s_i$ and has not been augmented by Envy-Cycle-Elimination yet;
    \end{itemize}
    \item $v_i(B_j) > v_i(B_i)$, otherwise.
\end{itemize}
In what follows we are going to set the parameter $\eta$ equal to the solution to the equation 
$\frac{2+2\eta}{3+\eta} = \frac{3}{4+\eta}$, i.e.,
$\eta = \frac{\sqrt{57}-7}{4} \approx 0.1375$.
On a high level, for some agents we only add an edge when there is ``significant'' envy. This is to guarantee that an agent who sees someone else having \emph{two} ``somewhat large'' items does not end up with a highly divisible bundle (i.e., a bundle with a large number of ``small'' items from $i$'s perspective) without significantly increasing her value. 
We call \emph{$\eta$-Match-Draft-and-Eliminate} the variant of \Cref{alg:main} where the Envy-Cycle-Elimination  in \hyperref[line:ece]{line \ref{line:ece}} uses the $\eta$-modified envy graph instead. 

\begin{theorem}
\label{thm:three_agents}
Let $\theta = \frac{\sqrt{10}+2}{3}$ and $\eta = \frac{\sqrt{57}-7}{4}$. For any instance with three agents, 
$\eta$-Match-Draft-and-Eliminate$(N, M, \theta)$ outputs a $\frac{\sqrt{10}-1}{3}$-\gmms allocation.
\end{theorem}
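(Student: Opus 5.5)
The plan is to split by subgroup size and by agent type, as in \Cref{thm:main}, with target ratio $\alpha = \frac{\sqrt{10}-1}{3} = 1/\theta$. Note that $\theta = \frac{\sqrt{10}+2}{3} \approx 1.72$ satisfies $\frac{\theta}{\theta+1-1/3} = 1/\theta$, since $\theta^2 - \theta - \frac{2}{3} = 0$.

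For $N'$ a singleton there is nothing to prove. For $|N'|\in\{2,3\}$ with $i\in N\setminus R$, I would reuse the argument of \Cref{lem:ece-apx}. The $\eta$-modification only changes edges out of agents in $R$, so for $i\notin R$ the envy graph edges, and hence the ``last item added to an envied bundle'' argument, are unchanged. Lemma~\ref{lem:ece-apx} then gives $\frac{\theta}{\theta+1-1/n'} \ge \frac{\theta}{\theta + 2/3} = 1/\theta$ for $n'\le 3$. One caveat: agents $j\in R$ whose outgoing edges are thresholded may sit idle as sources, but \Cref{thm:ece} monotonicity still holds, since each swap gives agents a bundle they value more.

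The main work is for $i\in R$. First, I would reduce away singleton bundles via \Cref{lem:monotonicity}, so $N'$ has at most two other agents, each with at least two items. Then I would classify them into $N_1,N_2,N_3$ as in \Cref{lem:other-apx} and enumerate the few configurations $(n_1,n_2,n_3)$ with $n_1+n_2+n_3\le 2$.

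In configurations without an $N_3$ bundle, the bounds \eqref{eq:n1bound} and \eqref{eq:n2bound} already give the proportional-share bound. The relevant quantities are $\frac{1+\delta}{\theta+\delta}$, which is at least $1/\theta$, and $\frac{1+\delta}{1+2\delta}\ge 2/3 > \alpha$. The dangerous cases involve $N_3$ bundles $\{f_k,s_k\}$ worth up to $2x$, where the proportional proxy is too weak; for instance $\delta$ small with $n_3=2$, or $n_3=1$ with $n'=2$. Here I would exploit two things. First, when $\delta\ge\eta$, the pigeonhole reduction (removing two problematic items with one agent) plus direct computation should suffice. Second, when $\delta<\eta$, the $\eta$-modified envy graph ensures that while such an $N_3$ bundle is unaugmented, $i$ only lets others take her out of $\{f_i,s_i\}$ by gaining a $(1+\eta)$ factor. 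So either $i$ keeps $\{f_i,s_i\}$ plus items, or she ends with value at least $(1+\eta)$ times a previous value.

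In these cases I would bound $\mms_i$ directly rather than via $v_i(M')/n'$. I would partially characterize the \mmsT value of the few ``big'' items $f_i,s_i,f_k,s_k,\dots$ under short picking sequences: with three bundles and items of sizes $\le x$, any partition must place two big items together or leave a bundle made of small items of value at most $\delta x$ each. This yields bounds of the form $\mms_i \le \frac{3+\eta}{2+2\eta}v_i(A_i)$ versus $\frac{4+\eta}{3}v_i(A_i)$, in the two regimes depending on whether $i$'s bundle was swapped. Balancing them is exactly the defining equation $\frac{2+2\eta}{3+\eta}=\frac{3}{4+\eta}$, giving a ratio of about $0.735 > \alpha$.

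The main obstacle is this last case analysis for $n'=3$, $i\in R$ with $N_3$ present and small $\delta$. It requires tracking the history of Envy-Cycle-Elimination under the modified graph (when the thresholded condition switches off, once the $N_3$ bundle is augmented) and verifying the picking-sequence \mmsT bounds in each subcase. I would first try to show that every subcase reduces to one of the two balanced inequalities above, and fall back on explicit enumeration otherwise.
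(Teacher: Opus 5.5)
There is a genuine gap, and it starts with an arithmetic slip. The target $\frac{\sqrt{10}-1}{3}$ equals $\theta-1\approx 0.721$, not $1/\theta\approx 0.581$. Also, this $\theta$ does not satisfy $\theta^2-\theta-\frac23=0$; it satisfies $3\theta^2-4\theta-2=0$, i.e., $\frac{3\theta}{3\theta+2}=\frac{2+\theta}{3\theta}$.

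Your treatment of $i\notin R$ survives, since $\frac{\theta}{\theta+2/3}$ really is the target. However, your claim that configurations without an $N_3$ agent follow from the proportional share, because $\frac{1+\delta}{\theta+\delta}\ge 1/\theta$, only certifies about $0.58$. Concretely, take $n'=3$ with both other agents in $N_2$. The proxy gives $\mms_i\le\frac13\bigl[v_i(A_i)+2(\theta+\delta)x\bigr]$, hence only the ratio $\frac{3+3\delta}{1+2\theta+3\delta}$. At $\delta=0$ this is $\frac{3}{1+2\theta}\approx 0.675$, below the target. The pigeonhole reduction is unavailable here, since there are only two problematic items and $n'=3$. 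So the plan as stated does not prove the theorem in this case.

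This case is one of the two binding cases that determine $\theta$; the other is $i\notin R$ with $n'=3$. The missing idea is a direct bound on the maximin share of the six drafted items (\Cref{prop:n3_case_II}).

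1. Here $A_j=\{f_j,s_j\}$, $A_k=\{f_k,s_k\}$, and necessarily $A_i\supseteq\{f_i,s_i\}$. From $i$'s view, these items arise from the picking sequence $jkiikj$. Their values are at most $\theta x$ for $f_j,f_k$, exactly $x$ and $\delta x$ for $f_i,s_i$, and at most $\delta x$ for $s_j,s_k$.
2. If $\mms_i(3,T)>\theta x$, there are no singletons in a defining partition. Then some two-item bundle avoids both $f_j$ and $f_k$, so $\mms_i(3,T)\le(1+\delta)x$.
3. Otherwise, some bundle avoids the two most valuable items, so $\mms_i(3,T)\le\min\{\theta,1+3\delta\}\,x$.
4. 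The ratio $\frac{1+\delta}{\min\{\theta,1+3\delta\}}$ is minimized at $\delta=\frac{\theta-1}{3}$, where it equals exactly $\frac{2+\theta}{3\theta}$. \Cref{lem:aMMSplusg} then extends the bound to $M'$.

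A similar direct argument, via the picking sequence $jikkij$, is needed for $n_2=n_3=1$. There the proxy yields only $\frac{3}{3+\theta}$ at $\delta=0$, and no reduction applies. So your step ``when $\delta\ge\eta$, pigeonhole plus direct computation'' must mean a genuine MMS computation in that case too.

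Two smaller points:
\begin{itemize}
\item \eqref{eq:n1bound} does not hold verbatim under the $\eta$-modified graph, even in $N_3$-free configurations. The threshold condition refers to any not-yet-augmented early bundle in $N$, possibly one later augmented or lying outside $N'$. So for $\delta<\eta$ you only get $v_i(A_j)\le(1+\eta)v_i(A_i)+\delta x$. This still suffices, giving at least $3/4$ in those cases, but it must be checked.
\item The common value $\frac{2+2\eta}{3+\eta}=\frac{3}{4+\eta}$ is about $0.725$, not $0.735$. Moreover, its two sides come from different cases, not from two regimes of a single case. One side is $n'=2$ with an $N_3$ agent, after $i$ lost $\{f_i,s_i\}$. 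The other is $n'=3$ with $n_1=n_3=1$, where the weakened $N_1$ bound is what limits the ratio.
\end{itemize}
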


We may observe that the last condition for modifying some agent $i$'s outgoing envy edges in $H_{\mathcal{B},\eta}$ is reminiscent of the way agents of the set $N_3$ were defined within the proof of \Cref{lem:other-apx}. Indeed, we need a similar categorization of the agents in $N\setminus\{i\}$. So, given $i\in N$, we partition $N\setminus\{i\}$ to $N_1, N_2, N_3$ as follows: 
\begin{itemize}[leftmargin=20pt,itemsep=4pt]
    \item $N_1$ contains all agents of $N\setminus\{i\}$ whose bundle in $\mathcal{A}$ received an item for the last time during the Envy-Cycle-Elimination step in \hyperref[line:ece]{line \ref{line:ece}};
    \item $N_2$ contains all agents of $N\setminus\{i\}$ whose bundle in $\mathcal{A}$ received its second and last item \emph{after} agent $i$ received $s_i$;
    \item $N_3$ contains all agents of $N\setminus\{i\}$ whose bundle in $\mathcal{A}$ received its second and last item \emph{before} agent $i$ received $s_i$.
\end{itemize}
Like before, by $n_1, n_2, n_3$ we denote the cardinalities of $N_1, N_2, N_3$ respectively. 
Consider a subset $S\subseteq N$ and fix some $i \in S$. 
If necessary, we reduce the subinstance induced by $S$ to get rid of agents with bundles of cardinality $1$ (along with their bundles), as discussed in the beginning of the proof of \Cref{lem:ece-apx}. 
The result is a new instance on $N'\subseteq S $ with $|N'|=n'$, so that there is no $j \in N' \setminus \left\{i \right\}$ such that $|A_j|=1$.
Depending whether $i$ is in $R$ or not, as well as on the values of $n', n_1, n_2, n_3$, we consider a number of distinct cases. 
\Cref{tab:cases} summarizes all the cases along with the corresponding approximation ratio and a pointer to the proposition handling this case.

\begin{table}[h!]
\centering
\begin{tabular}{clccc}
\toprule
\textbf{Case} & \textbf{Description} & \textbf{Factor} & \textbf{Value} ($\theta=\frac{\sqrt{10}+2}{3}$, $\eta=\frac{\sqrt{57}-7}{4}$) & \textbf{Ref.} \\
\midrule
1 & $i \in N\setminus R$, $n' \in [3]$ & $\dfrac{3\theta}{3\theta+2}$ & $\mathbf{0.72076}$ & \Cref{prop:case_L} \\[6pt]
\midrule
2.1 & $n'=2$, $n_1=1$ & ${3}/{4}$ & $0.75$ & \Cref{prop:n2_type1} \\[6pt]
2.2 & $n'=2$, $n_2=1$ & $\dfrac{2}{1+\theta}$ & $0.73509$ & \Cref{prop:n2_type2} \\[6pt]
2.3 & $n'=2$, $n_3=1$ & $\dfrac{2+2\eta}{3+\eta}$ & $0.72508$ & \Cref{prop:n2_type3} \\[6pt]
\midrule
3.1 & $n'=3$, $n_1=2$ & ${3}/{4}$ & $0.75$ & \Cref{prop:n3_case_I} \\[6pt]
3.2 & $n'=3$, $n_2=2$ & $\dfrac{2+\theta}{3\theta}$ & $\mathbf{0.72076}$ & \Cref{prop:n3_case_II} \\[8pt]
3.3 & $n'=3$, $n_3=2$ & $1$ & $1$ & \Cref{prop:n3_case_III} \\[7pt]
3.4 & $n'=3$, $n_1=n_2=1$ & ${3}/{4}$ & $0.75$ & \Cref{prop:n3_case_IV} \\[6pt]
3.5 & $n'=3$, $n_1=n_3=1$ & $\dfrac{3}{4+\eta}$ & $0.72508$ & \Cref{prop:n3_case_V} \\[8pt]
3.6 & $n'=3$, $n_2=n_3=1$ & ${3}/{4}$ & $0.75$ & \Cref{prop:n3_case_VI} \\[2pt]
\midrule
\multicolumn{3}{l}{\textbf{Overall (minimum)}} & $\mathbf{0.72076}$ & \\
\bottomrule
\end{tabular}
\caption{Summary of approximation factors for each case, evaluated at $\theta = \frac{\sqrt{10}+2}{3}\approx 1.7208$ and $\eta = \frac{\sqrt{57}-7}{4} \approx 0.1375$.}
\label{tab:cases}
\end{table}

We denote by $M' = \bigcup_{j \in N'} A_j$ the set of active items in the reduced 
instance. For brevity, we write $\mms_i$ to denote $\mms_i(n', M')$ throughout the 
analysis.

Similarly to the proof of \Cref{th:main}, we begin with the simpler case of $i\notin R$.

\begin{proposition}
\label{prop:case_L}
If $i \in N\setminus R$, then $v_i(A_i) \ge \frac{3\theta}{3\theta+2} \cdot \mms_i$.
\end{proposition}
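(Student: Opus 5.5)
The plan is to reuse the proof of \Cref{lem:ece-apx} almost verbatim, and then specialize the bound to $n'\le 3$. The factor from \Cref{lem:ece-apx} is $\frac{\theta}{\theta+1-1/n'}$. It is decreasing in $n'$, and at $n'=3$ it equals $\frac{3\theta}{3\theta+2}$. So it is enough to show that the guarantee $v_i(A_i)\ge \frac{\theta}{\theta+1-1/n'}\cdot\mms_i(n',M')$ still holds for $\eta$-Match-Draft-and-Eliminate whenever $i\in N\setminus R$. The case $n'=1$ is trivial, since then $\mms_i=v_i(A_i)$.

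The only change from \Cref{alg:main} is that Envy-Cycle-Elimination runs on the $\eta$-modified envy graph. I would therefore recheck each ingredient of the proof of \Cref{lem:ece-apx} that touches this subroutine.

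\begin{enumerate}
\item \emph{Monotonicity (\Cref{thm:ece}).} Every edge $(j,k)$ of $H_{\mathcal{B},\eta}$ still means $v_j(B_k)>v_j(B_j)$, possibly with an extra factor $1+\eta$. So rotating bundles backwards along a cycle in $H_{\mathcal{B},\eta}$ strictly improves every agent on the cycle. Hence each agent's value is still nondecreasing over the run, and in particular $v_i(A_i)\ge v_i(f_i)$.
\item \emph{Item additions.} An item is added only to a bundle whose current owner is a source of $H_{\mathcal{B},\eta}$. The key point is that $i\notin R$, so the first bullet in the definition of $H_{\mathcal{B},\eta}$ never applies to $i$. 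Thus $i$'s outgoing edges are exactly ordinary envy edges. Consequently, whenever an item $g$ is added to a bundle during this phase, $i$ did not envy that bundle just before, so $v_i(A_\ell\setminus\{g\})\le v_i(B_i)\le v_i(A_i)$, using step 1.
\item \emph{Bundles never augmented by Envy-Cycle-Elimination.} Such a bundle is $\{f_k,s_k\}$ with $k\in R$. Part b) of \Cref{cor:sanpoulia} gives $v_i(f_k)\le v_i(f_i)\le v_i(A_i)$.
\end{enumerate}

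The remaining steps follow \Cref{lem:ece-apx}. Part a) of \Cref{cor:sanpoulia} gives $v_i(g)\le v_i(f_i)/\theta\le v_i(A_i)/\theta$ for any last-added item $g$, since $g$ was in $\pool(\mathcal{F})$. Combining this with steps 2 and 3 gives $v_i(A_\ell)\le\frac{\theta+1}{\theta}v_i(A_i)$ for every $\ell\in N'\setminus\{i\}$, after the standard reduction that removes singleton bundles via \Cref{lem:monotonicity}. Summing, and using the proportional share as an upper bound on $\mms_i(n',M')$, yields the factor $\frac{\theta}{\theta+1-1/n'}\ge\frac{3\theta}{3\theta+2}$ for $n'\le 3$.

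The only genuinely new point, and the step I expect to need the most care, is step 2. Its proof rests on the observation that the modified threshold $1+\eta$ affects only the outgoing edges of agents in $R$, and therefore never weakens agent $i$'s no-envy guarantee when items are added.
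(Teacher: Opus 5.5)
Your proposal is correct and follows the paper's proof. The paper observes that an agent $i\notin R$ is unaffected by the $\eta$-modified envy graph, applies \Cref{lem:ece-apx} verbatim, and takes the worst case $n'=3$ of the decreasing factor $\frac{n'\theta}{n'\theta+n'-1}$; your explicit checks (cycle rotations in $H_{\mathcal{B},\eta}$ still weakly improve every agent, and $i$'s outgoing edges are ordinary envy edges, so items are added only to bundles $i$ does not envy) are exactly the justification the paper leaves implicit.
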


\begin{proof}
Note that such an agent $i$ is not affected at all by the use of the $\eta$-modified envy graph. Thus, \Cref{lem:ece-apx} applies as is and we get that
$v_i(A_i) \ge \frac{n'\theta}{n'\theta + n' - 1} \cdot \mms_i$.
This expression is decreasing in $n'$, and since $n' \le 3$, the worst case is $n' = 3$, we have 
$v_i(A_i) \ge \frac{3\theta}{3\theta+2} \cdot \mms_i$. 
\end{proof}

When agent $i \in R$, we consider several cases depending on the number and type of the agents in $N'$; see the list below. Recall that since $i \in R$, she has received two items by \hyperref[line:8]{line \ref{line:8}}: $f_i$ with value $v_i(f_i) = x$ and $s_i$ with value $v_i(s_i) = \delta x$ for some $\delta \in [0,1]$. Since the Envy-Cycle-Elimination algorithm only improves an agent's value (even when the $\eta$-modified envy graph is used), we have $v_i(A_i) \ge (1+\delta)x$, which implies $x \le  v_i(A_i) / (1+\delta)$.

If $n' = 2$, there is exactly one other agent $j \in N' \setminus \{i\}$. 
We distinguish three cases based on the type of $j$:
\begin{enumerate}[leftmargin=60pt,itemsep=5pt,label=\textbf{Case 2.\arabic*:}]
    \item $n_1 = 1$, i.e., agent $j\in N_1$ (\Cref{prop:n2_type1}).
    \item $n_2 = 1$, i.e., agent $j\in N_2$ (\Cref{prop:n2_type2}).
    \item $n_3 = 1$, i.e., agent $j\in N_3$ (\Cref{prop:n2_type3}).
\end{enumerate}

If $n' = 3$, we have six cases about the types of the two agents $j, k \in N' \setminus \{i\}$:

\begin{enumerate}[leftmargin=60pt,itemsep=5pt,label=\textbf{Case 3.\arabic*:}]
    \item $n_1 = 2$, i.e., both  $j, k \in N_1$ (\Cref{prop:n3_case_I}).
    \item $n_2 = 2$, i.e., both  $j, k \in N_2$ (\Cref{prop:n3_case_II}).
    \item $n_3 = 2$, i.e., both  $j, k \in N_3$ (\Cref{prop:n3_case_III}).
    \item $n_1 = n_2 = 1$, i.e., one of $j,k$ is in $N_1$ and one in $N_2$ (\Cref{prop:n3_case_IV}).
    \item $n_1 = n_3 = 1$, i.e., one of $j,k$ is in $N_1$ and one in $N_3$ (\Cref{prop:n3_case_V}).
    \item $n_2 = n_3 = 1$, i.e., one of $j,k$ is in $N_2$ and one in $N_3$ (\Cref{prop:n3_case_VI}).
\end{enumerate}

In the propositions that follow, in order to facilitate the presentation, we try to avoid repeating the same arguments we made in the proof of \Cref{lem:other-apx}. For instance, if $i\in R$ and $j\in N_2$, we directly write that $v_i(A_j) \le (\theta + \delta)x$ rather than deriving the inequality from scratch.

\begin{proposition}
\label{prop:n2_type1}
If $i\in R$,  $n' = 2$ and $n_1 = 1$, then
$v_i(A_i) \ge \frac{3}{4} \cdot \mms_i$.
\end{proposition}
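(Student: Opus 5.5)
The plan is to bound agent $i$'s value for the other bundle directly and then use the proportional share of the two-agent subinstance as a proxy for $\mms_i = \mms_i(2, A_i\cup A_j)$, exactly as in the $N_1$ case of \Cref{lem:other-apx}. The only new ingredient is the $\eta$-modified envy graph: $i$'s outgoing edges may now require $(1+\eta)$-envy rather than plain envy. Since $\mms_i \le \frac12\big(v_i(A_i)+v_i(A_j)\big)$, it suffices to show $v_i(A_j)\le \frac{3}{2}\, v_i(A_i)$. That gives $\mms_i\le \frac54 v_i(A_i)$, i.e., a factor of $4/5\ge 3/4$.

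\textbf{Setup.} Let $j\in N_1$. As in \Cref{lem:other-apx}, let $\mathcal{B}$ be the allocation at the moment $A_j$ received its last item $g$ during Envy-Cycle-Elimination, and let $k$ be the agent holding this bundle at that time, so $A_j=B_k$ and $g\in B_k$.

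\textbf{Step 1 (the last item is small).} The item $g$ was in the pool when $i$ drafted $s_i$ in \hyperref[line:4]{lines \ref{line:4}}\hyperref[line:6]{-\ref{line:6}}. Hence $v_i(g)\le v_i(s_i)=\delta x$. By the monotonicity of Envy-Cycle-Elimination, which holds also with the $\eta$-modified graph, $v_i(A_i)\ge v_i(B_i)\ge (1+\delta)x$. Together these give $v_i(g)\le \frac{\delta}{1+\delta}v_i(A_i)$.

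\textbf{Step 2 (no edge from $i$ to $k$ before $g$ is added).} At that moment $k$ was a source of $H_{\mathcal{B},\eta}$, so there was no edge $(i,k)$. I split on which rule governs $i$'s outgoing edges.
\begin{itemize}
\item If the ordinary rule applies, then $v_i(B_k\setminus\{g\})\le v_i(B_i)\le v_i(A_i)$. Therefore
\[
v_i(A_j)\le \Big(1+\tfrac{\delta}{1+\delta}\Big)v_i(A_i)\le \tfrac32\, v_i(A_i),
\]
using $\delta\le 1$.
\item If the modified rule applies, then $v_i(B_k\setminus\{g\})\le (1+\eta)v_i(B_i)$. In this case the conditions defining the modified rule also give $\delta<\eta$. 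Hence
\[
v_i(A_j)\le \Big(1+\eta+\tfrac{\eta}{1+\eta}\Big)v_i(A_i)\le (1+2\eta)\,v_i(A_i),
\]
and $1+2\eta<3/2$ because $\eta\approx 0.1375$.
\end{itemize}

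\textbf{Conclusion.} In both cases $v_i(A_i)+v_i(A_j)\le \frac52 v_i(A_i)$, so $\mms_i\le \frac54 v_i(A_i)$, which proves the stated $3/4$ bound.

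\textbf{Main obstacle.} The step needing care is Step 2 under the modified graph. One must check that the relevant comparison is with $i$'s bundle $B_i$ at that time, not with $A_i$, and then pass to $A_i$ via monotonicity. One must also extract the inequality $\delta<\eta$ from the conditions defining the modified edges, since that is what keeps $v_i(g)$ small in the $(1+\eta)$ regime. The degenerate case $v_i(A_i)=0$ is handled as in \Cref{lem:other-apx}.
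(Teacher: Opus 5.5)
Your proof is correct and follows the paper's argument: the proportional-share bound on $\mms_i(2,A_i\cup A_j)$, with $v_i(A_j)$ controlled by splitting on whether the ordinary or the $(1+\eta)$ rule governed $i$'s outgoing edges when $A_j$ received its last item $g$, and with $v_i(g)\le\delta x$. The only difference is that you also use the condition $\delta<\eta$ from the modified-edge definition, which the paper does not do here. This gives you the sharper factor $4/5$, whereas the paper uses the single bound $v_i(A_j)\le(1+\eta)v_i(A_i)+\delta x$ over all $\delta\in[0,1]$ and obtains $4/(5+2\eta)\ge 3/4$.
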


\begin{proof}
Since $j\in N_1$, under the use of the $\eta$-modified envy graph it holds that 
\[
v_i(A_j) \le (1+\eta)v_i(A_i) + \delta x.
\]
Indeed, this bound covers both cases: if $j$ received her last item while $i$ had standard outgoing envy edges, then $v_i(A_j) \le v_i(A_i) + \delta x$, which is a stronger bound (since $\eta > 0$). If $j$ received her last item, say $g$, while $i$ had modified outgoing envy edges, then agent $i$ did not $(1+\eta)$-envy $A_j \setminus \{g\}$, i.e., $v_i(A_j \setminus \{g\}) \le (1+\eta)v_i(A_i)$, and since $g$ was in the pool when $i$ chose $s_i$, we have $v_i(g) \le \delta x$. In both cases the bound holds. Therefore,
\begin{align*}
\mms_i &\le \frac{1}{2} v_i(M') \le \frac{1}{2}\left[v_i(A_i) + (1+\eta)v_i(A_i) + \delta x\right] \\
&\le \frac{1}{2}\Big[(2+\eta)v_i(A_i) + \frac{\delta}{1+\delta}v_i(A_i)\Big] = \frac{2+\eta+(3+\eta)\delta}{2+2\delta} \, v_i(A_i).
\end{align*}
By rearranging,
\[
v_i(A_i) \ge \frac{2+2\delta}{2+\eta+(3+\eta)\delta} \cdot \mms_i \ge \frac{4}{5+2\eta} \cdot \mms_i\ge \frac{3}{4} \cdot \mms_i\,,
\]
where the second inequality follows from the fact that, for any $\eta\ge 0$, $f(\delta) = \frac{2+2\delta}{2+\eta+(3+\eta)\delta}$ is decreasing in $\delta$ 
and, thus, is minimized for $\delta = 1$, whereas the third inequality is directly implied by the fact that $0\le \eta \le 1/6$.
\end{proof}

\begin{proposition}
\label{prop:n2_type2}
If $i\in R$, $n' = 2$ and $n_2 = 1$, then $v_i(A_i) \ge \frac{2}{1+\theta} \cdot \mms_i$.
\end{proposition}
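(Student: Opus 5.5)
The plan is to use the proportional-share proxy directly; in this case no reduction is needed, since $N'=\{i,j\}$ and $|A_j|=2$.

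\textbf{Step 1: agent $i$'s own value.} Since $i\in R$, agent $i$ holds $f_i$ and $s_i$ after \hyperref[line:8]{line \ref{line:8}}. Write $v_i(f_i)=x$ and $v_i(s_i)=\delta x$ with $\delta\in[0,1]$. Envy-Cycle-Elimination never decreases an agent's value, even on the $\eta$-modified envy graph. Hence $v_i(A_i)\ge(1+\delta)x$.

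\textbf{Step 2: the bound on $A_j$.} Because $j\in N_2$, the bundle $A_j$ was never augmented by Envy-Cycle-Elimination. Its second and last item was drafted after $s_i$, so $A_j=\{f_k,s_k\}$ for some $k\in R$. As in the $N_2$ case of the proof of \Cref{lem:other-apx}, part c) of \Cref{cor:sanpoulia} gives $v_i(f_k)\le\theta x$. Moreover $v_i(s_k)\le\delta x$, since $s_k$ was still available when $i$ picked $s_i$. This reasoning concerns only the drafting phase, so the modified envy graph plays no role and we get
\[
v_i(A_j)\le(\theta+\delta)x\le\frac{\theta+\delta}{1+\delta}\,v_i(A_i).
\]

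\textbf{Step 3: combine.} Use $\mms_i\le\frac12 v_i(M')=\frac12\bigl(v_i(A_i)+v_i(A_j)\bigr)$. This yields
\[
\mms_i\le\frac12\Bigl(1+\frac{\theta+\delta}{1+\delta}\Bigr)v_i(A_i).
\]
For $\theta\ge1$, the map $\delta\mapsto\frac{\theta+\delta}{1+\delta}$ is nonincreasing on $[0,1]$, so it is at most $\theta$, attained at $\delta=0$. Therefore $\mms_i\le\frac{1+\theta}{2}v_i(A_i)$, which is the claim.

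There is no real obstacle here. The only point needing care is checking that the $N_2$ bound from \Cref{lem:other-apx} survives the switch to the $\eta$-modified envy graph. It does, because that bound uses only properties of $\mathcal{F}$ and of the drafting order.
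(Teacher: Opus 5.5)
Your proof is correct and follows the paper's argument essentially verbatim. You bound $v_i(A_j)\le(\theta+\delta)x\le\frac{\theta+\delta}{1+\delta}\,v_i(A_i)$ via the $N_2$ reasoning of \Cref{lem:other-apx}, use $\mms_i\le\frac12 v_i(M')$, and reduce to $\theta\ge 1$. The paper does the same, only combining the terms into $\frac{2+2\delta}{1+\theta+2\delta}$ before invoking $\theta\ge1$ rather than first bounding $\frac{\theta+\delta}{1+\delta}\le\theta$.
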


\begin{proof}

Since $j\in N_2$, it holds that $v_i(A_j) \le (\theta + \delta)x$. 
We have that

\begin{align*}
\mms_i &\le \frac{1}{2}v_i(M') \le \frac{1}{2}(v_i(A_i) + (\theta+\delta)x) \\
&\le \frac{1}{2}\Big(v_i(A_i) + \frac{\theta+\delta}{1+\delta}\,v_i(A_i)\Big) \le  \frac{1+\theta+2\delta}{2+2\delta} \, v_i(A_i).
\end{align*}
By rearranging,
\[
v_i(A_i) \ge \frac{2+2\delta}{1+\theta+2\delta} \cdot \mms_i \ge \frac{2}{1+\theta} \cdot \mms_i\,,
\]
where the last inequality holds since $\theta \ge 1$ and $\delta \ge 0$. 
\end{proof}

\begin{proposition}
\label{prop:n2_type3}
If $i\in R$, $n' = 2$ and $n_3 = 1$, then $v_i(A_i) \ge \frac{2+2\eta}{3+\eta} \cdot \mms_i$.
\end{proposition}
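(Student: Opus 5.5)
The plan is to reduce everything to the proportional-share bound, except in one structural case where a direct pigeonhole bound on $\mms_i$ is used instead.

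\textbf{Setup.} Here $M' = A_i \cup A_j$ and, since $j\in N_3$, $A_j=\{a,b\}$ is a two-item bundle that received its second and last item before $i$ received $s_i$. As in the $N_3$ case of \Cref{lem:other-apx}, $v_i(a),v_i(b)\le x$. Moreover, this bundle is never augmented by Envy-Cycle-Elimination. Hence, whenever agent $i$ holds a bundle $B_i\supseteq\{f_i,s_i\}$ and $\delta<\eta$, her outgoing edges in $H_{\mathcal{B},\eta}$ are the modified ones, with threshold $1+\eta$. I will split the argument according to whether agent $i$'s bundle is ever exchanged along a cycle.

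\textbf{Case (a): agent $i$ never exchanges her bundle.} Then $A_i$ consists of $f_i$, $s_i$ and the items added to $i$ while she was a source. I claim that in this case $\mms_i\le v_i(A_i)$, so the required bound holds with room to spare. The three items $f_i,a,b$ are each worth at most $x$ to $i$, and any $2$-partition of $M'$ has a bundle containing at most one of them. The remaining items of $M'$ are exactly $A_i\setminus\{f_i\}$, of value $v_i(A_i)-x$. So that bundle is worth at most $x+(v_i(A_i)-x)=v_i(A_i)$.

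\textbf{Case (b): agent $i$ exchanges her bundle at least once.} Consider the first such exchange. Just before it, $B_i\supseteq\{f_i,s_i\}$, so $v_i(B_i)\ge(1+\delta)x$.

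If $\delta\ge\eta$, I use only $v_i(A_i)\ge(1+\delta)x$ together with
\[
\mms_i\le\tfrac12 v_i(M')\le\tfrac12\bigl(v_i(A_i)+2x\bigr)\le\tfrac12\Bigl(1+\tfrac{2}{1+\delta}\Bigr)v_i(A_i).
\]
This gives $v_i(A_i)\ge\frac{2+2\delta}{3+\delta}\,\mms_i$. The map $\delta\mapsto\frac{2+2\delta}{3+\delta}$ is increasing, so the ratio is at least $\frac{2+2\eta}{3+\eta}$.

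If $\delta<\eta$, all four conditions defining the modified edges hold at the moment of the exchange. The fourth condition is witnessed by the owner of the $N_3$ bundle, which is still unaugmented. So the edge used by $i$ certifies that her new bundle is worth more than $(1+\eta)v_i(B_i)\ge(1+\eta)(1+\delta)x\ge(1+\eta)x$. By \Cref{thm:ece}, $v_i(A_i)\ge(1+\eta)x$. The same proportional computation, with $1+\eta$ in place of $1+\delta$, then yields $v_i(A_i)\ge\frac{2+2\eta}{3+\eta}\,\mms_i$.

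\textbf{Expected obstacle.} The main point needing care is the sub-case $\delta<\eta$ with an exchange. I must verify that the first exchange really happens through a modified edge, i.e.\ that all four conditions hold at that moment. The key point here is that $j$'s bundle, as an $N_3$ bundle, is never augmented, so the fourth condition remains true throughout the run.

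Case (a) should be straightforward once one notices that the pigeonhole argument over the three items of value at most $x$ avoids the proportional proxy altogether. That is exactly where the proxy would otherwise only give the weaker factor $\frac{2}{3+\eta}$.
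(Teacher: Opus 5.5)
Your proof is correct. The exchange case is essentially the paper's argument. For $\delta\ge\eta$ you both use the proportional bound with $x\le v_i(A_i)/(1+\delta)$. For $\delta<\eta$ you both observe that the first time $i$ gives up the bundle containing $\{f_i,s_i\}$, she does so along a modified edge, with the unaugmented $N_3$ bundle witnessing the fourth condition. Hence $v_i(A_i)\ge(1+\eta)x$ in both proofs.

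The genuine difference is in the other case. The paper splits on whether $A_i\supseteq\{f_i,s_i\}$. In that case it argues that, from $i$'s perspective, $(\{f_i,s_i\},\{f_j,s_j\})$ could have come from the picking sequence $ijji$. It quotes the known fact that this sequence is \mmsT for $i$ (Lemma 5.3 of \cite{amanatidis2020multiple}), and then lifts the guarantee to $A_i$ by repeated use of \Cref{lem:aMMSplusg}. You instead bound $\mms_i(2,M')$ in one step by pigeonhole over $f_i,a,b$, using that the rest of $M'$ is exactly $A_i\setminus\{f_i\}$.

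Your argument is self-contained and bypasses both the external picking-sequence lemma and the incremental \Cref{lem:aMMSplusg}. It also only needs $f_i\in A_i$, so it covers verbatim any run in which $i$ ends with a bundle containing $f_i$. Your split on ``never exchanged'' versus ``exchanged at least once'' is a valid cover too, and it makes immediate that the first exchange happens while $B_i\supseteq\{f_i,s_i\}$. The paper's route, in turn, has the advantage of fitting the same template (bound the \mmsT of the seed items, then lift) that it reuses in \Cref{prop:n3_case_II,prop:n3_case_III,prop:n3_case_VI}.

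One harmless slip in your closing remark: in case (a) the proportional proxy would give $\inf_{\delta\in[0,\eta)}\frac{2+2\delta}{3+\delta}=\frac{2}{3}$, not $\frac{2}{3+\eta}$. This is still too weak, so your point stands.
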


\begin{proof}
Recall that, since $i\in R$, $x \le  v_i(A_i) / (1+\delta)$; if $\delta \ge \eta$, this implies $x \le  v_i(A_i) / (1+\eta)$. Also, since $j\in N_3$, we have $v_i(A_j) \le 2x$.

We consider two cases, depending on whether $A_i$ contains $\{f_i, s_i\}$ or not. 

\begin{itemize}[leftmargin=20pt,itemsep=5pt]
    \item $A_i \supseteq \{f_i, s_i\}$: 
    First note that, from agent $i$'s 
perspective, the allocation $(\{f_i, s_i\}, \{f_j, s_j\})$ of these four items could equivalently have arisen 
from the picking sequence $ijji$. (That is, $i$ chooses her favorite item, then $j$ chooses her two favorite unallocated items and, finally $i$ gets the last item.) It is known, however, 
that this picking sequence guarantees that the allocation $(\{f_i, s_i\}, \{f_j, s_j\})$ is \mmsT for agent $i$; see, e.g., Lemma 5.3 of \cite{amanatidis2020multiple}. By repeatedly applying 
    Lemma~\ref{lem:aMMSplusg}---if needed---on top of this allocation, we get  $v_i(A_i) \ge \mms_i$. 

\item $A_i \not\supseteq \{f_i, s_i\}$: This means that the bundle containing $\{f_i, s_i\}$ was taken from agent $i$ during a cycle elimination. At the time, if $\delta < \eta$, agent $i$ met all the requirements of only having outgoing edges to agents she $(1+\eta)$-envies. If that was the case, she improved by a factor of $1+\eta$ and we have 
\[
v_i(A_i) \ge (1+\eta) (1+\delta) x \ge (1+\eta) x \implies x \le \frac{1}{1+\eta} v_i(A_i).
\]
Even if this was not the case, i.e., if $\delta \ge \eta$, as discussed in the beginning of the proof, we still have $x \le  v_i(A_i) / (1+\eta)$.

Now, we may bound the maximin share of $i$.
    \begin{align*}
    \mms_i & \le \frac{1}{2}v_i(M')  \le \frac{1}{2}\left[v_i(A_i) + 2x\right] \\ &\le \frac{1}{2}\Big[v_i(A_i) + \frac{2}{1+\eta}\, v_i(A_i)\Big] = \frac{3+\eta}{2(1+\eta)}\, v_i(A_i)
    \end{align*}
    By rearranging, we directly get  $v_i(A_i) \ge \frac{2+2\eta}{3+\eta} \cdot \mms_i$.\qedhere
\end{itemize}
\end{proof}

\begin{proposition}

\label{prop:n3_case_I}
If $i\in R$, $n' = 3$ and $n_1 = 2$, then $v_i(A_i) \ge \frac{3}{4} \cdot \mms_i$.
\end{proposition}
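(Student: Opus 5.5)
The plan is to bound $\mms_i$ by agent $i$'s proportional share $\frac{1}{3}v_i(M')$, exactly as in \Cref{prop:n2_type1}, but with a split on whether $\delta<\eta$ or $\delta\ge\eta$. Applying the argument of \Cref{prop:n2_type1} uniformly to both $j,k\in N_1$ is not enough. It gives $v_i(A_j),v_i(A_k)\le(1+\eta)v_i(A_i)+\delta x$, hence
\[
\mms_i\le \tfrac13\Big[(3+2\eta)+\tfrac{2\delta}{1+\delta}\Big]v_i(A_i),
\]
and at $\delta=1$ this yields only $\frac{3}{4+2\eta}<\frac34$. The fix is that the $(1+\eta)$ slack and a large $\delta$ cannot occur together. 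This is because the modified (only-$(1+\eta)$-envy) outgoing edges of $i$ in $H_{\mathcal{B},\eta}$ exist only when $v_i(s_i)<\eta\, v_i(f_i)$, i.e., $\delta<\eta$.

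\textbf{Step 1 (case $\delta\ge\eta$).} Here agent $i$ always has standard outgoing edges during Envy-Cycle-Elimination. Take the last item $g$ added to $A_j$, while the bundle belonged to some agent $k'$. At that moment $k'$ was a source of the graph used, so $i$ did not envy $A_j\setminus\{g\}$. Moreover, $v_i(g)\le\delta x$, because $g$ was in the pool when $i$ picked $s_i$. Using \Cref{thm:ece}, this gives $v_i(A_j)\le v_i(A_i)+\delta x$, and likewise for $k$. With $x\le v_i(A_i)/(1+\delta)$,
\[
\mms_i\le\tfrac13\Big[3+\tfrac{2\delta}{1+\delta}\Big]v_i(A_i)\le\tfrac43\, v_i(A_i),
\]
since $\frac{2\delta}{1+\delta}\le 1$ on $[0,1]$. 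This gives the factor $3/4$.

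\textbf{Step 2 (case $\delta<\eta$).} Now use the weaker bound $v_i(A_j),v_i(A_k)\le(1+\eta)v_i(A_i)+\delta x$, justified verbatim as in \Cref{prop:n2_type1}. Since $\frac{2\delta}{1+\delta}$ is increasing in $\delta$, we get
\[
\mms_i\le\tfrac13\Big[3+2\eta+\tfrac{2\eta}{1+\eta}\Big]v_i(A_i).
\]
It then remains to check numerically that $2\eta+\frac{2\eta}{1+\eta}\le 1$ for $\eta=\frac{\sqrt{57}-7}{4}\approx0.1375$. Indeed the left-hand side is about $0.52$, and the inequality holds for all $\eta\le 1/6$. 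This yields a factor of roughly $0.85\ge 3/4$.

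The main point needing care is Step 1's claim that the modified edges are never active for $i$ when $\delta\ge\eta$. This follows directly from the second bullet in the definition of $H_{\mathcal{B},\eta}$. The remaining work is routine arithmetic. Since these bounds hold for the reduced instance, the monotonicity reduction of \Cref{lem:monotonicity} transfers the guarantee to the original subset.
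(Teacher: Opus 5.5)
Your proposal is correct and follows essentially the same route as the paper. The paper also splits on $\delta\ge\eta$ (standard edges, so $v_i(A_j),v_i(A_k)\le v_i(A_i)+\delta x$, giving $\frac{3+3\delta}{3+5\delta}\ge\frac34$) versus $\delta<\eta$ (the $(1+\eta)$-slack bound, with the worst case at $\delta=\eta$), and your bound $\frac13\bigl[3+2\eta+\frac{2\eta}{1+\eta}\bigr]$ is algebraically identical to the paper's $\frac{3+2\eta+(5+2\eta)\eta}{3+3\eta}$.
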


\begin{proof}
We distinguish two cases based on $\delta$:

\begin{itemize}[leftmargin=20pt,itemsep=5pt]
    \item $\delta \ge \eta$: In this case, agent $i$ is not affected  by the use of the $\eta$-modified envy graph. Since both $j$ and $k$ are in $N_1$, we have 
    $v_i(A_j) \le v_i(A_i) + \delta x$ and $v_i(A_k) \le v_i(A_i) + \delta x$. So,

    \[
    \mms_i \le \frac{1}{3}\,v_i(M') \le \frac{1}{3}\left[3v_i(A_i) + 2\delta x\right] \le \frac{3+5\delta}{3+3\delta} \, v_i(A_i)
    \]
    By rearranging,  
    \[
    v_i(A_i) \ge \frac{3+3\delta}{3+5\delta} \cdot \mms_i\ge \frac{3}{4}\cdot \mms_i\,,
    \]
where the last inequality follows from the fact that $f(\delta) = \frac{3+3\delta}{3+5\delta}$ is decreasing in $\delta$ 
and, thus, is minimized for $\delta = 1$.
   \item $\delta < \eta$: Now agent $i$ \emph{may be} affected  by the use of the $\eta$-modified envy graph. 

    Therefore, for both $j, k\in N_1$ we have 
    $v_i(A_j) \le (1+\eta)v_i(A_i) + \delta x$ and $v_i(A_k) \le (1+\eta)v_i(A_i) + \delta x$; see also the proof of \Cref{prop:n2_type1}. Thus,

    \[
    \mms_i \le \frac{1}{3}\,v_i(M')\le \frac{1}{3}\left[(3+2\eta)v_i(A_i) + 2\delta x\right] \le \frac{3+2\eta+(5+2\eta)\delta}{3+3\delta} \, v_i(A_i).
    \]
    By rearranging,  
    \[
    v_i(A_i) \ge \frac{3+3\delta}{3+2\eta+(5+2\eta)\delta} \,\mms_i \ge \frac{3+3\eta}{3+2\eta+(5+2\eta)\eta} \,\mms_i > \frac{3}{4} \,\mms_i \,,
    \] 
where the second inequality follows from the fact that, for any $\eta\ge 0$, $f_1(\delta) = \frac{3+3\delta}{3+2\eta+(5+2\eta)\delta}$ is decreasing in $\delta$ 
and, thus, is minimized for $\delta = \eta$, whereas the third inequality follows from the fact that $f_2(\eta) = \frac{3+3\eta}{3+2\eta+(5+2\eta)\eta}$ is decreasing in $\eta$ 
and, given that $\eta \le 1/6$, a bound is $f_2(0.17)\approx 0.82 >3/4$.\qedhere
\end{itemize}
\end{proof}

\begin{proposition}
\label{prop:n3_case_II}
If $i\in R$, $n' = 3$ and $n_2 = 2$, then $v_i(A_i) \ge \frac{2+\theta}{3\theta} \cdot \mms_i$.
\end{proposition}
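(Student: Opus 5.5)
The plan is to combine two upper bounds on $\mms_i$ for the reduced three-agent instance. The first is the proportional bound already used in \Cref{lem:other-apx}. The second comes from removing the bundle of the maximin share defining partition that contains a large item. Neither bound alone gives $\frac{2+\theta}{3\theta}$; the proportional bound alone only gives $\frac{3}{1+2\theta}$ at $\delta=0$. So the main work is to balance the two.

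Since $j,k\in N_2$, their bundles were never augmented by Envy-Cycle-Elimination. Hence $A_j=\{f_j,s_j\}$ and $A_k=\{f_k,s_k\}$ with $v_i(f_j),v_i(f_k)\le\theta x$ and $v_i(s_j),v_i(s_k)\le\delta x$, as in the $N_2$ case of \Cref{lem:other-apx}. Also $v_i(A_i)=:a\ge(1+\delta)x$. Write $y_j=v_i(f_j)\ge y_k=v_i(f_k)$. Then
\[
\mms_i\le \tfrac13\big(a+y_j+y_k+2\delta x\big).
\]
For the second bound, fix a $3$-maximin share defining partition $\mathcal{C}$ of $M'$ for $i$. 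The bundle of $\mathcal{C}$ containing $f_j$ is worth at least $y_j$. The other two bundles are each worth at least $\mms_i$. Therefore
\[
2\,\mms_i\le v_i(M')-y_j\le a+y_k+2\delta x .
\]
This is the same reasoning as in \Cref{lem:monotonicity}: $\mms_i(3,M')\le\mms_i(2,M'\setminus C)$ for the bundle $C$ of $\mathcal{C}$ that contains $f_j$.

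The target is $a\ge\frac{2+\theta}{3\theta}\mms_i$, that is, $\mms_i\le\frac{3\theta}{2+\theta}a$. I would take the best convex combination of the two bounds. Putting weight $\lambda$ on the first bound and $1-\lambda$ on the second gives
\[
\mms_i\le\Big(\tfrac{\lambda}{3}+\tfrac{1-\lambda}{2}\Big)(a+y_k+2\delta x)+\tfrac{\lambda}{3}\,y_j .
\]
Now I substitute $y_j,y_k\le\theta x$ and $x\le a/(1+\delta)$. This yields an inequality of the form $\mms_i\le g(\lambda,\delta,\theta)\,a$. I then choose $\lambda$, possibly depending on $\delta$, so that $g\le\frac{3\theta}{2+\theta}$ for all $\delta\in[0,1]$. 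At $\delta=0$ and $a=x$ the second bound alone gives $\frac{1+\theta}{2}x$. This is already below $\frac{3\theta}{2+\theta}x\approx1.388x$ for $\theta=\frac{\sqrt{10}+2}{3}$, so the small-$\delta$ regime is comfortable. The binding regime is large $\delta$. There the proportional bound gives $\frac{3(1+\delta)}{1+2\theta+3\delta}$, which at $\delta=1$ equals $\frac{6}{4+2\theta}\ge\frac{2+\theta}{3\theta}$ for our $\theta$. So I expect the first bound to handle $\delta$ near $1$ and the second to handle $\delta$ near $0$, with a crossover in between.

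The main obstacle is making this case split clean and tight, since I will not simply take the minimum at each endpoint. The worst case should be where the two bounds coincide, with $y_j=y_k=\theta x$ and $a=(1+\delta)x$ extremal. The min of two linear-fractional functions in $\delta$ must be verified to stay above $\frac{2+\theta}{3\theta}$ on the whole of $[0,1]$. I would compute the crossover point $\delta^*$, check that the maximum of $\min\{\cdot,\cdot\}$ over $\delta$ is attained there, and confirm that the resulting value is at most $\frac{3\theta}{2+\theta}$. This is also where the exact form $\frac{2+\theta}{3\theta}$ should emerge. If the pure two-bound argument falls short at the crossover, I would strengthen the second bound using $s_j,s_k$. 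For instance, some bundle of $\mathcal{C}$ containing neither $f_j$ nor $f_k$ is worth at most $v_i(M')-y_j-y_k$, which gives $\mms_i\le a+2\delta x$. I would add this as a third bound in the minimum.
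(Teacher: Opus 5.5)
Your plan is correct, and for the $\theta$ used here your first two bounds already close it. Substituting $y_j=y_k=\theta x$ and $x\le a/(1+\delta)$ gives $\mms_i\le g_1(\delta)\,a$ and $\mms_i\le g_2(\delta)\,a$, with $g_1(\delta)=\frac{1+2\theta+3\delta}{3(1+\delta)}$ and $g_2(\delta)=\frac{1+\theta+3\delta}{2(1+\delta)}$. For $\theta>1$, $g_1$ is decreasing; for $\theta\le 2$, $g_2$ is nondecreasing. They cross at $\delta^*=\frac{\theta-1}{3}$, where both equal $\frac{\theta}{1+\delta^*}=\frac{3\theta}{2+\theta}$, which is exactly the target.

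Three small corrections to your write-up follow from this. It is $\min\{g_1,g_2\}$ that must stay below $\frac{3\theta}{2+\theta}$. The binding point is $\delta^*\approx 0.24$, not large $\delta$. The pointwise minimum suffices, so no convex combination is needed. For $\theta>2$, $g_2(0)$ exceeds the target, but your third bound $\mms_i\le a+2\delta x$ together with the first one works for every $\theta\ge 1$, with the same crossing point.

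The paper argues differently. It restricts to the six drafted items $T=\{f_i,s_i,f_j,s_j,f_k,s_k\}$ and bounds $\mms_i(3,T)$ combinatorially:
\begin{itemize}
\item If $\mms_i(3,T)>\theta x$, no bundle can be a singleton, so all bundles are pairs and one is worth at most $(1+\delta)x$.
\item Otherwise $\mms_i(3,T)\le\theta x$.
\item In all cases some bundle avoids the two best items, so $\mms_i(3,T)\le(1+3\delta)x$; this is your third bound at $a=(1+\delta)x$.
\end{itemize}
Balancing $\theta$ against $1+3\delta$ gives the same $\delta^*$. The paper then transfers the bound from $T$ to $M'$ by repeated application of \Cref{lem:aMMSplusg}. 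That step relies on $A_i$ being $\{f_i,s_i\}$ plus the items added by Envy-Cycle-Elimination.

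Your route replaces the dichotomy on $\mms_i(3,T)$ with the proportional-share bound. That bound is weaker than $\theta x$ when $\delta>\delta^*$, but it suffices there because $g_1$ is decreasing. You also work on $M'$ directly, with all extra items absorbed into $a$. So you need neither \Cref{lem:aMMSplusg} nor any fact about $A_i$ beyond $a\ge(1+\delta)x$.

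The paper's route, in turn, yields the stronger statement $v_i(\{f_i,s_i\})\ge\frac{2+\theta}{3\theta}\,\mms_i(3,T)$ about the drafted items alone. This fits its picking-sequence analysis in \Cref{prop:n3_case_III,prop:n3_case_VI}.
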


\begin{proof}
Like \Cref{prop:n2_type3}, this is one of the cases where we need to resort to the definition of the maximin share, rather than bounding it by the average value. 
Since $n_2 = 2$, both $j, k \in N_2$: each has received two items such that  
$\max\{v_i(f_j), v_i(f_k)\} \le \theta x$ and $\max\{v_i(s_j), v_i(s_k)\} \le \delta x$.
From agent $i$'s perspective, this allocation $(\{f_i, s_i\}, \{f_j, s_j\}, \{f_k, s_k\})$ of these six items could equivalently have been the result of choosing according to the picking sequence $jkiikj$. (That is, $j$ chooses her favorite item, then $k$ chooses her favorite unallocated item, and so on.)

Let $T = \{g_1, \dots, g_6\}$ be these goods, indexed according to agent $i$'s preference with $g_1$ being her favorite among them.
\begin{itemize}[leftmargin=20pt,itemsep=5pt]
    \item If $\mms_i(3, T) > \theta x$: In any $3$-maximin share defining partition for $i$, $g_1$ and $g_2$ 
    are never singletons.
    Since agent $i$ sees 
    the item values as $y\le\theta x, z\le \theta x, x, \delta x, w\le \delta x, t \le \delta x$, it holds that in any partition 
    the worst bundle has value at most $(1+\delta)x$. 
    Hence, $\mms_i(3,T) \le (1+\delta)x \le v_i(\{f_i, s_i\})$, i.e., the approximation factor in this case is $1$.
    \item If $\mms_i(3, T) \le \theta x$: We consider cases based on the potential total value of items $g_3, \dots, g_6$, which is $1+3\delta$.
    \begin{itemize}[leftmargin=20pt,topsep=5pt,itemsep=5pt]
        \item $1+3\delta < \theta$: Then $\mms_i(3,T) \le (1+3\delta)x$ but also $\delta < {(\theta-1)}/{3}$. Therefore,
        \begin{align*}
        v_i(\{f_i, s_i\}) &= (1+\delta)\,x = \frac{1+\delta}{1+3\delta} \, (1+3\delta) \, x \ge \frac{1+\delta}{1+3\delta} \,\mms_i(3,T) \\
        &\ge \frac{1+\frac{\theta-1}{3}}{1+3\cdot\frac{\theta-1}{3}}\cdot \mms_i(3,T) = \frac{2+\theta}{3\theta}\cdot \mms_i(3,T)\,,
        \end{align*}
        where the last inequality holds since $f(\delta) = \frac{1+\delta}{1+3\delta}$ is decreasing in $\delta$, so we can substitute the upper bound for $\delta$, namely $\frac{\theta-1}{3}$, to obtain a valid lower bound.
        \item $1+3\delta \ge \theta$: Then $\mms_i(3,T) \le \theta x$ and $\delta \ge \frac{\theta-1}{3}$. Therefore,
        \begin{equation*}
        v_i(\{f_i, s_i\}) = (1+\delta)\,x = \frac{1+\delta}{\theta} \, \theta \, x \ge \frac{1+\frac{\theta-1}{3}}{\theta} \cdot \mms_i(3,T) = \frac{2+\theta}{3\theta} \cdot \mms_i(3,T)\,,
        \end{equation*}
        where the last inequality holds since $f(\delta) = \frac{1+\delta}{\theta}$ is increasing in $\delta$, so it is minimized by setting $\delta = \frac{\theta-1}{3}$.
    \end{itemize}
\end{itemize}
In all cases $v_i(\{f_i, s_i\}) \ge \frac{2+\theta}{3\theta} \cdot \mms_i(3,T)$. By repeatedly applying  Lemma~\ref{lem:aMMSplusg} with $\alpha = \frac{2+\theta}{3\theta}$, if necessary, we conclude that $v_i(A_i) \ge \frac{2+\theta}{3\theta} \cdot \mms_i(n',M')$.
\end{proof}

\begin{proposition}
\label{prop:n3_case_III}
If $i\in R$, $n' = 3$ and $n_3 = 2$, then  
$
v_i(A_i) \ge  \mms_i
$.
\end{proposition}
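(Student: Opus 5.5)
The plan is to compare agent $i$'s bundle directly with the maximin share, not with the proportional share. When both other agents are in $N_3$, agent $i$ values every item of the other two bundles at most $x = v_i(f_i)$. This should let us argue that $\{f_i,s_i\}$ already beats the $3$-maximin share of the six relevant items, and then extend the guarantee to the extra items with \Cref{lem:aMMSplusg}.

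\textbf{Step 1 (structure).} Since $j,k\in N_3$, we have $A_j=\{f_{j'},s_{j'}\}$ and $A_k=\{f_{k'},s_{k'}\}$. These are bundles created in lines \ref{line:4}--\ref{line:6} whose second items were drafted before $s_i$, and which were never augmented. As in the $N_3$ bullet of the proof of \Cref{lem:other-apx}, $i$ values each of these four items at most $x$, by the topological ordering $\sigma$ and part d) of \Cref{cor:sanpoulia}. Let $T=\{f_i,s_i\}\cup A_j\cup A_k$ and let $a_1\ge\dots\ge a_6$ be $i$'s values on $T$. Then $v_i(f_i)=x=a_1$, and trivially $v_i(s_i)\ge a_6$.

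\textbf{Step 2 (key combinatorial bound).} I will show that $\mms_i(3,T)\le a_1+a_6$, as follows. Take any partition of $T$ into three bundles and look at the bundle $C$ containing $g_6$.
\begin{itemize}
\item If $|C|=1$, then $v_i(C)=a_6$.
\item If $|C|=2$, then $v_i(C)\le a_1+a_6$.
\item If $|C|\ge 3$, the other two bundles share at most $3$ items, so one of them has at most one item and value at most $a_1$.
\end{itemize}
In every case the minimum bundle is at most $a_1+a_6\le v_i(\{f_i,s_i\})$. This is the analogue of the picking-sequence fact (Lemma 5.3 of \cite{amanatidis2020multiple}) used in \Cref{prop:n2_type3}, now for the sequence $i\,jjkk\,i$.

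\textbf{Step 3 (adding the remaining items).} Suppose $A_i\supseteq\{f_i,s_i\}$. Then $M'=T\cup(A_i\setminus\{f_i,s_i\})$, and all extra items go to $i$. Starting from the allocation $(\{f_i,s_i\},A_j,A_k)$, which is \mmsT for $i$ by Step 2, I apply \Cref{lem:aMMSplusg} repeatedly with $\alpha=1$ to get $v_i(A_i)\ge\mms_i$.

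\textbf{Main obstacle.} The hard case is when $A_i\not\supseteq\{f_i,s_i\}$, i.e., $i$ gave away her drafted bundle in a cycle elimination. Then $M'$ no longer contains $f_i,s_i$, and $A_i$ is an arbitrary bundle with $V:=v_i(A_i)\ge(1+\delta)x\ge x$. Here I would first try the proportional bound $\mms_i\le (V+4x)/3\le V$, which holds whenever $V\ge 2x$.

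For $x\le V<2x$, I would redo the Step 2 argument with $A_i$ in the role of the divisible part. The bundle of an optimal $3$-partition that contains the fewest of the four $\le x$ items holds at most one of them. If it holds none, it consists only of pieces of $A_i$, so its value is at most $V$. If it holds exactly one, the argument is messier. Each of the other two bundles then has at least one of the four items, and the bundle total is at most $x$ plus its part of $A_i$. I would combine this with the averaging bound $\min\le (V+4x)/3$ to force the minimum below $V$.

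If this direct argument fails, I would use the fact that this swap can only occur once $i$'s value has strictly increased, i.e., $V>v_i(\{f_i,s_i\})\ge a_1+a_6$. I would then treat $A_i$ as a single ``virtual'' item of value $V$ and reapply the Step 2 counting to a $5$-item instance. This counting is the step I expect to need the most care.
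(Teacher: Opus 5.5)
\textbf{Where your proposal matches the paper.} Steps 1--3 are correct and are essentially the paper's proof. The paper also uses that $v_i(f_i)=x$ is the largest value in $T$. It bounds $\mms_i(3,T)$ by splitting on whether a maximin share defining partition has a singleton (worth at most $v_i(g_1)=x$) or consists of three pairs (the pair containing $g_6$ is worth at most $v_i(g_1)+v_i(g_6)\le v_i(\{f_i,s_i\})$). It then applies \Cref{lem:aMMSplusg}. Your split on the size of the bundle containing $g_6$ is an equivalent organization of the same argument.

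\textbf{The gap.} You leave the case you call the main obstacle open, and the routes you sketch for it cannot succeed. In the generality you set it up, the inequality $V\ge\mms_i$ is simply false: an arbitrary bundle $A_i$ with $x\le V<2x$, next to four goods worth at most $x$. Take four goods $a,b,c,d$ worth $x$ each and $A_i=\{h_1,h_2\}$ with $v_i(h_1)=v_i(h_2)=\tfrac34 x$. The partition $\{\{a,b\},\{c,h_1\},\{d,h_2\}\}$ gives $\mms_i\ge\tfrac74 x>\tfrac32 x=V$. So no refinement of the averaging bound can close this case. Treating $A_i$ as a single virtual item is also invalid, precisely because a maximin partition may split $A_i$, as in this example.

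\textbf{The missing idea.} The case is vacuous. Since $n=3$ and $n'=3$, we have $N'=N=\{i,j,k\}$. By your own Step 1, $A_j$ and $A_k$ are the untouched drafted bundles $\{f_{j'},s_{j'}\}$ and $\{f_{k'},s_{k'}\}$. Here $j',k'\neq i$, because their second goods were drafted strictly before $s_i$. Hence these bundles are disjoint from $\{f_i,s_i\}$. Every good ends up with one of $i,j,k$, so $\{f_i,s_i\}\subseteq A_i$ necessarily. This is exactly why the paper must treat $A_i\not\supseteq\{f_i,s_i\}$ in \Cref{prop:n2_type3}, where an agent outside $N'$ can hold $i$'s drafted bundle, but not here. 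With this one observation, your Steps 1--3 already form a complete proof, identical in substance to the paper's.
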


\begin{proof}
Again, we directly argue about the maximin share, instead of resorting to the average value. 
Since $n_3 = 2$, both $j, k\in N_3$, meaning their items satisfy
$
\max\{v_i(f_j), v_i(s_j), v_i(f_k), v_i(s_k)\} \le x
$.

From agent $i$'s perspective, the allocation $(\{f_i, s_i\}, \{f_j, s_j\}, \{f_k, s_k\})$ of these six items could equivalently have been the result of choosing according to the picking sequence $ijkkji$. (That is, $i$ chooses her favorite item, then $j$ chooses her favorite unallocated item, and so on.) Let $T = \{g_1, \dots, g_6\}$ be a renaming of the goods in $\{f_i, s_i, f_j, s_j, f_k, s_k\}$, indexed according to agent $i$'s preference with $g_1$ being her favorite among them.
Consider a $3$-maximin share defining partition $\mathcal{C}$ for agent $i$. Then, either every 
bundle in the partition contains at least two items, or there exists at least 
one a singleton.
\begin{itemize}[leftmargin=20pt,itemsep=5pt]
    \item Suppose there exists a singleton bundle $\{g\}$ in $\mathcal{C}$. 
    Then, $\mms_i(3, T) \le v_i(g)$. Since $g_1$ is the most preferred 
    item of $i$ in $T$, we have $v_i(g) \le v_i(g_1) = v_i(f_i)$. Thus, $\mms_i(3, T) \le v_i(f_i)  \le v_i(\{f_i, s_i\})$.
    
    \item Suppose that every bundle in $\mathcal{C}$ contains exactly two items 
    of $T$. One of the bundles in $\mathcal{C}$ is of the form $\{g_6, g_y\}$, where $y\in[5]$. Since $g_1$ is the most valuable item of $i$ in $T$, pairing 
    $g_6$ with $g_1$ yields the highest possible value for such a bundle. Also, recall that $v_i(g_1) = v_i(f_i)$ and $v_i(g_6) \le v_i(s_i)$. Therefore,
    $\mms_i(3, T) \le v_i(\{g_6, g_y\}) \le v_i(\{g_6, g_1\}) \le v_i(\{f_i, s_i\})$.
\end{itemize}
In any case, $v_i(\{f_i, s_i\}) \ge \mms_i(3, T)$. By repeatedly applying  Lemma~\ref{lem:aMMSplusg}, if necessary, we conclude that $v_i(A_i) \ge  \mms_i(n',M')$. \end{proof}

\begin{proposition}
\label{prop:n3_case_IV}
If $i\in R$, $n' = 3$ and $n_1 = n_2 = 1$, then $v_i(A_i) \ge \frac{3}{4} \cdot \mms_i$.
\end{proposition}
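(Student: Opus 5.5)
The plan is the same as in \Cref{prop:n3_case_I}: bound $\mms_i$ by the proportional share $\frac13 v_i(M')$ and split into two cases according to whether $\delta\ge\eta$ or $\delta<\eta$. Let $j\in N_1$ and $k\in N_2$. As in the proof of \Cref{lem:other-apx}, the $N_2$ bound $v_i(A_k)\le(\theta+\delta)x$ is unaffected by the modified envy graph. The reason is that $A_k=\{f_k,s_k\}$ was never touched by Envy-Cycle-Elimination, and parts c) of \Cref{cor:sanpoulia} and the choice of $s_i$ give the bound directly. For $j\in N_1$, the argument in \Cref{prop:n2_type1} gives $v_i(A_j)\le(1+\eta)v_i(A_i)+\delta x$ in general. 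It gives the sharper $v_i(A_j)\le v_i(A_i)+\delta x$ whenever $i$'s outgoing edges were never modified, which happens in particular when $\delta\ge\eta$. Throughout I use $x\le v_i(A_i)/(1+\delta)$.

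\emph{Case $\delta\ge\eta$.} Summing the three bundles gives
\[
\mms_i\le\tfrac13\Big[2+\tfrac{\theta+2\delta}{1+\delta}\Big]v_i(A_i).
\]
The target ratio $3/4$ requires the bracket to be at most $4$, i.e.\ $\frac{\theta+2\delta}{1+\delta}\le 2$. This is equivalent to $\theta\le 2$, which holds for $\theta=\frac{\sqrt{10}+2}{3}\approx1.72$ and every $\delta\in[0,1]$.

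\emph{Case $\delta<\eta$.} Now
\[
\mms_i\le\tfrac13\Big[2+\eta+\tfrac{\theta+2\delta}{1+\delta}\Big]v_i(A_i).
\]
Since $\theta<2$, the map $\delta\mapsto\frac{\theta+2\delta}{1+\delta}$ is increasing, so the worst case is $\delta=\eta$. The condition that the bracket is at most $4$ becomes $\frac{\theta+2\eta}{1+\eta}\le 2-\eta$, i.e.\ $\theta\le 2-\eta-\eta^2$. With $\eta\approx0.1375$ the right-hand side is about $1.84$, which exceeds $\theta\approx1.72$.

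In both cases we get $v_i(A_i)\ge\frac34\,\mms_i(n',M')$, and the guarantee for the original subset $S$ then follows from the singleton-removal reduction. The main obstacle is the second case. There the $(1+\eta)$ slack coming from the modified graph and the large $N_2$ bundle act at the same time, and the inequality $\theta\le2-\eta-\eta^2$ has to be checked carefully. The $N_2$ bound also needs to be re-verified, because agent $i$'s modified edges do not influence how $A_k$ was formed.
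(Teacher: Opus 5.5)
Your proof is correct and follows the paper's argument: the proportional-share bound, the same bounds $v_i(A_k)\le(\theta+\delta)x$ and $v_i(A_j)\le(1+\eta)v_i(A_i)+\delta x$ (sharpened to $v_i(A_i)+\delta x$ when $\delta\ge\eta$), the same split on $\delta$, and monotonicity in $\delta$, which is valid because $\theta<2$. The only cosmetic difference is that you check the $\delta<\eta$ case with the exact $\eta$, reducing it to $\theta\le 2-\eta-\eta^2$; the paper instead substitutes the cruder bound $\eta\le 1/6$ to get $\frac{126}{103+36\theta}\ge\frac34$ for $\theta\le 1.8$.
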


\begin{proof}
Without loss of generality, assume that $j\in N_1$ and $k\in N_2$. The upper bound $v_i(A_k)\le (\theta +\delta) \, x$ we have  derived before (see, e.g., \Cref{prop:n2_type2}) holds in any case, but upper bounding  $v_i(A_j)$ depends on the value of $\delta$ (see also \Cref{prop:n3_case_I}):

\begin{itemize}[leftmargin=20pt,itemsep=5pt]
    \item If $\delta \ge \eta$, then agent $i$ is not affected  by the use of the $\eta$-modified envy graph and $v_i(A_j) \le v_i(A_i) + \delta x$. Thus, we get
    \[
    \mms_i \le \frac{1}{3}\,v_i(M') \le \frac{1}{3}\left[2v_i(A_i) + \delta x + (\theta+\delta)x\right] \le \frac{2+\theta+4\delta}{3+3\delta} v_i(A_i). 
    \]

By rearranging, 
\[v_i(A_i) \ge \frac{3+3\delta}{2+\theta+4\delta}\cdot \mms_i \ge \frac{6}{6+\theta}\cdot \mms_i \ge \frac{3}{4}\cdot \mms_i\,,\]
where the second inequality holds  because, for any $\theta\in[1,1.8]$, $f(\delta) = \frac{3+3\delta}{2+\theta+4\delta}$ is decreasing in $\delta$ 
and, thus, is minimized for $\delta = 1$, while the third just uses the fact that $\theta \le 2$.

    \item If $\delta < \eta$, then agent $i$ may be affected  by the use of the $\eta$-modified envy graph and  $v_i(A_j) \le (1+\eta)v_i(A_i) + \delta x$. Thus, we have
    \[
    \mms_i \le \frac{1}{3}\left[(2+\eta)v_i(A_i) + \delta x + (\theta+\delta)x\right] \le \frac{2+\theta+\eta+(4+\eta)\delta}{3+3\delta}\, v_i(A_i).
    \]
    By rearranging, 
    \[v_i(A_i) \ge \frac{3+3\delta}{2+\theta+\eta+(4+\eta)\delta} \cdot \mms_i \ge \frac{126}{103+ 36\,\theta} \cdot \mms_i \ge \frac{3}{4} \cdot \mms_i,
    \]

where the second inequality follows from the fact that, for any $\theta\in [1,1.8], \eta\ge 0$, $f_(\delta) = \frac{3+3\delta}{2+\theta+\eta+(4+\eta)\delta}$ is decreasing in $\delta$ and $\eta$ 
and, thus, is minimized for $\delta = \eta = 1/6$, whereas the third inequality just uses the fact that $\theta\le 1.8$.\qedhere
\end{itemize}
\end{proof}

\begin{proposition}
\label{prop:n3_case_V}
If $i\in R$, $n' = 3$ and $n_1 = n_3 = 1$, then $v_i(A_i) \ge \frac{3}{4+\eta} \cdot \mms_i$.
\end{proposition}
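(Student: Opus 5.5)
The plan is to bound $\mms_i$ by the proportional share $\frac{1}{3}v_i(M')$, as in \Cref{prop:n2_type1} and \Cref{prop:n3_case_I}. A single uniform estimate looks sufficient, so no case split on $\delta$ versus $\eta$ should be needed. Without loss of generality, let $j\in N_1$ and $k\in N_3$.

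First, I will record the bounds on the two other bundles.

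\begin{itemize}[leftmargin=20pt,itemsep=4pt]
\item For $k\in N_3$, the argument in the proof of \Cref{lem:other-apx} gives $v_i(A_k)\le 2x$. Neither $f_k$ nor $s_k$ is worth more than $f_i$ to agent $i$, and $A_k=\{f_k,s_k\}$ is never augmented, so the modified envy graph plays no role here.
\item For $j\in N_1$, I will reuse the argument from the proof of \Cref{prop:n2_type1}. Let $g$ be the last item added to $A_j$. When $g$ was added, either $i$ had standard outgoing edges, or she had modified edges and did not $(1+\eta)$-envy $A_j\setminus\{g\}$. Since $v_i(A_i)$ only increases during Envy-Cycle-Elimination, in both cases $v_i(A_j\setminus\{g\})\le(1+\eta)v_i(A_i)$. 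Also $g$ was in the pool when $i$ picked $s_i$, so $v_i(g)\le\delta x$. Hence $v_i(A_j)\le(1+\eta)v_i(A_i)+\delta x$, regardless of $\delta$.
\end{itemize}

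Next, I combine these with $x\le v_i(A_i)/(1+\delta)$, which follows from \Cref{thm:ece} as noted above:
\[
\mms_i\le\frac{1}{3}\Big[(2+\eta)\,v_i(A_i)+(2+\delta)\,x\Big]\le\frac{1}{3}\Big[(2+\eta)+\frac{2+\delta}{1+\delta}\Big]v_i(A_i)\le\frac{4+\eta}{3}\,v_i(A_i).
\]
The last step uses that $\frac{2+\delta}{1+\delta}$ is decreasing on $[0,1]$, so it is at most $2$. Rearranging gives $v_i(A_i)\ge\frac{3}{4+\eta}\cdot\mms_i$, as claimed.

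The main point to verify is the $N_1$ bound. One must check that, whenever a modified edge was in force, the comparison is against $i$'s bundle at that time, which is dominated by the final $A_i$, and that the item completing $A_j$ was indeed available when $s_i$ was drafted.

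This worst case ($\delta\to0$, with $i$ keeping $\{f_i,s_i\}$) is exactly where the factor $\frac{3}{4+\eta}$ is attained. If needed, one could sharpen it by observing that when $i$ loses $\{f_i,s_i\}$ while $k\in N_3$ is unaugmented she gains a factor $1+\eta$, but this is not required for the stated bound.
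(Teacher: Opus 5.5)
Your proof is correct and follows the paper's argument. The paper also bounds $\mms_i$ by $\frac{1}{3}v_i(M')$ using $v_i(A_k)\le 2x$ and $v_i(A_j)\le(1+\eta)v_i(A_i)+\delta x$, and it reaches the same expression $\frac{4+\eta+(3+\eta)\delta}{3+3\delta}$, minimized at $\delta=0$. The only difference is that the paper first splits on $\delta\ge\eta$ (where the unmodified bound gives $\frac{3}{4}$) versus $\delta<\eta$, whereas you use the weaker $N_1$ bound throughout; this is valid and suffices because $\frac{3}{4+\eta}\le\frac{3}{4}$.
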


\begin{proof}
Without loss of generality, assume that $j\in N_1$ and $k\in N_3$. The upper bound $v_i(A_k)\le 2 x$ we have  derived before (see, e.g., \Cref{prop:n2_type3}) holds in any case, but upper bounding  $v_i(A_j)$ depends on the value of $\delta$ (see also \Cref{prop:n3_case_I}):

\begin{itemize}[leftmargin=20pt,itemsep=5pt]
    \item If $\delta \ge \eta$, then agent $i$ is not affected  by the use of the $\eta$-modified envy graph and $v_i(A_j) \le v_i(A_i) + \delta x$. Thus, we get
    \[\mms_i \le \frac{1}{3}\,v_i(M') \le \frac{1}{3}\left[2v_i(A_i) + \delta x + 2x\right] \le \frac{4+3\delta}{3+3\delta} v_i(A_i). 
    \]
    By rearranging, 
\[v_i(A_i) \ge \frac{3+3\delta}{4+3\delta}\cdot \mms_i \ge \frac{3}{4}\cdot \mms_i\,,\]
where the last inequality trivially holds for any $\delta \ge 0$.
\item If $\delta < \eta$, then agent $i$ may be affected  by the use of the $\eta$-modified envy graph and  $v_i(A_j) \le (1+\eta)v_i(A_i) + \delta x$. Thus, we have
    \[
    \mms_i \le \frac{1}{3}\left[(2+\eta)v_i(A_i) + \delta x + 2x\right] \le \frac{4+\eta+(3+\eta)\delta}{3+3\delta}\, v_i(A_i).
    \]
    By rearranging, 
    \[v_i(A_i) \ge \frac{3+3\delta}{4+\eta+(3+\eta)\delta} \cdot \mms_i \ge \frac{3}{4+\eta} \cdot \mms_i,
    \]
    where the last inequality follows by observing that, for any $\eta\ge 0$, $f(\delta) = \frac{3+3\delta}{4+\eta+(3+\eta)\delta}$ is increasing in $\delta$ 
and, thus, is minimized for $\delta = 0$.
\end{itemize}
Since $\frac{3}{4+\eta} \le \frac{3}{4}$, the overall approximation ratio in this case is $\frac{3}{4+\eta}$. 
\end{proof}

\begin{proposition}
\label{prop:n3_case_VI}
If $i\in R$, $n' = 3$ and $n_2 = n_3 = 1$, then $v_i(A_i) \ge \frac{3}{4} \cdot \mms_i$.
\end{proposition}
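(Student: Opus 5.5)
The bounds used so far are $v_i(A_j)\le(\theta+\delta)x$ for $j\in N_2$, $v_i(A_k)\le 2x$ for $k\in N_3$, and $v_i(A_i)\ge(1+\delta)x$. With these alone, the proportional-share bound gives only
\[
\frac{3(1+\delta)}{3+\theta+2\delta},
\]
which drops to $3/(3+\theta)<3/4$ at $\delta=0$. So I would split on $\delta$: use the average value when $\delta$ is large, and argue directly about the maximin share, as in \Cref{prop:n2_type3,prop:n3_case_II,prop:n3_case_III}, when $\delta$ is small.

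\emph{Large $\delta$ ($\delta\ge 1/2$).} Here $\mms_i\le\frac13 v_i(M')$, and the three bounds above give
\[
v_i(A_i)\ge\frac{3+3\delta}{3+\theta+2\delta}\cdot\mms_i .
\]
The numerator of the $\delta$-derivative of this ratio is proportional to $3+3\theta>0$, so the ratio is increasing in $\delta$. At $\delta=1/2$ it equals $\frac{9/2}{4+\theta}\approx 0.787\ge 3/4$, using $\theta\le 1.8$.

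\emph{Small $\delta$ ($\delta\le 1/2$).} Since neither $j$ nor $k$ is in $N_1$, their bundles were never augmented by Envy-Cycle-Elimination, so $A_j=\{f_j,s_j\}$ and $A_k=\{f_k,s_k\}$. From $i$'s perspective the item values are:
\begin{itemize}[leftmargin=20pt,itemsep=3pt]
\item $f_i$ has value $x$ and $s_i$ has value $\delta x$;
\item $v_i(f_j)\le\theta x$ and $v_i(s_j)\le\delta x$, because $j\in N_2$;
\item $v_i(f_k)\le x$ and $v_i(s_k)\le x$, because $k\in N_3$.
\end{itemize}
Let $T$ be these six goods. I would show $\mms_i(3,T)\le(1+2\delta)x$ by a pigeonhole argument. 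Only $f_i,f_k,s_k$ (and possibly $f_j$) have value above $\delta x$. In any $3$-partition of $T$, at most one bundle contains $f_j$. Among the remaining two bundles, the three items $f_i,f_k,s_k$ cannot all go to both, so one of these two bundles contains at most one of them. That bundle consists of at most one item of value $\le x$ plus items from $\{s_i,s_j\}$, so its value is at most $(1+2\delta)x$. (If $f_j$'s bundle also absorbs some of $f_i,f_k,s_k$, the conclusion only gets easier.) Hence
\[
v_i(\{f_i,s_i\})\ge\frac{1+\delta}{1+2\delta}\,\mms_i(3,T)\ge\frac34\,\mms_i(3,T)
\quad\text{for }\delta\le 1/2 .
\]
If $A_i\supseteq\{f_i,s_i\}$, then repeatedly applying \Cref{lem:aMMSplusg} with $\alpha=3/4$ yields $v_i(A_i)\ge\frac34\,\mms_i(n',M')$.

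\emph{Main obstacle.} The hard case is when $A_i\not\supseteq\{f_i,s_i\}$, i.e., $i$ swapped bundles during a cycle elimination, so \Cref{lem:aMMSplusg} cannot be applied on top of $T$. I would first try to show that the pigeonhole argument transfers if $A_i$ is treated as a single ``block'' of value $V\ge(1+\delta)x$. The bundles of $j$ and $k$ still contribute only the four items above, so any $3$-partition of $M'$ has a bundle containing at most one item of $\{f_k,s_k\}$, no part of $f_j$, and only a part of $A_i$. The difficulty is that $A_i$ may be highly divisible, and a sub-part of it could push that bundle above $(1+2\delta)x$. If this fails, I would fall back on showing that the swapped-in bundle has value at least $v_i(\{f_i,s_i\})$ plus enough extra to compensate, using the fact that cycle eliminations only move $i$ along edges to bundles she strictly envies (\Cref{thm:ece}). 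I would then re-run the average-value computation with that improved lower bound on $v_i(A_i)$.
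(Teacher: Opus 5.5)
Your two computations are correct, and they take a somewhat different route from the paper's.
\begin{itemize}
\item For $\delta\ge 1/2$ you use the proportional share, which needs $\theta\le 2$.
\item Your pigeonhole argument gives the unconditional bound $\mms_i(3,T)\le(1+2\delta)x$ in one step.
\end{itemize}
The paper instead splits on whether $\mms_i(3,T)>\theta x$ and on whether $\{g_1\}$ is a singleton in every maximin share defining partition. It then bounds $\mms_i(2,T\setminus\{g_1\})\le\min\{2x,(1+2\delta)x\}$ via \Cref{lem:monotonicity}. Your version is shorter. The paper's version stays entirely on the maximin-share side and does not need $\theta\le 2$.

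\textbf{The gap.} As written, the proposal is not a complete proof, because you leave the case $A_i\not\supseteq\{f_i,s_i\}$ open. Neither fallback you sketch would close it if that case were real.
\begin{itemize}
\item Strict envy along a cycle gives no quantitative gain over $(1+\delta)x$.
\item Even a $(1+\eta)$ gain gives too little. The modified graph only grants it under its extra conditions. Re-running the average-value bound with $v_i(A_i)\ge(1+\eta)x$ at $\delta=0$ gives only $3/\bigl(1+(2+\theta)/(1+\eta)\bigr)\approx 0.70<3/4$.
\end{itemize}

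\textbf{The missing observation: this case cannot occur.}
\begin{itemize}
\item Since $n=3$ and $n'=3$, no agent was removed in the reduction, so $N'=N$.
\item Envy-Cycle-Elimination only permutes bundles and adds items to them. So the three final bundles are exactly the three bundles of $\mathcal{S}$, permuted and possibly augmented.
\item $A_j$ and $A_k$ are unaugmented two-item bundles of $\mathcal{S}$. Their second items were drafted strictly after and strictly before $s_i$, respectively. Hence neither of them is $S_i=\{f_i,s_i\}$.
\item The only remaining bundle, $A_i$, must therefore be $S_i$ plus whatever Envy-Cycle-Elimination added. So $A_i\supseteq\{f_i,s_i\}$ and $M'=T\cup(A_i\setminus\{f_i,s_i\})$.
\end{itemize}
The paper's proof uses this fact implicitly when it invokes \Cref{lem:aMMSplusg}. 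With it, your application of \Cref{lem:aMMSplusg} with $\alpha=3/4$ for $\delta\le 1/2$ goes through. Together with your average-value argument for $\delta\ge 1/2$, this completes the proof.
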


\begin{proof}
Once again, we are going to argue about the maximin share directly.  
Without loss of generality, assume that $j\in N_2$ and $k\in N_3$.
This means that $v_i(f_j)\le \theta x$, $v_i(s_j)\le \delta x$ and $\max\{v_i(f_k), v_i(s_k)\} \le x$.

From agent $i$'s perspective, the allocation $(\{f_i, s_i\}, \{f_j, s_j\}, \{f_k, s_k\})$ of these six items could have been the result of choosing according to the picking sequence $jikkij$. (That is, $j$ chooses her favorite item, then $i$ chooses her favorite unallocated item, and so on.) As usual, let $T = \{g_1, \dots, g_6\}$ be a renaming of the goods in $\{f_i, s_i, f_j, s_j, f_k, s_k\}$, indexed according to agent $i$'s preference with $g_1$ being her favorite among them.

\begin{itemize}[leftmargin=20pt,itemsep=5pt]
    \item If $\mms_i(3, T) > \theta x$: In any $3$-maximin share defining partition, $g_1$ is never 
    a singleton. With only four available items among $g_2, \ldots, g_6$ remaining (which agent $i$  values as $ y\le x, z\le x, w \le x, \delta x, t \le \delta x$), we cannot have 
    two bundles with value more than $(1+\delta)x$. Therefore, $\mms_i(3,T) \le (1+\delta)x \le v_i(\{f_i, s_i\}) $, i.e., the approximation ratio here is $1$.

    \item If $\mms_i(3, T) \le \theta x$: 
    If there exists a $3$-maximin share defining partition 
    where $g_1$ is not a singleton, then we can repeat the analysis of the previous case to obtain an approximation ratio of $1$.
    So, we may assume that in any $3$-maximin share defining partition $\{g_1\}$ is one of the bundles. Then, by  \Cref{lem:monotonicity}, we have
    \[
    \mms_i(3, T) \le \mms_i(2, T \setminus \{g_1\}).
    \]
    However, it is easy to see that no matter how we partition the items $g_2, \ldots, g_6$ into two sets, it is not possible that both have value strictly more than $\min\{2x,\, (1+2\delta)x\}$. That is, $\mms_i(2, T \setminus \{g_1\}) \le \min\{2x,\, (1+2\delta)x\}$. We consider two cases depending on the value of $\delta$.
    \begin{itemize}[leftmargin=20pt,topsep=5pt,itemsep=5pt]
        
        \item $\delta < \frac{1}{2}$: Then $\mms_i(3,T) \le (1+2\delta)x$. Therefore,
        \[
        v_i(\{f_i, s_i\})  = (1+\delta)x = \frac{1+\delta}{1+2\delta} \, (1+2\delta)\, x 
        \ge \frac{1+\delta}{1+2\delta} \cdot \mms_i(3, T) 
        \ge \frac{3}{4} \cdot \mms_i(3, T),
        \]
        where the last inequality holds since $f(\delta) = \frac{1+\delta}{1+2\delta}$ is decreasing in $\delta$, so we can substitute the upper bound for $\delta$, namely $1/2$, to obtain a valid lower bound.
        \item $\delta \ge \frac{1}{2}$: Then $\mms_i(3,T) \le 2x$. Therefore,
        \[
        v_i(\{f_i, s_i\})  = (1+\delta)x = \frac{1+\delta}{2} \, 2x 
        \ge \frac{1+\delta}{2} \cdot \mms_i(3, T) 
        \ge \frac{3}{4} \cdot \mms_i(3, T),
        \]
        where the last inequality holds since $f(\delta) = \frac{1+\delta}{2}$ is increasing in $\delta$, so it is minimized by setting $\delta = 1/2$.
    \end{itemize}
\end{itemize}
In all cases $v_i(\{f_i, s_i\}) \ge \frac{3}{4} \cdot \mms_i(3,T)$. By repeatedly applying  \Cref{lem:aMMSplusg} with $\alpha = 3/4$, if necessary, we conclude that $v_i(A_i) \ge \frac{3}{4} \cdot \mms_i(n',M')$.
\end{proof}

Finally, we can complete the proof of  \Cref{thm:three_agents}.

\begin{proof}[Proof of \Cref{thm:three_agents}]
The theorem follows directly by \hyperref[prop:case_L]{Propositions \ref{prop:case_L}}\hyperref[prop:n3_case_VI]{-\ref{prop:n3_case_VI}} which directly correspond to the case analysis summarized in \Cref{tab:cases}, by setting $\theta = \frac{\sqrt{10}+2}{3} \in [1,1.8]$ and $\eta = \frac{\sqrt{57}-7}{4}\in [0,1/6]$. We notice that the minimum approximation factor across all cases is $\frac{\sqrt{10}-1}{3}$, achieved in \Cref{prop:case_L,prop:n3_case_II}.
\end{proof}

\section{Discussion}
In this work we show that $(\phi-1)$-\gmms allocations always exist and can be efficiently computed, for any number of agents with additive valuation functions. For small values of $n$ or for instances where the agents agree about their top-$n$ goods, we obtain significant improvements. This is the first improvement on \gmms allocations since the works of \cite{chaudhury2021little} and \cite{amanatidis2020multiple}, indicating how challenging the problem is. A main reason for this is that the notion of \gmms fairness seems to combine characteristics of both share-based and envy-based notions, without fitting well with either of them. Besides the obvious direction of improving our results (computationally or existentially), showing upper bounds, or extending the study of \gmms fairness beyond the additive domain, there is a less well-defined but very intriguing direction. There is a strong underlying connection between the notions of \gmms and \efx: approximately \efx allocations are approximately \gmms \citep{AmanatidisBM18}, exact \gmms allocations are \efx (albeit for a slightly weaker version of \efx) \citep{BBMN18}, the state-of-the-art approximation algorithms for the two notions are the same and, as it turns out now, with the same exact approximation ratio. Given the significance of \efx-related questions in the recent development of the fair division literature, we believe that the connection between the two notions should be further explored.  

\section*{Acknowledgements}
This work was supported by the project MIS 5154714 of the National Recovery and Resilience Plan “Greece 2.0” funded by the European Union under the NextGenerationEU Program.

\bibliography{phi-gmms-references.bib}

@article{amanatidis2020multiple,
  title={Multiple birds with one stone: Beating 1/2 for {EFX} and {GMMS} via envy cycle elimination},
  author={Amanatidis, Georgios and Markakis, Evangelos and Ntokos, Apostolos},
  journal={Theoretical Computer Science},
  volume={841},
  pages={94--109},
  year={2020},
  publisher={Elsevier}
}

@article{HuangZ25,
  author       = {Xin Huang and
                  Shengwei Zhou},
  title        = {An {FPTAS} for 7/9-Approximation to Maximin Share Allocations},
  journal      = {CoRR},
  volume       = {abs/2511.13056},
  year         = {2025},
  url          = {https://doi.org/10.48550/arXiv.2511.13056},
  xdoi          = {10.48550/ARXIV.2511.13056},
  eprinttype   = {arXiv},
}

@inproceedings{AmanatidisBM18,
	author    = {Georgios Amanatidis and
	Georgios Birmpas and
	Vangelis Markakis},
	xeditor    = {J{\'{e}}r{\^{o}}me Lang},
	title     = {Comparing Approximate Relaxations of Envy-Freeness},
	booktitle = {Proceedings of the Twenty-Seventh International Joint Conference on
	Artificial Intelligence, {IJCAI} 2018},
	pages     = {42--48},
	publisher = {ijcai.org},
	year      = {2018},
	xurl       = {https://doi.org/10.24963/ijcai.2018/6},
}

@InProceedings{MMP25,
author="Mancho, Alviona
and Markakis, Evangelos
and Protopapas, Nicos",
xeditor="Lavi, Ron
and Zhang, Jie",
title="Fairness Under Equal-Sized Bundles: Impossibility Results and Approximation Guarantees",
booktitle="18th International Symposium on Algorithmic Game Theory, {SAGT} 2025",
year="2025",
pages="191--208",
}

@inproceedings{MS23,
author = {Markakis, Evangelos and Santorinaios, Christodoulos},
title = {Improved {EFX} Approximation Guarantees under Ordinal-based Assumptions},
year = {2023},
booktitle = {Proceedings of the 22nd International Conference on Autonomous Agents and Multiagent Systems, {AAMAS} 2023},
pages = {591–599},
}

@article{AR26,
      title={Simultaneous Ordinal Maximin Share and Envy-Based Guarantees}, 
      author={Hannaneh Akrami and Timo Reichert},
      year={2026},
      journal      = {CoRR},
  volume       = {abs/2511.13056},
      url={https://arxiv.org/abs/2602.15566},
        eprinttype   = {arXiv},
}

@inproceedings{ChristoforidisS24,
  author       = {Vasilis Christoforidis and
                  Christodoulos Santorinaios},
  title        = {On the Pursuit of {EFX} for Chores: Non-existence and Approximations},
  booktitle    = {Proceedings of the Thirty-Thirds International Joint Conference on
                  Artificial Intelligence, {IJCAI} 2024},
  pages        = {2713--2721},
  publisher    = {ijcai.org},
  year         = {2024},
  xurl          = {https://www.ijcai.org/proceedings/2024/300},
}

@inproceedings{Farhadietal21,
  author       = {Alireza Farhadi and
                  Mohammad Taghi Hajiaghayi and
                  Mohamad Latifian and
                  Masoud Seddighin and
                  Hadi Yami},
  title        = {Almost Envy-freeness, Envy-rank, and {N}ash Social Welfare Matchings},
  booktitle    = {35th {AAAI} Conference on Artificial Intelligence, {AAAI}
                  2021},
  pages        = {5355--5362},
  publisher    = {{AAAI} Press},
  year         = {2021},
  xurl          = {https://doi.org/10.1609/aaai.v35i6.16675},
}

@phdthesis{Kurokawa17,
	title    = {Fair Division in Game Theoretic Settings},
	school   = {Carnegie Mellon University},
	author   = {David Kurokawa},
	year     = {2017}
}

@inproceedings{LMMS04,
	author    = {Richard J. Lipton and
	Evangelos Markakis and
	Elchanan Mossel and
	Amin Saberi},
	IGNOREeditor    = {Jack S. Breese and
	Joan Feigenbaum and
	Margo I. Seltzer},
	title     = {On approximately fair allocations of indivisible goods},
	booktitle = {Proceedings 5th {ACM} Conference on Electronic Commerce (EC-2004)},
	pages     = {125--131},
	publisher = {{ACM}},
	year      = {2004},
	xurl       = {https://doi.org/10.1145/988772.988792},
}

@article{chaudhury2021little,
  title={A little charity guarantees almost envy-freeness},
  author={Chaudhury, Bhaskar Ray and Kavitha, Telikepalli and Mehlhorn, Kurt and Sgouritsa, Alkmini},
  journal={SIAM Journal on Computing},
  volume={50},
  number={4},
  pages={1336--1358},
  year={2021},
  publisher={SIAM}
}

@article{CaragiannisKMPS19,
	author    = {Ioannis Caragiannis and
	David Kurokawa and
	Herv{\'{e}} Moulin and
	Ariel D. Procaccia and
	Nisarg Shah and
	Junxing Wang},
	title     = {The Unreasonable Fairness of Maximum {N}ash Welfare},
	journal   = {{ACM} Trans. Economics and Comput.},
	volume    = {7},
	number    = {3},
	pages     = {12:1--12:32},
	year      = {2019},
}

@article{Budish11,
	Author = {Eric Budish},
	Journal = {Journal of Political Economy},
	Number = {6},
	Pages = {1061-1103},
	Title = {The Combinatorial Assignment Problem: Approximate Competitive Equilibrium from Equal Incomes},
	Volume = {119},
	Year = {2011}}

@inproceedings{BBMN18,
	author    = {Siddharth Barman and
	Arpita Biswas and
	Sanath Kumar Krishna Murthy and
	Yadati Narahari},
	IGNOREeditor    = {Sheila A. McIlraith and
	Kilian Q. Weinberger},
	title     = {Groupwise Maximin Fair Allocation of Indivisible Goods},
	booktitle = {Proceedings of the Thirty-Second {AAAI} Conference on Artificial Intelligence,
	(AAAI-18)},
	pages     = {917--924},
	publisher = {{AAAI} Press},
	year      = {2018},
	xurl       = {https://www.aaai.org/ocs/index.php/AAAI/AAAI18/paper/view/16856},
}

@article{BL16,
	author    = {Sylvain Bouveret and
	Michel Lema{\^{\i}}tre},
	title     = {Characterizing conflicts in fair division of indivisible goods using
	a scale of criteria},
	journal   = {Autonomous Agents and Multi-Agent Systems},
	volume    = {30},
	number    = {2},
	pages     = {259--290},
	year      = {2016},
	xurl       = {https://doi.org/10.1007/s10458-015-9287-3},
}

@book{GS58,
	Author = {George Gamow and Marvin Stern},
	Publisher = {Viking press},
	Title = {Puzzle-Math},
	Year = {1958}}

@article{Foley67,
	Author = {Duncan K. Foley},
	Journal = {Yale Economics Essays},
	Pages = {45-98},
	Title = {Resource Allocation and the Public Sector},
	Volume = {7},
	Year = {1967}}

@article{Varian74,
	Author = {Hal R. Varian},
	Journal = {Journal of Economic Theory},
	Pages = {63-91},
	Title = {Equity, Envy and Efficiency},
	Volume = {9},
	Year = {1974}}

@article{kurokawa2018fair,
  title={Fair enough: Guaranteeing approximate maximin shares},
  author={Kurokawa, David and Procaccia, Ariel D and Wang, Junxing},
  journal={Journal of the ACM (JACM)},
  volume={65},
  number={2},
  pages={1--27},
  year={2018},
  publisher={ACM New York, NY, USA}
}

@inproceedings{HeidariKSS26,
  author       = {Ehsan Heidari and
                  Alireza Kaviani and
                  Masoud Seddighin and
                  AmirMohammad Shahrezaei},
  xeditor       = {Kasper Green Larsen and
                  Barna Saha},
  title        = {Improved Maximin Share Guarantee for Additive Valuations},
  booktitle    = {Proceedings of the 2026 Annual {ACM-SIAM} Symposium on Discrete Algorithms,
                  {SODA} 2026},
  pages        = {2239--2290},
  publisher    = {{SIAM}},
  year         = {2026},
  xurl          = {https://doi.org/10.1137/1.9781611978971.81},
}

@inproceedings{feige2021tight,
  title={A tight negative example for {MMS} fair allocations},
  author={Feige, Uriel and Sapir, Ariel and Tauber, Laliv},
  booktitle={International Conference on Web and Internet Economics},
  pages={355--372},
  year={2021},
  organization={Springer}
}

@inproceedings{PR18,
	author    = {Benjamin Plaut and
	Tim Roughgarden},
	IGNOREeditor    = {Artur Czumaj},
	title     = {Almost Envy-Freeness with General Valuations},
	booktitle = {Proceedings of the Twenty-Ninth Annual {ACM-SIAM} Symposium on Discrete
	Algorithms, {SODA} 2018},
	pages     = {2584--2603},
	publisher = {{SIAM}},
	year      = {2018},
	xurl       = {https://doi.org/10.1137/1.9781611975031.165},
}

@article{AmanatidisABFLMVW23,
  author       = {Georgios Amanatidis and
                  Haris Aziz and
                  Georgios Birmpas and
                  Aris Filos{-}Ratsikas and
                  Bo Li and
                  Herv{\'{e}} Moulin and
                  Alexandros A. Voudouris and
                  Xiaowei Wu},
  title        = {Fair division of indivisible goods: Recent progress and open questions},
  journal      = {Artif. Intell.},
  volume       = {322},
  pages        = {103965},
  year         = {2023},
  xurl          = {https://doi.org/10.1016/j.artint.2023.103965},
}

@article{Steinhaus49,
	Author = {Hugo Steinhaus},
	Journal = {Econometrica},
	Pages = {315-319},
	Title = {Sur la division pragmatique},
	Volume = {17 (Supplement)},
	Year = {1949}
}

\newpage
\crefalias{section}{appendix}
\appendix
\renewcommand{\theHsection}{appendix.\Alph{section}}

\section{The Envy Cycle Elimination Algorithm}\label{app:ece}

Here we state a version of the celebrated the \emph{envy cycle elimination algorithm} of \citet{LMMS04} (\Cref{alg:ece}) that runs on top of any partial allocation, as  needed for \Cref{alg:main}.
This algorithm extends an allocation one good at a time. In each step, an agent that no one envies receives the next available good. To ensure that such an agent always exists, the algorithm identifies  cycles in the envy graph and eliminates them by reallocating some of the current bundles along these cycles.
In \Cref{thm:xece} below, which subsumes \Cref{thm:ece} from \Cref{sec:prelims}, we also summarize the main known properties of \Cref{alg:ece}.

\begin{algorithm}[ht]
	\DontPrintSemicolon 
   {\normalsize 
		\For{every $g \in M'$ in lexicographic order}{
			\While{there is no node of in-degree 0 in $G_{\mathcal{A}}$ }{
				Find a cycle $j_1 \to j_2 \to \ldots \to j_r \to j_1$ in $G_{\mathcal{A}}$ \;
				$B = A_{j_1}$\;
				\For{$k=1$ to $r-1$}{
					$A_{j_k}=A_{j_{k+1}}$ \tcc*{{\small shift the bundles along the cycle}}
				}
				$A_{j_r}=B$\;
			}
			Let $i\in N$ be the lexicographically first node of in-degree $0$ \label{line:source} \; $A_{i} = A_{i}\cup \{g\}$\;
		}
		\Return $\mathcal{A}$ \; 
	}
	\caption{Envy-Cycle-Elimination$(N, \mathcal{A}, M')$\newline {\small $N$: set of agents, $\mathcal{A}$: initial partial allocation, $M'$: set of unallocated goods}} \label{alg:ece}
\end{algorithm}

\begin{theorem}[Follows by \cite{LMMS04}]\label{thm:xece}
	Let $\mathcal{A}$ be any \efo partial allocation and $M'= M \setminus\cup_{i=1}^{n}A_i$. Then, 
	\begin{enumerate}[leftmargin=*,itemsep=3pt,topsep=2pt,parsep=0pt,partopsep=0pt,label=\rm{\alph*})]
		\item at the end of each iteration of the \emph{for} loop, the resulting partial allocation  is \efo. Hence, the algorithm terminates with an \efo allocation in polynomial time. This holds even for agents with general monotone valuation functions.
		\item Fix an agent $i$, and let $A_i$ be the bundle assigned to $i$ at the beginning of some iteration of the outer \emph{for} loop. If $A_i'$ is  assigned to $i$ at the end of any future iteration, then $v_i(A_i') \geq v_i(A_i)$.\label{fact:value}
	\end{enumerate}
\end{theorem}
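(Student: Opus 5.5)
The statement to prove is \Cref{thm:xece}, which summarizes the classical properties of Envy-Cycle-Elimination. I will prove part (b) first, since part (a) uses it only implicitly, and then the \efo invariant of part (a) by induction over the iterations of the outer \emph{for} loop. Part (b) needs one observation: inside a single iteration, agent $i$'s bundle changes in only two ways. It can be shifted along an envy cycle, or it can receive the good $g$ if $i$ is the chosen source.

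\textbf{Part (b).} In a cycle $j_1\to\cdots\to j_r\to j_1$, every agent $j_k$ receives the bundle of the agent she points to, and the edge means she envies that agent. Hence her value strictly increases. Adding a good to a bundle never decreases value, by monotonicity. Agents outside the cycle, and agents other than the source, keep their bundles unchanged. So $v_i$ of $i$'s bundle is weakly increasing across each elementary operation, and chaining these over iterations gives the claim. This argument uses only monotonicity, so it holds for general monotone valuations.

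\textbf{Part (a).} The invariant is that the partial allocation is \efo.
\begin{itemize}
\item \emph{Cycle shifts preserve \efo.} The multiset of bundles is unchanged, and each agent's own value weakly increases. So for any pair $k,\ell$, the new bundle $A_\ell$ was some bundle in the old allocation. If it was $A_k$'s predecessor, then $k$ now values her own bundle at least as much as before. The old \efo condition of $k$ towards that bundle (or non-envy, if it was her own) therefore still yields a good whose removal kills the envy.
\item \emph{Adding $g$ to a source $s$ preserves \efo.} No agent envies $s$ before the addition, so every $k$ satisfies $v_k(A_k)\ge v_k(A_s)=v_k((A_s\cup\{g\})\setminus\{g\})$. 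Bundles other than $A_s$ are unchanged.
\end{itemize}

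\textbf{Termination.} A source always exists after finitely many shifts. Each shift strictly increases some agent's value and leaves all others' values weakly unchanged, so the vector of values strictly increases; there are finitely many bundle assignments, so the \emph{while} loop halts. For polynomial time I would use the standard potential argument: each shift strictly reduces the number of edges in the envy graph, since edges into the cycle members' new bundles are inherited and members lose their cycle edges. This gives at most $n^2$ shifts per good, each found in polynomial time.

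\textbf{Main obstacle.} The only delicate point is the edge-count potential in the general monotone setting. Edges from outside agents are merely permuted. Each cycle member $j_k$ now holds a bundle she previously envied, and her new out-edges go to bundles she strictly prefers to it. Those bundles are a subset of the bundles she previously envied, so her out-degree does not increase, and it drops by at least one because the edge to $j_{k+1}$'s bundle is removed. Checking this carefully is where I would focus.
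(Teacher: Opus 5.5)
Your proof is correct. It is essentially the standard \cite{LMMS04} argument that the paper invokes without reproducing; the paper's only justification is the informal sketch in \Cref{sec:prelims}, and your argument covers the same points: strict improvement along cycle shifts, preservation of \efo under bundle permutations and under adding a good to a source, and the strictly decreasing edge count for the polynomial bound. The one blemish is the garbled phrase ``if it was $A_k$'s predecessor'' in the cycle-shift step; it should say that the new $A_\ell$ equals some old bundle $A_{\ell'}$, so the old \efo condition for the pair $(k,\ell')$ (or non-envy when $\ell'=k$) carries over because $k$'s value for her own bundle has not decreased.
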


The first property of \Cref{thm:xece} simply says that the \efo property is maintained during the course of the algorithm, given an initial \efo allocation. The second property states that agents never see the value of their bundle decrease throughout the execution of the algorithm. 
\end{document}